\newtheorem{theorem}{Theorem}[section]
\newtheorem{proposition}[theorem]{Proposition}
\newtheorem{lemma}[theorem]{Lemma}
\theoremstyle{definition}
\newtheorem{definition}[theorem]{Definition}
\newtheorem{example}[theorem]{Example}
\newtheorem{remark}[theorem]{Remark}
\newcommand{\C}{{\mathbb{C}}}
\newcommand{\F}{{\mathbb{F}}}
\newcommand{\Sb}{{\mathbb{S}}}
\newcommand{\vtu}{{\tilde{\bf u}}}
\newcommand{\CHI}{\hbox{\raise .4ex \hbox{$\chi$}}}
\newcommand{\R}{{\mathbb{R}}}
\newcommand{\ls}{$\ell_1$-synthesis}
\newcommand{\la}{$\ell_1$-analysis}
\newcommand{\PDe}{$(\text{P}_{\vD,\epsilon})$}
\newcommand{\pdz}{$(\text{P}_{\vD,0})$}
\def\va{{\bf a}} \def\vb{{\bf b}}  \def\vd{{\bf d}}
\def\ve{{\bf e}} \def\vf{{\bf f}} \def\vg{{\bf g}} \def\vh{{\bf h}}
\def\vu{{\bf u}} \def\vv{{\bf v}} \def\vw{{\bf w}} \def\vx{{\bf x}}
\def\vy{{\bf y}} \def\vz{{\bf z}}
\def\vA{{\bf A}} \def\vB{{\bf B}}  \def\vD{{\bf D}}
 \def\vF{{\bf F}} \def\vG{{\bf G}} 
\def\vI{{\bf I}}   
\def\vM{{\bf M}} \def\vN{{\bf N}}
\newcommand{\supp}{\mathrm{supp}}
\newcommand{\spark}{\mathrm{spark}}
\newcommand{\nsp}{$\text{NSP}$}
\newcommand{\snsp}{{\text{SNSP}}}
\newcommand{\sgn}{\text{sgn}}
\newcommand{\nspD}{$\vD$-NSP}
\newcommand{\snspD}{$\vD$-SNSP}
\begin{document}

\title[$\ell_1$-synthesis method in compressed sensing]{A null space analysis of the $\ell_1$-synthesis method in dictionary-based compressed sensing}

\author{Xuemei Chen}
\address{Department of Mathematics,
University of Maryland, College Park, MD 20742}
\email{xuemeic@math.umd.edu}

\author{Haichao Wang}
\address{Department of Mathematics,
University of California, Davis, CA 95616}
\email{hchwang@ucdavis.edu}

\author{Rongrong Wang}
\address{Department of Mathematics,
University of Maryland, College Park, MD 20742}
\email{rongwang@math.umd.edu}


\date{\today}

\keywords{Compressed sensing, sparse representation, stability, frame, redundant dictionary}

\begin{abstract}
An interesting topic in compressed sensing aims to recover signals with sparse representations in a dictionary. Recently the performance of the $\ell_1$-analysis method has been a focus, while some fundamental problems for the $\ell_1$-synthesis method are still unsolved. 
For example, what are the conditions for it to stably recover compressible signals under noise? Whether  coherent dictionaries allow the existence of sensing matrices that guarantee  good performances of the \ls\ method?
To answer these questions, we build up a framework for the $\ell_1$-synthesis method. In particular, we propose a dictionary-based null space property (\nspD) which, to the best of our knowledge, is the first sufficient and necessary condition for the success of $\ell_1$-synthesis without measurement noise. 
 With this new property, we show that when the dictionary $\vD$ is full spark,  it cannot be too coherent otherwise the \ls\ method  fails for all sensing matrices. We also prove that in the real case, \nspD\ is equivalent to the stability of \ls\ under noise.

\end{abstract}
\maketitle

\section{Introduction}
Compressed sensing addresses the problem of recovering a sparse signal $\vz_0\in \F^d$ ($\F=\C$ or $\R$) from its undersampled and corrupted linear measurements $\vy=\vA \vz_0+\vw\in\F^m$, where $\vw$ is the noise vector such that $\|\vw\|_2\leq\epsilon$. The number of measurements $m$ is usually much less than the ambient dimension $d$, which makes the problem ill-posed in general. A vector is said to be \emph{$s$-sparse} if it has at most $s$ nonzero entries. The sparsity of $\vz_0$ makes the reconstruction possible.  The following optimization algorithm, also known as the Basis Pursuit, can reconstruct $\vz_0$ efficiently from the perturbed observation $\vy$ \cite{Stab,LqSim}:
\begin{equation}\label{equ_Pe}
\hat{\vz}=\arg\min_{\vz\in \F^d} \|\vz\|_1, \quad\text{ s.t. } \|\vy-\vA \vz\|_2\leq \epsilon .
\end{equation}

A primary task of compressed sensing is to choose appropriate sensing matrix $\vA$ in order to achieve good performance of \eqref{equ_Pe}. Candes and Tao proposed the \emph{restricted isometry property (RIP)}, and show that it provides stable reconstruction of approximately sparse signals via \eqref{equ_Pe} \cite{Near}. Moreover, many random matrices satisfy RIP with high probability \cite{Decode, RV08}.

 Another well-known condition on the measurement matrix is the null space property. A matrix $\vA$ is said to have the \emph{Null Space Property of order $s$} ($s$-\nsp) if
\begin{equation}\label{eq:NSP}
\forall \vv\in \ker \vA \backslash \{0\}, \ \forall |T|\leq s,\ \ \ \|\vv_T\|_1<\|\vv_{T^c}\|_1,
\end{equation}
where $|T|$ is the cardinality for the index set $T\in\{1,2,\dots,d\}$, $T^c$ is its complementary index set and $\vv_T$ is the restriction of $\vv$ on $T$. NSP is known to characterize the exact reconstruction of all $s$-sparse vectors via \eqref{equ_Pe} when there is no noise ($\epsilon=0$) \cite{Dono,SparseDecom}. It has also been proven that the NSP matrices admit a similar stability result as RIP except that the constants may be larger~{\cite{ACP11, Sun11}.

In all of the above discussions, it is assumed that the signal $\vz_0$ is sparse with respect to an orthonormal basis.  A recent direction of interest in compressed sensing concerns problems where signals
are sparse in an overcomplete dictionary  $\vD$ instead of a basis,
see \cite{DMP1, Dic_Candes, LML12, ACP11, DNW}.
Here $\vD$ is a $d \times n$ matrix with full column rank.  We also call $\vD$ a frame in the sense that the columns of $\vD$ form a finite frame. \emph{A finite frame for $\F^d$} is a finite collection of vectors that span $\F^d$. We refer interested readers to \cite{frame} for a background on frame theory.

In this setting, the signal $\vz_0 \in \F^d$ can be represented as $\vz_0=\vD\vx_0,$ where
$\vx_0$ is an $s$-sparse vector in $\F^n$.
We refer to such signals as \emph{dictionary-sparse signals} or \emph{frame-sparse signals}.  When the dictionary $\vD$ is specified, we also call them as \emph{$\vD$-sparse signals}.
We refer the problem of recovering such $\vz_0$ from the linear measurement $\vy=\vA\vz_0$ as \emph{dictionary-based compressed sensing}, and the ordinary compressed sensing problem as \emph{basis-based compressed sensing}.

A natual way to obtain a good approximation $\hat\vz$ of $\vz_0$  is to use the following approach

\begin{minipage}[t]{0.1\textwidth}
 $$(\text{P}_{\vD})$$
 \end{minipage}
\begin{minipage}[t]{0.87\textwidth}
 \begin{align} \label{equ_PD}
& \hat\vx=\arg\min \|\vx\|_1 \quad\text{ s.t. } \vA\vD\vx =\vy ,\\\label{equ_PD2}
&\hat\vz=\vD\hat\vx.
\end{align}
\end{minipage}
The above method is called the $\ell_1$-synthesis or synthesis based method~\cite{LML12, DMP1} due to the second synthesizing step. In the case when the measurements are perturbed, we naturally solve the following:

\begin{minipage}[t]{0.1\textwidth}
 $$(\text{P}_{\vD,\epsilon})$$
 \end{minipage}
\begin{minipage}[t]{0.87\textwidth}
 \begin{align*} 
& \hat\vx=\arg\min \|\vx\|_1 \quad\text{ s.t. } \|\vA\vD\vx -\vy\|_2\leq\epsilon ,\\\label{equ_PD2}
&\hat\vz=\vD\hat\vx.
\end{align*}
\end{minipage}

The frame-based compressed sensing is motivated by the widespread use of overcomplete dictionaries and frames in signal processing and data analysis. Many signals naturally possess sparse frame coefficients, such as radar images (Gabor frames ~\cite{PRT12,HS09,SW13}), cartoon like images (curvelets \cite{k13}), images with directional features (shearlets \cite{cd12}), and etc. Other useful frames include wavelet frames \cite{ds10} and harmonic frames. If the underlying frame is unknown but training data is available, the frame may also be constructed or approximated by learning. The greater flexibility and stability of frames make them preferable for practical purposes to achieve greater accuracy under imperfect measurements.

Despite the countless application of  frame-sparse signals, the compressed sensing literature is still lacking on this subject, especially on the issue whether the frame $\vD$ can be allowed to be highly coherent or not. 
Coherence is a quantity that measures the correlation between frame vectors. When all the columns $\{\vd_j\}$ of $\vD$ are normalized, its \emph{coherence} is defined as $$\mu(\vD)=\max_{i\neq j}|\langle \vd_i, \vd_j\rangle|.$$ A highly coherent $\vD$ is a frame with big coherence.

The work in \cite{DMP1} establishes conditions on $\vA$ and $\vD$ to make the compound $\vA\vD$ satisfy RIP. However, as has been pointed out in \cite{Dic_Candes, LML12}, forcing $\vA\vD$ to satisfy RIP or even the weaker property NSP implies the exact recovery of both $\vz_0$ and $\vx_0$, which is unnecessary if we only care about obtaining a good estimate of $\vz_0$. In particular, they argue that if $\vD$ is perfectly correlated (has two identical columns), then there are infinitely many minimizers of \eqref{equ_PD}, but all of them lead to the true signal $\vz_0$ after \eqref{equ_PD2} .

The work in \cite{Dic_Candes} proposes the \la\ method:
\begin{equation}\label{equ_ana}
\hat{\vz}=\arg\min_{\vz\in R^d} \|\vD^*\vz\|_1, \quad\text{ s.t. } \|\vy-\vA \vz\|_2\leq \epsilon .
\end{equation}

It is proved that if $\vA$ satisfies a dictionary related RIP condition (DRIP), then the reconstruction is stable under the assumption that $\vD^* \vz_0$ is sparse. In the case that $\vD$ is a Parseval frame, $\vD^*\vz$ is the frame coefficients of $\vz$ in the canonical dual. In the finite frame setting, the columns of $\vD$ form a \emph{Parseval frame} if $\vD\vD^*=\vI$, and $\vF$ is a \emph{dual frame} of $\vD$ if $\vF\vD^*=\vD\vF^*=\vI$.

The work \cite{Dic_Candes} is the first result of compressed sensing that does not require a dictionary to be highly incoherent. But it requires the sparsity of $\vD^*\vz_0$, which does not seem to fit into the original setting very well. The work in \cite{LML12} proposes an optimal dual based $\ell_1$-analysis approach along with an efficient algorithm, but the stability result does not hold universally for all frame-sparse signals.

All of the work mentioned above take the Basis Pursuit approach. The work in \cite{DNW} takes  a greedy algorithm approach to solve the frame-based compressed sensing problem. They show that  this greedy algorithm will recover the signal accurately under the DRIP condition.

This paper aims to build up a framework for the frame-based compressed sensing, which the literature is lacking. We focus on the \ls\ method for various reasons. We introduce  new conditions based on the null space of $\vA$ and $\vD$, which will guarantee the stable recovery via the \ls\ method.
Through this condition, we partially address the question whether a frame-sparse signal can be accurately recovered via the $\ell_1$-synthesis approach with a highly coherent frame.

\subsection{Overview and main results}
The first main result of this paper is to provide a framework of the frame-based compressed sensing, where we try to answer some basic questions of this subject.
Section~\ref{sec_basic} studies the question of what is the minimum condition on a sensing matrix such that a decoder exists.
Section~\ref{sec_why} explains why the \ls\ method is a reasonable algorithm.
Section~\ref{sec:nsp} explores what is the condition such that this algorithm will recover frame-sparse signals from noiseless measurements, and what kind of matrices $\vA$ will satisfy this condition (\nspD).
Section \ref{sec_sta} further shows how this condition performs under noisy measurements.
Section \ref{sec_gap} studies how  this condition compares to previously known conditions such as $\vA\vD$ having NSP.
To the best of our knowledge, these results are the first characterization of frame-based compressed sensing via the $\ell_1$-synthesis approach.


The second main result is related to the question  whether the \ls\  approach allows a highly coherent frame. We prove that under the assumption that $\vD$ is full spark, $\vA$ having \nspD\ is equivalent to $\vA\vD$ having NSP, which requires $\vD$ itself to have NSP. 
As a consequence, since full spark frames is a set of probability measure 1, we generally have to use an incoherent frame if we expect the \ls\ method to perform well. Section \ref{sec_coh} provides an error bound on the reconstructed signals when an inadmissible frame (see Definition \ref{def_adm}) is used. Some simulations are done in Section~\ref{sec_sim} to illustrate related theorems.

\subsection{Notations and setup}
We use boldface lowercase letters to denote vectors and boldface uppercase letters to denote matrices. The signals that we are interested in recovering  live in $\mathbb{F}^d$ with $\mathbb F$ being $\C$ or $\R$. All the results are for both complex and real cases except Theorem \ref{thm_equiv}.

The frames that we consider have $n$ frame vectors. By slight abuse of notations, given a frame $\vD$, we also use $\vD$ as its synthesis operator, which is a $d\times n$ matrix. So the set of frames of this size can be denoted as
$$\F^{d\times n}=\{\F\text{ valued matrices of size }d\times n\}.$$
Given a frame $\vD\in\F^{d\times n}$,   $\vD^{-1}(E)$  denotes the preimage of the set $E$ under the operator $\vD$, $\vD^*$ is the conjugate transpose of $\vD$, $\vD_T$ is the submatrix of $\vD$ formed by taking columns corresponding to the index set $T$, and $\nu_{\vD}$ is the smallest positive singular value of $\vD$.

We define
$$\vD\Sigma_s=\{\vz\in\F^d: \exists\ \vx, \text{ such that } \vz=\vD\vx, \|\vx\|_0\leq s\}.$$ In this paper, we are interested in recovering the signal $\vz_0$ that is  $\vD$-sparse (in the set of $\vD\Sigma_s$) or $\vD$-compressible(has small distance from the set $\vD\Sigma_s$). The term $\sigma_s(\vw)=\min_{\vv\in\Sigma_s}\|\vw-\vv\|_1$ denotes the $\ell_1$ residue of the best $s$-term approximation to $\vw$.

For $p>0$, we use $\|\vx\|_p$ to denote the regular $\ell_p$ norm of a vector $\vx$ (it is quasi-norm in the case that $p<1$). When $p=\infty$, it is the largest component in magnitude. When $p=0$, it is defined as the cardinality of the nonzero components of $\vx$. We also use $\|\vx\|_{\min}$ to represent the smallest nonzero component of $\vx$ in magnitude.


\section{Starting from the basic: $\ell_0$-minimization}\label{sec_basic}


This section deals with the noiseless case, and provides characterizations on when the measurement vector $\vy$ determines a unique signal $\vz_0$. To extract the information that $\vy$ holds about $\vz_0\in \vD\Sigma_s$, we use a decoder $\Delta$ which is a mapping from $\F^m$ to $\F^d$. Thus, $\Delta(\vy)=\Delta(\vA\vz_0)$ is the reconstructed signal.
Since we are only able to see the output $\vy = \vA\vz$, to have any hope of recovering the true signal at all, we require $\vA$ to be injective on the set $\vD\Sigma_{s}$. One can show that this injectivity condition is equivalent to
\begin{equation}\label{equ_dnsp0}
\ker \vA\cap \vD\Sigma_{2s}=\{0\}.
\end{equation}

Once this necessary condition is satisfied,  the following minimization scheme

\begin{minipage}[t]{0.1\textwidth}
 $$(\text{P}_{\vD,0})$$
 \end{minipage}
\begin{minipage}[t]{0.87\textwidth}
 \begin{align*} 
& \hat\vx=\arg\min \|\vx\|_0 \quad\text{ s.t. } \vA\vD\vx =\vy ,\\
&\hat\vz=\vD\hat\vx.
\end{align*}
\end{minipage}
is a perfect decoder in the sense that it has the unique solution $\hat\vz=\vz_0$. Indeed, suppose $\hat \vx$ is any minimizer, therefore $\vA\vD\hat \vx=\vA\vD\vx_0$. By injectivity of $\vA$ on $\vD\Sigma_s$, we get $\hat \vz=\vD\hat \vx=\vD\vx_0=\vz_0$.

 We summarize the above as the following proposition, which can be viewed as a generalization of Lemma 3.1 in \cite{kterm}.

\begin{proposition}\label{pro_basic}
The following conditions are equivalent:
\begin{enumerate}
\item[(a)] For any $\vz_0\in \vD\Sigma_s$, there exists a decoder $\Delta$ such that $\Delta(\vA\vz_0)=\vz_0$.
\item [(b)]$\ker \vA\cap \vD\Sigma_{2s}=\{0\}$.
\item [(c)] The problem \pdz\ has a unique solution $\hat\vz$ and $\hat\vz=\vz_0$.
\item[(d)] $\rm{rank}\ \vD_T=\rm{rank}\ \vA\vD_T$, for any index $|T|\leq 2s$.
\end{enumerate}
\end{proposition}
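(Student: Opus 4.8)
The plan is to prove the cycle $(a)\Rightarrow(b)\Rightarrow(c)\Rightarrow(a)$ together with the separate equivalence $(b)\Leftrightarrow(d)$; since the text already records the injectivity reformulation and the implication $(b)\Rightarrow(c)$, the remaining links can be filled in directly. I would begin with $(a)\Leftrightarrow(b)$ by recognizing that (a) is really the statement that $\vA$ is one-to-one on the signal set $\vD\Sigma_s$: a decoder recovering every $\vz_0\in\vD\Sigma_s$ from $\vA\vz_0$ exists exactly when no two distinct signals of $\vD\Sigma_s$ produce the same measurement (one direction defines $\Delta$ by inverting $\vA$ on its image, the other exhibits a pair the decoder cannot separate). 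The bridge to (b) is the set identity $\vD\Sigma_{2s}=\vD\Sigma_s-\vD\Sigma_s$: a difference $\vD\vx_1-\vD\vx_2$ with $\|\vx_1\|_0,\|\vx_2\|_0\le s$ equals $\vD(\vx_1-\vx_2)$ with $\|\vx_1-\vx_2\|_0\le 2s$, and conversely any $\vD\vx$ with $\|\vx\|_0\le 2s$ splits by partitioning the support of $\vx$ into two blocks of size at most $s$. Substituting this identity into the definition of injectivity converts ``$\vA\vz_1=\vA\vz_2\Rightarrow\vz_1=\vz_2$ on $\vD\Sigma_s$'' into ``$\ker\vA\cap\vD\Sigma_{2s}=\{0\}$'', which is precisely (b).

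To close the cycle, $(b)\Rightarrow(c)$ is exactly the argument already given after the statement of \pdz, and $(c)\Rightarrow(a)$ is immediate: if \pdz\ always returns the unique solution with $\hat\vz=\vz_0$, then the map $\Delta$ sending a measurement to the synthesis $\vD\hat\vx$ of any $\ell_0$-minimizer is a well-defined decoder satisfying (a).

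Finally I would establish $(b)\Leftrightarrow(d)$ by freezing a support set $T$ with $|T|\le 2s$. Condition (b) asserts that for every $\vx$ with $\|\vx\|_0\le 2s$ one has $\vA\vD\vx=0\Rightarrow\vD\vx=0$; grouping such $\vx$ by their support and using $\vA\vD\vx=\vA\vD_T\vx_T$ and $\vD\vx=\vD_T\vx_T$ whenever $\supp\vx\subseteq T$, this is equivalent to requiring, for each $T$, that $\vA\vD_T\vx_T=0$ force $\vD_T\vx_T=0$, i.e. $\ker(\vA\vD_T)\subseteq\ker(\vD_T)$. The reverse inclusion $\ker(\vD_T)\subseteq\ker(\vA\vD_T)$ holds trivially, so (b) amounts to $\ker(\vA\vD_T)=\ker(\vD_T)$ for all $|T|\le 2s$. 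Since $\vD_T$ and $\vA\vD_T$ share the common domain $\F^{|T|}$, the rank--nullity theorem turns this equality of nested kernels into the equality of ranks in (d).

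The bulk of the argument is routine; the step I expect to demand the most care is the support bookkeeping in $(a)\Leftrightarrow(b)$ and $(b)\Leftrightarrow(d)$ — in particular, matching the quantifier over vectors with $\|\vx\|_0\le 2s$ to the quantifier over index sets with $|T|\le 2s$, and verifying that restricting to a single such $T$ legitimately reduces the kernel comparison. Everything else rests on the elementary fact that two maps with a common domain, one kernel contained in the other, have equal ranks if and only if their kernels coincide.
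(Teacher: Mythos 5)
Your proposal is correct and follows essentially the same route as the paper: the equivalence of (a), (b), (c) via injectivity of $\vA$ on $\vD\Sigma_s$ and the identity $\vD\Sigma_{2s}=\vD\Sigma_s-\vD\Sigma_s$, and the equivalence with (d) via the kernel inclusion $\ker(\vA\vD_T)\subseteq\ker(\vD_T)$ combined with rank--nullity. You spell out a few steps the paper leaves implicit (notably the rank--nullity bookkeeping and the explicit support-splitting), but there is no substantive difference in approach.
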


\proof The equivalence of (a)(b)(c) has been discussed as above. We only need to prove the equivalence of (d) with others.

(b)$\Rightarrow$(d) It suffices to show that $\ker \vA\vD_T\subset \ker \vD_T$. Suppose $\vA\vD_T\vx'=0$, if we let $\vx\in\C^n$ be the $2s$-sparse vector that equals to $\vx'$ on $T$ and vanishes on $T^c$, then $\vA\vD\vx=0\Rightarrow \vD\vx=0$ by (b). So we get $\vD_T\vx'=\vD\vx=0$.

 \vspace{0.1in}
(d)$\Rightarrow$(b) Assume $\vz\in \ker \vA\cap \vD\Sigma_{2s}$, so we can write $\vz$ as $\vz=\vD\vx_1-\vD\vx_2$
where $\vx_1,\vx_2$ are both $s$-sparse. Let $T$ be the support of $\vx_1-\vx_2$ and  $|T|\leq 2s$. Now if we let $\vx'$ be a vector of length $|T|$ that is just the truncate of $\vx_1-\vx_2$ on $T$, then we have $\vA\vD_T\vx'=0$. The assumption of (d) tells us $\ker \vA\vD_T\subset\ker \vD_T$, which means $\vD_T\vx'=0$. This is equivalent to $\vz=0$.
\qed

A natural question to ask is what is the minimum number of measurements needed so that the equivalent conditions in Proposition \ref{pro_basic} hold. One can imagine that this depends on the specific frame. In the case  when a frame is reduced to an orthonormal basis, we know the absolute minimum number is $2s$~\cite{Simon}.

A surprising result is that the number of rows required of $\vA$ is also $2s$ if the spark of the frame is at least $2s+1$. The \emph{spark} of a frame $\vD$ is  the smallest cardinality of the set of linearly dependent columns from $\vD$. We refer interested readers to \cite[Theorem IV.2.5]{chen}  for its proof. We choose not to elaborate the details here because as far as applications goes, the theoretical lower bound $2s$ cannot be achieved. For one reason, the decoder \pdz\ is not computationally feasible, therefore we solve the $\ell_1$ relaxation of \pdz\ instead, which is  exactly the proposed \ls\ method.

\section{Why \ls?}\label{sec_why}
Within the content of Basis pursuit, there are two major approaches to the frame-based compressed sensing: the \ls\ method and the $\ell_1$-analysis method. The present paper focuses on the synthesis approach, and we would like to briefly explain why.  We also refer \cite{NDEG} for more discussion on these two methods.

First, as has been pointed out in Section \ref{sec_basic}, the synthesis approach is the $\ell_1$ relaxation of \pdz, therefore a very natural method to start with for the frame-based compressed sensing problem. Second, the \la\ approach requires that $\vD^*\vz_0$ to be sparse, instead of the more general setting that $\vz_0$ is $\vD$-sparse. 
It is argued in \cite{Dic_Candes} that with certain condition on $\vD$, $\vz_0$ is $\vD$-sparse will imply $\vD^*\vz_0$ is sparse, however, this imposes more restrictions on $\vD$. Third, it appears that the \la\ method is a sub-problem of \ls\ by the work of Li et al.~\cite{LML12}, which we will elaborate below.

In \cite{LML12}, a frame-based sparse signal is reconstructed by solving the \emph{optimal dual based $\ell_1$-analysis} problem:
\begin{equation}\label{equ_ana_dual}
\hat{\vz}=\arg\min_{\vz\in \F^d, \vD\tilde\vD^*=\vI} \|\tilde\vD^*\vz\|_1, \quad\text{ s.t. } \|\vy-\vA \vz\|_2\leq \epsilon.
\end{equation}

To compare this minimization with the \la\ \eqref{equ_ana}, we see that \eqref{equ_ana} fixes the dual to be the canonical dual whereas \eqref{equ_ana_dual} searches through all feasible $\vz$'s and all dual frames of $\vD$. The idea behind this can be explained by introducing a concept called sparse duals. For any $\vz_0=\vD\vx_0$ with $\|\vx_0\|_0\leq s$, one can always find a dual frame of $\vD$, denoted as $\tilde \vD$, such that $\vx_0 = \tilde\vD^*\vz_0$. This $\tilde\vD$ is called the \emph{sparse dual} of $\vz_0$. It depends on $\vz_0$, and does not need to be unique.

The $\ell_1$-analysis method \eqref{equ_ana} can only find signals that are sparse in the canonical dual, whereas the optimal dual based $\ell_1$-analysis will find the appropriate sparse dual, and hence get a more accurate reconstruction.

Theorem 2 in \cite{LML12} shows that the optimal dual based $\ell_1$-analysis is equivalent to $\ell_1$-synthesis, which implies that \ls\ is a more thorough method than \la.
An efficient split Bregman algorithm is used to solve \eqref{equ_ana_dual} in \cite{LML12} and the numerical experiments also suggest that \ls\ is more accurate.

\section{The null space property for a frame}\label{sec:nsp}
This section explores appropriate conditions on the sensing matrix $\vA$ such that
the frame-based sparse signals can be reconstructed exactly via $\ell_1$-synthesis method \eqref{equ_PD}.

One obvious solution is to require $\vA\vD$ to satisfy the RIP condition or NSP condition. However, as we mention earlier, this leads to an accurate recovery of the coefficients $\vx_0$, which is unnecessary. Moreover, either condition will inevitably require the frame $\vD$ to be incoherent. Therefore, our goal is to find conditions weaker than $\vA\vD$ having RIP or NSP, that will recover the signal accurately, but not necessarily the coefficients, and that will allow coherent frames.

In basis-based compressed sensing, NSP is an equivalent condition for recovering exactly sparse signals under no noise. Therefore we start with a null space property like condition with the hope to characterize the successful recovery of noise-less \ls.


\begin{definition}[Null space property of a frame $\vD$ ($\vD$-NSP)]
Fix a dictionary $\vD\in \F^{d\times n}$, a matrix $\vA\in \F^{m\times d}$ is said to satisfy the $\vD$-NSP of order $s$ ($s$-$\vD$-NSP) if for any index set $T$ with $|T|\leq s$, and any $\vv\in \vD^{-1}(\ker \vA\backslash\{0\})$, there exists $\vu\in\ker \vD$, such that
\begin{equation}\label{eq: DNSP}\|\vv_T+\vu\|_1<\|\vv_{T^c}\|_1.\end{equation}
\end{definition}

The intuition of $\vD$-NSP comes from the fact that we are only interested in recovering $\vz_0$ rather than the representation $\vx_0$.
Comparing this new condition with $\vA\vD$ having NSP, the major difference is that here the left hand side of \eqref{eq: DNSP} is essentially the minimum of $\|\vv_T+\vu\|_1$ over all $\vu\in \ker \vD$, whereas $\vu$ has to be 0 in the latter condition. Therefore this new condition is weaker than $\vA\vD$ having NSP.

 Notice a different DNSP was introduced in \cite{ACP11}, but that version is for the \la\ model.



The following theorem asserts that $s$-$\vD$-NSP is a necessary and sufficient condition for  \eqref{equ_PD} to  successfully recover all the $\vD$-sparse signals with sparsity $s$. 

\begin{theorem}\label{thm_iff}
Fix a frame $\vD\in \F^{d\times n}$, a matrix $\vA\in \F^{m\times d}$ satisfies $s$-$\vD$-NSP if and only if for any $\vz_0\in \vD\Sigma_s$, we have $\hat \vz=\vz_0$, where $\hat \vz$ is the reconstructed signal from $\vy=\vA \vz_0$ using the $\ell_1$-synthesis method \eqref{equ_PD}.
\end{theorem}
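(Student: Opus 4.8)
The plan is to recast the statement ``$\ell_1$-synthesis returns $\hat\vz=\vz_0$'' as a purely metric condition on $\ell_1$ norms and then read off the equivalence with $\vD$-NSP almost directly. The crucial observation is that the feasible points $\vx$ with $\vD\vx=\vz_0$ are \emph{exactly} the coset $\vx_0+\ker\vD$, so they realize the value $m_0:=\min_{\vu\in\ker\vD}\|\vx_0+\vu\|_1$, which is just the quotient norm of $\vz_0$ (it depends only on $\vz_0$, not on the chosen representative $\vx_0$). I would first prove the reformulation: for a fixed $\vx_0$ with $\supp(\vx_0)=T$, $|T|\le s$, recovery of $\vz_0$ holds if and only if every competing feasible point, i.e.\ every $\vx'=\vx_0+\vh$ with $\vh\in\vD^{-1}(\ker\vA\setminus\{0\})$, satisfies $\|\vx'\|_1>m_0$. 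Indeed, if some such $\vx'$ had $\|\vx'\|_1\le m_0$ it would be a minimizer with $\vD\vx'\neq\vz_0$ (failure), whereas if all strictly exceed $m_0$ then the global minimum equals $m_0$, is attained only inside $\vx_0+\ker\vD$, and hence every minimizer $\hat\vx$ obeys $\vD\hat\vx=\vz_0$.

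For the forward direction ($\vD$-NSP $\Rightarrow$ recovery), fix $\vx_0$ and a competitor $\vx'=\vx_0+\vh$ with $\vD\vh\neq0$; then $\vh\in\vD^{-1}(\ker\vA\setminus\{0\})$, so $\vD$-NSP supplies $\vu\in\ker\vD$ with $\|\vh_T+\vu\|_1<\|\vh_{T^c}\|_1$. Splitting the norm over $T$ and $T^c$ and using $(\vx_0)_{T^c}=0$,
\[
\|\vx_0+\vh\|_1=\|\vx_0+\vh_T\|_1+\|\vh_{T^c}\|_1>\|\vx_0+\vh_T\|_1+\|\vh_T+\vu\|_1\ge\|\vx_0-\vu\|_1\ge m_0,
\]
where the middle step is the triangle inequality $\|a\|_1+\|b\|_1\ge\|a-b\|_1$ with $a=\vx_0+\vh_T$, $b=\vh_T+\vu$, and the last step holds because $\vx_0-\vu\in\vx_0+\ker\vD$. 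This is precisely the reformulated recovery condition.

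For the converse (recovery $\Rightarrow$ $\vD$-NSP), fix $|T|\le s$ and $\vv\in\vD^{-1}(\ker\vA\setminus\{0\})$ and make the choice $\vx_0:=-\vv_T$, so that $\vz_0=-\vD\vv_T\in\vD\Sigma_s$. The point $\vx_0+\vv=\vv_{T^c}$ is feasible for $\vy=\vA\vz_0$ (since $\vD\vv\in\ker\vA$) and synthesizes to $\vD\vv_{T^c}\neq\vz_0$, because $\vD\vv_{T^c}-\vz_0=\vD\vv\neq0$. Recovery therefore forces $\|\vv_{T^c}\|_1>m_0=\min_{\vu\in\ker\vD}\|{-\vv_T}+\vu\|_1=\min_{\vu\in\ker\vD}\|\vv_T+\vu\|_1$, and unwinding this infimum produces a $\vu\in\ker\vD$ with $\|\vv_T+\vu\|_1<\|\vv_{T^c}\|_1$, which is exactly $\vD$-NSP for the pair $(T,\vv)$.

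The main obstacle, and the part demanding the most care, is that we only recover the signal $\vz_0$ rather than its representation $\vx_0$, so the $\ell_1$-minimizer need not be unique; the entire argument must be run against the minimal value $m_0$ (the quotient norm of $\vz_0$) instead of against $\|\vx_0\|_1$, which is why the reformulation step is doing the real work. The key algebraic device is the telescoping triangle inequality that converts the $\vD$-NSP estimate $\|\vh_T+\vu\|_1<\|\vh_{T^c}\|_1$ into the lower bound $\|\vx_0+\vh\|_1>m_0$, and the clean choice $\vx_0=-\vv_T$ in the converse (so that the competitor collapses to exactly $\vv_{T^c}$) is what makes the reverse implication fall out with essentially no extra computation.
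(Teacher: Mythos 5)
Your proof is correct, and it takes essentially the same route as the paper: the necessity direction uses the same test signal $\vz_0=\pm\vD\vv_T$ with competitor $\mp\vv_{T^c}$, and the sufficiency direction is the same triangle-inequality argument applied to $\vh=\hat\vx-\vx_0$ with $T=\supp(\vx_0)$, merely run directly against the quotient-norm value $m_0$ rather than by contradiction against $\|\hat\vx\|_1$. The $m_0$ reformulation is a clean way to package the fact that only $\vz_0$, not $\vx_0$, needs to be recovered, but it does not change the substance of the argument.
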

\proof Suppose that $\ell_1$-synthesis is successful for all the signals in $\vD\Sigma_s$. Take any support $T$ such that $|T|\le s$ and $\vv\in \vD^{-1}(\ker \vA/\{0\})$. Let $\vz_0=\vD\vv_T$ be the signal that we are trying to recover, then by assumption, the minimizer must be $\vv_T+\vu$ with some $\vu\in \ker \vD$.

Note that $\vv_T-\vv$ is also feasible for \eqref{equ_PD}, that is, $\vA\vD(\vv_T-\vv)=\vA\vD \vv_T$. Then it cannot be a minimizer since $\vD(\vv_T-\vv)\neq \vD\vv_T$, therefore
$$\|\vv_T+\vu\|_1<\|\vv_T-\vv\|_1=\|\vv_{T^c}\|_1.$$

On the other hand, if $\vA$ satisfies $s$-$\vD$-NSP, suppose by contradiction that ($\text{P}_{\vD}$) is not successful for some  $\vz_0=\vD\vx_0$ with $\|\vx_0\|_0\leq s$. This implies that some minimizer $\hat\vx$ of  \eqref{equ_PD} satisfies $\vD\hat \vx\neq \vD\vx_0$ and $\|\hat \vx\|_1<\|\vx_0\|_1$. Let $\vv=\vx_0-\hat \vx$, then $\vv \in \vD^{-1}(\ker \vA/\{0\})$. Take $T$ to be the support of $\vx_0$, then there exists $\vu\in \ker \vD$ such that $\|\vv_T+\vu\|_1<\|\vv_{T^c}\|_1$, i.e. $\|\vx_0-\hat \vx_T+\vu\|_1<\|\hat \vx_{T^c}\|_1$, so
$$\|\vx_0+\vu\|_1\leq\|\vx_0-\hat \vx_T+\vu\|_1+\|\hat \vx_T\|_1<\|\hat \vx_{T^c}\|_1+\|\hat \vx_T\|_1=\|\hat \vx\|_1.$$
This contradicts to the fact that  $\hat \vx$ is a minimizer.\qed

One can easily define a \nspD\ with the $\ell_q$ quasi-norm and generalize the above theorem  for the $\ell_q$-synthesis method for $0<q\leq1$. However, this paper will focus on the $\ell_1$ case since the analysis is  the same for the $\ell_q$ case and the content adds very little to the work.

Notice that when $\vD$ is the canonical basis of $\F^d$, the \nspD\ is reduced to the normal \nsp\ with the same order. In other words, $\vD$-NSP is a generalization of \nsp\ for the frame-based compressed sensing. It is, however, a nontrivial generalization.

For a fixed frame $\vD$, we just need to choose a sensing matrix $\vA$ such that it satisfies \nspD. Obviously different frames will impose different \nspD\ on the sensing matrix. For certain frames, it is possible that no reasonable sensing matrices satisfy \nspD\ at all. We introduce the concept of admissibility.

\begin{definition}[Admissibility]\label{def_adm}
Given a frame $\vD\in \F^{d\times n}$,
We call a frame $\vD$  \emph{$s$-admissible to $\vA$} if $\vA$ has $s$-\nspD.
We call a frame $\vD$ \emph{$s$-admissible} if there exists a sensing matrix with rank strictly less than $d$ (non-trivial) that is $s$-\nspD.
We call a frame $\vD$ \emph{$s$-inadmissible} if it is not $s$-admissible.
\end{definition}
Admissibility is \nspD\ from the perspective of frames. It is the least condition that the frame needs to have if the \ls\ approach is taken. Hence it is expected that highly coherent frames can be admissible. 

\begin{proposition}\label{prop_sub}
If $\vD=[\vB, \vv]$ where $\vB$ is a full rank $d\times (n-1)$ matrix and $\vv=\vB\va$ with $\|\va\|_1\leq 1$ and $\|\va\|_0=k$, then

(1) If $\vA$ has \nspD, then $\vA$ has $\vB$-\nsp\ with the same sparsity $s$.

(2) If $\vA$ has $s$-$\vB$-\nsp, then $\vA$ has \nspD\ with sparsity $s-k+1$. In particular, if $\vv$ is a column of $\vB$,  then $\vA$ has \nspD\ with the same order $s$.

\end{proposition}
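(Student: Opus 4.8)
The plan is to avoid manipulating the null-space inequality \eqref{eq: DNSP} directly and instead route everything through the recovery characterization of Theorem~\ref{thm_iff}, which converts both assertions into statements comparing the \ls\ programs driven by $\vD$ and by $\vB$. First note that $\vB$ is itself a frame: since $\vv=\vB\va\in\mathrm{col}(\vB)$ we have $\mathrm{col}(\vB)=\mathrm{col}(\vD)=\F^d$, so Theorem~\ref{thm_iff} applies to $\vB$ as well. Writing each $\vx\in\F^n$ as $\vx=(\vx',x_n)$ with $\vx'\in\F^{n-1}$, I introduce the map $\phi(\vx)=\vx'+x_n\va\in\F^{n-1}$. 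The two facts I will lean on are the identity $\vD\vx=\vB\vx'+x_n\vB\va=\vB\phi(\vx)$ and the contraction $\|\phi(\vx)\|_1\le\|\vx'\|_1+|x_n|\,\|\va\|_1\le\|\vx\|_1$, where the hypothesis $\|\va\|_1\le1$ is exactly what is used. Dually, the lift $\vp\mapsto(\vp,0)$ satisfies $\vD(\vp,0)=\vB\vp$ and preserves the $\ell_1$ norm.

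Fix $\vy$ in the range of $\vA\vD$ (equivalently of $\vA\vB$, as $\mathrm{col}(\vD)=\mathrm{col}(\vB)$). I will show the programs $\min\{\|\vx\|_1:\vA\vD\vx=\vy\}$ and $\min\{\|\vp\|_1:\vA\vB\vp=\vy\}$ share the same optimal value $m_{\vD}=m_{\vB}$ and, more importantly, the same set of synthesized signals $\{\vD\hat\vx\}=\{\vB\hat\vp\}$. The value equality is a two-line squeeze: given a $\vB$-feasible $\vp$, the lift $(\vp,0)$ is $\vD$-feasible with equal norm, so $m_{\vD}\le m_{\vB}$; given a $\vD$-feasible $\vx$, $\phi(\vx)$ is $\vB$-feasible because $\vA\vB\phi(\vx)=\vA\vD\vx=\vy$, with $\|\phi(\vx)\|_1\le\|\vx\|_1$, so $m_{\vB}\le m_{\vD}$. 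Running the same two maps on optimizers then matches synthesized signals: if $\hat\vx$ is a $\vD$-minimizer then $\phi(\hat\vx)$ is a $\vB$-minimizer with $\vB\phi(\hat\vx)=\vD\hat\vx$, and if $\hat\vp$ is a $\vB$-minimizer then $(\hat\vp,0)$ is a $\vD$-minimizer with $\vD(\hat\vp,0)=\vB\hat\vp$. Hence the synthesized-signal sets coincide, so \ls\ driven by $\vD$ returns the single signal $\vz_0$ exactly when \ls\ driven by $\vB$ does.

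It remains to track sparsity, the only place the two orders differ. For (1), any $\vz_0=\vB\vp_0\in\vB\Sigma_s$ equals $\vD(\vp_0,0)$ with $\|(\vp_0,0)\|_0=\|\vp_0\|_0\le s$, so $\vB\Sigma_s\subseteq\vD\Sigma_s$; thus $\vD$-recovery on $\vD\Sigma_s$ forces $\vB$-recovery on $\vB\Sigma_s$, i.e. $\vB$-NSP of order $s$. For (2), any $\vz_0=\vD\vx_0\in\vD\Sigma_{s-k+1}$ equals $\vB\phi(\vx_0)$; writing $\vx_0=(\vx_0',t)$ we have $\|\phi(\vx_0)\|_0\le\|\vx_0'\|_0+\|t\va\|_0$, which is $\le s-k+1\le s$ when $t=0$, while if $t\ne0$ then $\|\vx_0'\|_0\le s-k$ and $\|t\va\|_0=k$, again giving $\le s$. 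Hence $\vD\Sigma_{s-k+1}\subseteq\vB\Sigma_s$, so $\vB$-recovery on $\vB\Sigma_s$ forces $\vD$-recovery on $\vD\Sigma_{s-k+1}$, i.e. $\vD$-NSP of order $s-k+1$. The special case $\vv$ a column of $\vB$ is $k=1$, so the order is preserved.

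The conceptual heart, and the step I expect to need the most care, is the matching of minimizers: one must verify not merely that the optimal values agree but that $\phi$ and the lift carry optimizers to optimizers on each side, so that the whole set of synthesized signals, not just the optimal value, is preserved; this is precisely what lets ``recover $\vz_0$'' transfer between the two dictionaries. The remaining delicate point is the $\|\cdot\|_0$ count in (2): the case $t\ne0$ is exactly what produces the loss from $s$ to $s-k+1$, since spending one coordinate on the appended atom $\vv$ injects up to $k$ coordinates through $\va$. I note in passing that a direct attack on \eqref{eq: DNSP} is messier, because applying $\vB$-NSP to $\phi(\vv)$ seems to require a support set of size $s+1$ unless one carefully exploits the kernel vector $(\va,-1)\in\ker\vD$; reformulating through Theorem~\ref{thm_iff} finesses this bookkeeping entirely.
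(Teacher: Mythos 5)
Your proof is correct and follows essentially the same route as the paper's: both pass through Theorem~\ref{thm_iff} and compare the $\vD$- and $\vB$-driven programs via the zero-padding lift and the contraction $\vx\mapsto\vx'+x_n\va$ (the paper's $\vw+\alpha\va$), concluding that the two programs share optimal values and synthesized minimizers. Your version merely packages the sparsity bookkeeping as the inclusions $\vB\Sigma_s\subseteq\vD\Sigma_s$ and $\vD\Sigma_{s-k+1}\subseteq\vB\Sigma_s$, and makes the $t\neq 0$ case of the $\ell_0$ count explicit where the paper leaves it implicit.
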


\proof Both parts of the proof will use the following argument.

Given $\vx_0, \vy_0$ such that $\vB\vx_0=\vD\vy_0$, consider the following two problems
\begin{equation}\label{equ_PB}
\hat \vx ={\rm arg \thinspace min} \|\vx\|_1, \quad\text{ s.t. } \ \ \ \vA\vB \vx=\vA\vB\vx_0.
\end{equation}
\begin{equation}\label{equ_PDD}
\hat \vy ={\rm arg \thinspace min} \|\vy\|_1, \quad\text{ s.t. } \ \ \ \vA\vD\vy=\vA\vD\vy_0.
\end{equation}

Notice that $[\hat \vx^T,0]^T$ is feasible in \eqref{equ_PDD}, therefore $\|\hat \vy\|_1\leq\|\hat \vx\|_1$.

Write $\hat \vy$ as $\hat \vy^T=[ \vw^T, \alpha]^T$ where $\alpha$ is the $n$-th coordinate of $\hat \vy$, then the vector $[(\vw+\alpha\va)^T, 0]^T$ is feasible in \eqref{equ_PDD} since
\begin{equation}\label{equ_dec}
\vD\hat \vy=\vB\vw+\alpha\vv=\vB\vw+\alpha\vB\va=\vB(\vw+\alpha\va)=\vD([\vw+\alpha\va)^T, 0]^T).
\end{equation}
Moreover,
$$\|[(\vw+\alpha\va)^T, 0]^T\|_1=\|\vw+\alpha\va\|_1\leq \|\vw\|_1+|\alpha|\|\va\|_1\leq \|\vw\|_1+|\alpha|=\|\hat \vy\|_1\leq\|\hat \vx\|_1,$$
indicating that $[(\vw+\alpha\va)^T, 0]^T$ is a minimizer of \eqref{equ_PDD}, and $\vw+\alpha\va$ is a minimizer of \eqref{equ_PB}, hence
\begin{equation}
\|\hat \vx\|_1=\|\vw+\alpha\va\|_1=\|[(\vw+\alpha\va)^T, 0]^T\|_1=\|\hat \vy\|_1
\end{equation}

(1) Suppose $\vA$ has $s$-$\vD$-\nsp, take an arbitrary $s$ sparse vector $\vx_0\in\F^{n-1}$. We solve \eqref{equ_PB} with this $\vx_0$ and \eqref{equ_PDD} with $\vy_0^T=[\vx_0^T, 0]^T\in\F^n$. So $\vB\vx_0=\vD\vy_0$.

Therefore $[\hat \vx^T, 0]^T$ is a minimizer of \eqref{equ_PDD}, and by our assumption and Theorem \ref{thm_iff}, $\vB\hat \vx=\vD([\hat \vx^T, 0]^T)=\vD\vy_0=\vB\vx_0$, thus completes one direction of the proof again by Theorem \ref{thm_iff}.

(2) Suppose $\vA$ has $s$-$\vB$-NSP, take an arbitrary $s-k+1$ sparse vector $\vy_0$, since $\|\va\|_0=k$, there exists $\vx_0$ that is $s$ sparse such that $\vD\vy_0 = \vB\vx_0$ (In fact, we use the same procedure as in \eqref{equ_dec}).

We solve \eqref{equ_PB} and \eqref{equ_PDD} with such $\vx_0$, $\vy_0$. Again write $\hat \vy$ as $\hat \vy^T=[ \vw^T, \alpha]^T$, so we get $\vw+\alpha\va$ is a minimizer of \eqref{equ_PB}, by Theorem \ref{thm_iff}, $\vB(\vw+\alpha\va)=\vB\vx_0$, hence $\vD\hat \vy=\vD\vy_0$, which finishes the proof by again using Theorem \ref{thm_iff}. \qed

A particular case of Proposition \ref{prop_sub} (2) is when $\vv$ is a column of $\vB$, in which case $\|\va\|_0=1$, so both $\vD=[\vB, \vv]$ and $\vB$ are $s$-admissible to $\vA$.

This proposition implies that if a frame is admissible to $\vA$, then a new frame formed by deleting a column or adding a repeated column is still admissible to $\vA$. This is certainly not the case for $\vA\vD$ having RIP or NSP.

\vspace{0.1in}

\subsection{What dictionaries are admissible?}\label{sec_exa}
With Proposition \ref{prop_sub}, we are able to find a class of admissible dictionaries along with their compatible sensing matrices.

First, we already expect random matrices $\vA$ and incoherent frames $\vD$ to be admissible with each other by \cite{DMP1}, because such $\vA\vD$ will have RIP (hence NSP) with very high probability. Second, Proposition \ref{prop_sub} tells us that it is fine to add a repeated column to $\vD$, and the resultant frame is still admissible to the same sensing matrix $\vA$. One can also add a column that is a combination of two frame vectors from $\vD$, but this will reduce the sparsity level by 1. This way we have constructed a class of highly coherent dictionaries that will still be admissible to random sensing matrices. We will discuss more interesting results on this topic later.

\section{Stability guarantees with \nspD}\label{sec_sta}
It is known that the \nsp\ is a sufficient and necessary condition not only for the sparse and noiseless recovery, but also for compressible signals with noisy measurements~\cite{ACP11, Sun11}. In this section  we will  show that this result can generalize to \nspD, but only in the real vector space. For this, we introduce a stronger null space property first.

\begin{definition}[Strong null space property of a frame $\vD$ (\snspD)]
A sensing matrix $\vA$ is said to have \snspD\ of order $s$ if there exists a positive constant $c$ such that for every $ \vv\in \ker(\vA\vD)$,$|T|\leq s$, there exists $\vu\in \ker \vD$ satisfying
\begin{equation}\label{equ_nsp}
\|\vv_{T^c}\|_1-\|\vv_T+\vu\|_1\geq c\|\vD\vv\|_2.
\end{equation}
\end{definition}

This new null space property is obviously no weaker than \nspD, hence the name. With this stronger \nspD, sparse signals can be stably recovered via \PDe\ as follows.

\begin{theorem}\label{thm_sta}
If $\vA$ satisfies $\vD$-SNSP of order $s$, then any solution $\hat \vz$ of \PDe satisfies
$$\|\hat \vz-\vz_0\|_2\leq \frac{2}{c}\sigma_s(\vx_0)+\epsilon\left(\frac{2\sqrt{n}}{c\nu_\vA\nu_\vD}+\frac{2}{\nu_\vA}\right).$$
where $\vx_0$ is any representation of $\vz_0$ in $D$,  $\nu_\vA$ and $\nu_\vD$ are the smallest positive singular values of $\vA$ and $\vD$.
\end{theorem}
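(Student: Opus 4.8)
The plan is to upgrade the contradiction argument behind Theorem~\ref{thm_iff} into a quantitative bound, adding one device to cope with the fact that under noise the error vector leaves $\ker(\vA\vD)$. Throughout, write $\vy=\vA\vz_0+\vw$ with $\|\vw\|_2\le\epsilon$, fix any representation $\vz_0=\vD\vx_0$, let $\hat\vx$ be a minimizer of \PDe\ with $\hat\vz=\vD\hat\vx$, and set $\vv=\vx_0-\hat\vx$, so that $\vD\vv=\vz_0-\hat\vz$ is exactly the quantity to be bounded. Let $T$ collect the indices of the $s$ largest-magnitude entries of $\vx_0$, so $\|(\vx_0)_{T^c}\|_1=\sigma_s(\vx_0)$.

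First I would record two consequences of feasibility. Since $\vx_0$ and $\hat\vx$ are both feasible for \PDe, the triangle inequality gives $\|\vA\vD\vv\|_2\le\|\vA\vz_0-\vy\|_2+\|\vy-\vA\hat\vz\|_2\le 2\epsilon$; and because $\vx_0+\vu$ is feasible for every $\vu\in\ker\vD$ (it has the same image under $\vA\vD$ as $\vx_0$), minimality yields $\|\hat\vx\|_1\le\|\vx_0+\vu\|_1$ for all $\vu\in\ker\vD$. The essential obstacle is that \snspD\ only constrains vectors of $\ker(\vA\vD)$, while $\vv$ is not one of them once $\epsilon>0$. To repair this I would project: decompose $\vD\vv=\vg_1+\vg_2$ with $\vg_1\in\ker\vA$ and $\vg_2\perp\ker\vA$, so that restricted invertibility of $\vA$ gives $\|\vg_2\|_2\le\nu_\vA^{-1}\|\vA\vD\vv\|_2\le 2\epsilon/\nu_\vA$. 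Taking $\vr$ to be the least-norm solution of $\vD\vr=\vg_2$ (that is, $\vr=\vD^*(\vD\vD^*)^{-1}\vg_2$) bounds $\|\vr\|_2\le\|\vg_2\|_2/\nu_\vD$, hence $\|\vr\|_1\le\sqrt n\,\|\vr\|_2\le 2\sqrt n\,\epsilon/(\nu_\vA\nu_\vD)$. Then $\vv'=\vv-\vr$ satisfies $\vD\vv'=\vg_1\in\ker\vA$, so $\vv'\in\ker(\vA\vD)$ and $\|\vD\vv\|_2\le\|\vD\vv'\|_2+2\epsilon/\nu_\vA$.

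Now \snspD\ applies to $\vv'$: it furnishes $\vu\in\ker\vD$ with $c\|\vD\vv'\|_2\le\|\vv'_{T^c}\|_1-\|\vv'_T+\vu\|_1$. Writing $\vv'=\vv-\vr$ and using the triangle inequality together with $\|\vr_T\|_1+\|\vr_{T^c}\|_1=\|\vr\|_1$ transfers this to $c\|\vD\vv'\|_2\le(\|\vv_{T^c}\|_1-\|\vv_T+\vu\|_1)+\|\vr\|_1$. It remains to control the cone quantity $\|\vv_{T^c}\|_1-\|\vv_T+\vu\|_1$ for this particular $\vu$, which is where I would generalize the computation inside Theorem~\ref{thm_iff} to compressible signals. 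Using $(\vx_0)_T=\vv_T+\hat\vx_T$ one has $\vx_0+\vu=(\vv_T+\vu)+\hat\vx_T+(\vx_0)_{T^c}$, so minimality gives $\|\hat\vx\|_1\le\|\vx_0+\vu\|_1\le\|\vv_T+\vu\|_1+\|\hat\vx_T\|_1+\sigma_s(\vx_0)$; subtracting $\|\hat\vx_T\|_1$ leaves $\|\hat\vx_{T^c}\|_1\le\|\vv_T+\vu\|_1+\sigma_s(\vx_0)$. Since $\vv_{T^c}=(\vx_0)_{T^c}-\hat\vx_{T^c}$, this yields $\|\vv_{T^c}\|_1-\|\vv_T+\vu\|_1\le 2\sigma_s(\vx_0)$. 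Combining these inequalities gives $c\|\vD\vv'\|_2\le 2\sigma_s(\vx_0)+\|\vr\|_1$, and feeding in the bound on $\|\vr\|_1$ and the relation $\|\vD\vv\|_2\le\|\vD\vv'\|_2+2\epsilon/\nu_\vA$ produces precisely the asserted estimate.

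The step I expect to be the real work is the projection that replaces $\vv$ by a nearby element $\vv'$ of $\ker(\vA\vD)$: one must split the defect in the range $\F^d$ (through $\nu_\vA$) and then pull it back through $\vD$ to a coefficient-space correction (through $\nu_\vD$), which is exactly what forces the two singular-value factors and the $\sqrt n$ into the noise term. Everything else is the compressible-signal refinement of Theorem~\ref{thm_iff}'s cone estimate plus bookkeeping; the only place care is needed is to split $\|\vr\|_1$ across $T$ and $T^c$ in one stroke, so as not to double the constant in front of $\epsilon$.
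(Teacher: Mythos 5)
Your proposal is correct and follows essentially the same route as the paper: both split the error into a coefficient-space component lying in $\ker(\vA\vD)$ plus a small correction controlled by $\nu_\vA$ and then $\nu_\vD$ (the paper via its Lemma~\ref{SVD} applied twice, you via the pseudo-inverse), apply \snspD\ to the kernel component, and close with the minimality/cone estimate. The only difference is bookkeeping — the paper absorbs the correction $\vf$ inside the minimality inequality while you transfer the \snspD\ inequality from $\vv'$ to $\vv$ at cost $\|\vr\|_1$ — and both yield the identical constants.
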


The following result will be used multiple times in this paper, so we put it as a lemma whose proof makes use of standard properties of the singular value decomposition. We refer the readers to \cite[Lemma III.4.3]{chen} for the proof.

\begin{lemma}\label{SVD}
Suppose $\vM$ is an $k\times l$ matrix where $k\leq l$, then any vector $\vh\in\F^l$ can be decomposed as $\vh=\va+\vb$ with $\va\in\ker \vM$, $\vb\perp\ker \vM$, and $\|\vb\|_2\leq \frac{1}{\nu_\vM }\|\vM\vh\|_2$, where $\nu_\vM$ is the smallest positive singular value of $\vM$.
\end{lemma}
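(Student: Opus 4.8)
The plan is to read off the decomposition from the orthogonal projection onto the kernel and then control the size of the orthogonal part through the singular value decomposition of $\vM$. First I would set $\va$ to be the orthogonal projection of $\vh$ onto the subspace $\ker\vM$ and put $\vb=\vh-\va$. Since $\F^l=\ker\vM\oplus(\ker\vM)^\perp$, this choice immediately gives $\va\in\ker\vM$ and $\vb\perp\ker\vM$, so the first two assertions of the lemma hold by construction and with $\va,\vb$ unique. All of the actual content is therefore contained in the norm bound on $\vb$.

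For that bound I would begin with the observation that applying $\vM$ annihilates the kernel part: because $\va\in\ker\vM$, we have $\vM\vh=\vM\va+\vM\vb=\vM\vb$, and hence $\|\vM\vh\|_2=\|\vM\vb\|_2$. It thus suffices to establish the restricted lower bound $\|\vM\vb\|_2\ge \nu_\vM\|\vb\|_2$ for every $\vb\in(\ker\vM)^\perp$, since dividing by $\nu_\vM$ and substituting $\|\vM\vb\|_2=\|\vM\vh\|_2$ then yields $\|\vb\|_2\le \nu_\vM^{-1}\|\vM\vh\|_2$, which is the claim.

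This restricted lower bound is precisely where the SVD enters. Writing $\vM$ through its singular value decomposition, let $\sigma_1\ge\cdots\ge\sigma_r>0$ be the strictly positive singular values with associated right and left singular vectors $\vv_i$ and $\vu_i$. The right singular vectors $\vv_i$ belonging to the positive $\sigma_i$ form an orthonormal basis of $(\ker\vM)^\perp$, while those belonging to the zero singular values span $\ker\vM$. Expanding $\vb=\sum_i c_i\vv_i$ in this basis gives $\vM\vb=\sum_i c_i\sigma_i\vu_i$, and orthonormality of the $\vu_i$ yields $\|\vM\vb\|_2^2=\sum_i \sigma_i^2|c_i|^2\ge \nu_\vM^2\sum_i|c_i|^2=\nu_\vM^2\|\vb\|_2^2$, where I use $\nu_\vM=\sigma_r$ as the smallest among the positive singular values actually appearing. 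Taking square roots finishes the estimate.

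I do not expect a genuine obstacle here, as the statement is essentially bookkeeping within the SVD; the only point requiring care is the identification of $(\ker\vM)^\perp$ with the span of the right singular vectors attached to the strictly positive singular values. It is exactly this identification that forces the \emph{smallest positive} singular value $\nu_\vM$ (rather than a vanishing singular value, which would ruin the bound) to appear, and it is also where the hypothesis $k\le l$ is relevant, since it is the regime in which $\vM$ typically has a nontrivial kernel and the decomposition is nonvacuous. Keeping that distinction straight is the whole of the subtlety.
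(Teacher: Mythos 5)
Your argument is correct and is exactly the ``standard properties of the singular value decomposition'' route that the paper itself invokes (it defers the proof to \cite[Lemma III.4.3]{chen} rather than writing it out): orthogonal projection onto $\ker \vM$, the identity $\vM\vh=\vM\vb$, and the lower bound $\|\vM\vb\|_2\ge \nu_\vM\|\vb\|_2$ on $(\ker\vM)^\perp$ coming from the right singular vectors attached to the positive singular values. No gaps; nothing further is needed.
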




%


\proof (Proof of Theorem \ref{thm_sta})

Suppose $\vz_0=\vD\vx_0$ is the signal that we want to recover and $T$ is the index set with $k$ largest coefficients in magnitude of $\vx_0$. Let $\vh=\hat \vz-\vz_0=\vD(\hat{\vx}-\vx_0)$, and by Lemma \ref{SVD} we can decompose $\vh$ as $\vh=\vD\vw+\vg$, where $\vD\vw\in\ker \vA$ and  $\|\vg\|_2\leq \frac{1}{\nu_\vA}\|\vA\vh\|_2\leq\frac{2\epsilon}{\nu_\vA}$ with $\nu_\vA$ being the smallest positive singular value of $\vA$.

Use Lemma \ref{SVD} again with the matrix $\vD$, we can find $\vf$ such that $\vg=\vD\vf$, and
\begin{equation}\label{equ_xi}
\|\vf\|_2\leq\frac{1}{\nu_\vD}\|\vg\|_2\leq\frac{2\epsilon}{\nu_\vA\nu_\vD}.
\end{equation}

Since $\vD(\hat{\vx}-\vx_0)=\vh=\vD(\vw+\vf)$, then $\hat{\vx}-\vx_0=\vw+\vf+\vu_1$ with some $\vu_1\in \ker \vD$.
Let $\vv=\vw+\vu_1$, then $\hat{\vx}-\vx_0=\vv+\vf$ and we have $\vD\vv=\vD\vw\in\ker \vA$.

Since $\vv\in \ker(\vA\vD)$, by the definition of \snspD, there exists $\vu\in\ker \vD$ such that \eqref{equ_nsp} holds. Therefore
\begin{align}\notag
&\|\vv+\vx_{0,T}\|_1-\|-\vu+\vx_{0,T}\|_1\\\notag
\geq &\|\vv_{T^c}\|_1+\|\vv_T+\vx_{0,T}\|_1-\|-\vu_T+\vx_{0,T}\|_1-\|\vu_{T^c}\|_1\\\notag
\geq &\|\vv_{T^c}\|_1-\|\vv_T+\vu_T\|-\|\vu_{T^c}\|_1\\\label{equ1}
=&\|\vv_{T^c}\|_1-\|\vv_T+\vu\|_1\geq c\|\vD\vv\|_2.
\end{align}
On the other hand, from the fact that $\hat{\vx}$ is a minimizer, we get
\begin{align*}
&\|-\vu+\vx_{0,T}\|_1+\|\vx_{0,T^c}\|_1\\
\geq&\|-\vu+\vx_0\|_1\geq \|\hat{\vx}\|_1= \|\vv+\vx_0+\vf\|_1 \\
\geq&\|\vv+\vx_0\|_1-\|\vf\|_1\\
\geq&\|\vv+\vx_{0,T}\|_1-\|\vx_{0,T^c}\|_1-\|\vf\|_1.
\end{align*}
Rearrange the above inequality, we get
\begin{equation}\label{equ2}
\|\vv+\vx_{0,T}\|_1-\|-\vu+\vx_{0,T}\|_1\leq 2\|\vx_{0,T^c}\|_1+\|\vf\|_1.
\end{equation}
Combining \eqref{equ1} and \eqref{equ2}, we obtain
\begin{equation}\label{equ_dv}
\|\vD\vv\|_2\leq\frac{2}{c}\|\vx_{0,T^c}\|_1+\frac{1}{c}\|\vf\|_1\leq\frac{2}{c}\|\vx_{0,T^c}\|_1+\frac{\sqrt{n}}{c}\|\vf\|_2.
\end{equation}
In the end, \eqref{equ_dv} and \eqref{equ_xi} together imply,
\begin{align}\notag
\|\vh\|_2&=\|\vD\vv+\vD\vf\|_2\leq\|\vD\vv\|_2+\|\vg\|_2\leq\frac{2}{c}\|\vx_{0,T^c}\|_1+\frac{\sqrt{n}}{c}\|\vf\|_2+\frac{1}{\nu_\vA}2\epsilon\\\label{equ_const}
&\leq\frac{2}{c}\|\vx_{0,T^c}\|_1+\frac{2\sqrt{n}}{c\nu_\vA\nu_\vD}\epsilon+\frac{1}{\nu_\vA}2\epsilon.
\end{align}

The theorem then follows from the definition of $T$.
\qed

Similar to \nspD, the following proposition says that  \snspD\ also allows perfectly correlated frames.

\begin{proposition}\label{pro_snsp}
If $\vA$ has \snspD, then  $\vA$ has $[\vD, \vv]$-\snsp\ with the same sparsity $s$ where $\vv$ is any column of $\vD$.
\end{proposition}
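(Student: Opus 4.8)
The plan is to mimic the structure of the proof of Proposition \ref{prop_sub}, but now carrying along the quantitative gap that \snspD\ provides. Write $\tilde\vD=[\vD,\vv]$ where $\vv=\vd_j$ is the $j$-th column of $\vD$, so that in the notation of Proposition \ref{prop_sub} we have $\va=\ve_j$, $\|\va\|_1=1$, $\|\va\|_0=1$. Given any $\tilde\vv\in\ker(\vA\tilde\vD)$, write $\tilde\vv^T=[\vw^T,\alpha]^T$ with $\alpha$ the last coordinate. The key identity from \eqref{equ_dec} is $\tilde\vD\tilde\vv=\vD(\vw+\alpha\ve_j)$, so if we set $\vv_0:=\vw+\alpha\ve_j\in\F^n$ then $\vv_0\in\ker(\vA\vD)$ and, crucially, $\tilde\vD\tilde\vv=\vD\vv_0$, whence $\|\tilde\vD\tilde\vv\|_2=\|\vD\vv_0\|_2$. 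This last equality is what lets the right-hand side $c\|\tilde\vD\tilde\vv\|_2$ match up with $c\|\vD\vv_0\|_2$.

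Next I would apply \snspD\ of $\vA$ to $\vv_0$. For the fixed support $T$ with $|T|\leq s$, this yields $\vu\in\ker\vD$ with $\|(\vv_0)_{T^c}\|_1-\|(\vv_0)_T+\vu\|_1\geq c\|\vD\vv_0\|_2$. Now I need to produce, for the enlarged vector $\tilde\vv$, a vector $\tilde\vu\in\ker\tilde\vD$ achieving the analogous gap against a support $\tilde T$ with $|\tilde T|\leq s$. The natural choice is $\tilde\vu^T=[(\vu-\alpha\ve_j)^T,\,\alpha]^T$, which lies in $\ker\tilde\vD$ because $\tilde\vD\tilde\vu=\vD(\vu-\alpha\ve_j)+\alpha\vv=\vD\vu-\alpha\vd_j+\alpha\vd_j=0$. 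Then $\tilde\vv_T+\tilde\vu$ reshuffles the last coordinate into the $j$-th one and adds $\vu$, so on the first $n$ coordinates one gets $\vw+\alpha\ve_j+\vu-\alpha\ve_j=\vv_0+\vu$ restricted appropriately, and the last coordinate becomes $\alpha-\alpha=0$; the complement norm $\|\tilde\vv_{\tilde T^c}\|_1$ becomes $\|(\vv_0)_{T^c}\|_1$ after the last coordinate is annihilated. Choosing $\tilde T$ to be the image of $T$ under the obvious inclusion (and noting the last index contributes nothing since its value is zeroed out), the two quantities line up and the \snspD\ inequality for $\vv_0$ transfers verbatim with the \emph{same} constant $c$ and the same sparsity $s$.

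The main obstacle, and the step I would be most careful about, is the bookkeeping of supports and the last coordinate. Unlike Proposition \ref{prop_sub}, which only needed existence of a feasible minimizer, here the inequality is sharp and quantitative, so I must verify that splitting $\tilde\vv_{\tilde T}+\tilde\vu$ versus $\tilde\vv_{\tilde T^c}$ does not lose any $\ell_1$-mass in the coordinate where $\vv$ was duplicated. Concretely, the risk is that the $j$-th coordinate of $\vv_0=\vw+\alpha\ve_j$ and the $n$-th coordinate $\alpha$ of $\tilde\vv$ must be merged consistently between the ``on-support'' and ``off-support'' parts; because $\vv$ is an exact copy of $\vd_j$ (so $\|\va\|_0=1$ with no sparsity loss, matching the ``in particular'' clause of Proposition \ref{prop_sub}(2)), this merge is clean and incurs no penalty. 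I would therefore organize the proof as: (i) set up the map $\tilde\vv\mapsto\vv_0$ and record $\tilde\vD\tilde\vv=\vD\vv_0$; (ii) invoke \snspD\ for $\vA$ on $\vv_0$; (iii) exhibit the explicit $\tilde\vu\in\ker\tilde\vD$ and the support $\tilde T$; (iv) check the two $\ell_1$-norms match those in step (ii), concluding the desired inequality $\|\tilde\vv_{\tilde T^c}\|_1-\|\tilde\vv_{\tilde T}+\tilde\vu\|_1\geq c\|\tilde\vD\tilde\vv\|_2$ with the same $c$.
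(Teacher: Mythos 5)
Your overall strategy---fold the last coordinate of $\tilde\vv$ into the $j$-th coordinate to obtain $\vv_0=\vw+\alpha\ve_j$ with $\tilde\vD\tilde\vv=\vD\vv_0$, apply the $\vD$-SNSP to $\vv_0$, and lift the resulting $\vu\in\ker\vD$ to $\ker\tilde\vD$---is exactly the paper's. But there are two genuine gaps. First, the quantifiers run the wrong way: $[\vD,\vv]$-SNSP demands the inequality for \emph{every} support $\tilde T\subset\{1,\dots,n+1\}$ with $|\tilde T|\le s$, whereas you fix a support $T$ for $\vD$ and then \emph{choose} $\tilde T$ to be its image under the inclusion $\{1,\dots,n\}\hookrightarrow\{1,\dots,n+1\}$. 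This never produces a support containing the new index $n+1$, so those supports are simply not covered. The paper's proof devotes two separate cases to $n+1\in T$ (its Cases 3 and 4), where the reduced support is $S=T\setminus\{n+1\}$ or $S=(T\setminus\{n+1\})\cup\{1\}$ and the extra kernel vector $[a,0,\dots,0,-a]^T\in\ker\vD_1$ is used to annihilate the last coordinate.

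Second, even for $\tilde T\subset\{1,\dots,n\}$, the bookkeeping you flag as delicate is precisely where the argument breaks. With $\tilde\vu=[(\vu-\alpha\ve_j)^T,\alpha]^T$ and $n+1\notin\tilde T$, the last coordinate of $\tilde\vv_{\tilde T}+\tilde\vu$ is $0+\alpha=\alpha$, not $\alpha-\alpha=0$, so $\|\tilde\vv_{\tilde T}+\tilde\vu\|_1$ picks up an extra $|\alpha|$. Moreover, the cancellation $\vw+\alpha\ve_j+\vu-\alpha\ve_j=\vv_0+\vu$ holds for the full vectors but not after restriction: $(\vv_0)_{\tilde T}$ contains the term $\alpha\ve_j$ only when $j\in\tilde T$, and in that case your $\tilde\vu$ subtracts $\alpha\ve_j$ a second time, leaving $(\vv_0)_{\tilde T}+\vu-2\alpha\ve_j$; when $j\notin\tilde T$ the subtraction lands on a coordinate outside the support of $\vw_{\tilde T}$. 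Either way the two $\ell_1$-norms do not ``line up verbatim,'' and one is left with a deficit of order $|\alpha|$ that cannot be absorbed, since $c\|\vD\vv_0\|_2$ is unrelated to $|\alpha|$. The repair is the paper's four-way case analysis on whether $j\in\tilde T$ and whether $n+1\in\tilde T$, taking $\tilde\vu=[\vu^T,0]^T$ in the first two cases and adding $[a,0,\dots,0,-a]^T$ in the other two, and then using one-sided triangle inequalities (rather than exact equality of norms) so that the $\pm|\alpha|$ contributions cancel on the two sides.
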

\proof Without loss of generosity, assume $\vv$ is the first column of $\vD$. Let $\vD_1=[\vD,\vv]$.

Take any $\vw\in \ker \vA\vD_1$, and  suppose $a=\vw_{n+1}$ is the $(n+1)$-th coordinate of $\vw$,
$$\vD_1\vw=\vD([\vw_1+a, \vw_2, \cdots, \vw_n]^T).$$

Denote $\bar \vw=[\vw_1+a, \vw_2, \cdots, \vw_n]^T$, hence $\bar \vw\in\ker \vA\vD$.

Take any support $T\subset \{1, 2, \cdots, n+1\}$ whose cardinality is at most $s$, we consider the following 4 cases

Case 1: $1\in T, n+1\not\in T$

By our assumption, there exists $\bar \vu\in \ker \vD$ such that
$\|\bar \vw_{T^c}\|_1-\|\bar \vw_T+\bar \vu\|_1\geq c\|\vD\bar \vw\|_2$. Let $\vu=[\bar \vu^T, 0]^T\in\ker \vD_1$, so
\begin{align*}
\|\vw_{T^c}\|_1-\|\vw_T+\vu\|_1\ge\|\bar \vw_{T^c}\|_1+|a|-\|\bar \vw_T+\bar \vu\|_1-|a|\ge c\|\vD\bar \vw\|_2= c\|\vD_1\vw\|_2.
\end{align*}

Case 2: $1\notin T, n+1\not\in T$

By our assumption, there exists $\bar \vu\in \ker \vD$ such that
$\|\bar \vw_{T^c}\|_1-\|\bar \vw_T+\bar \vu\|_1\geq c\|\vD\bar \vw\|_2$. Let $\vu=[\bar \vu^T, 0]^T\in\ker \vD_1$, so
\begin{align*}
\|\vw_{T^c}\|_1-\|\vw_T+\vu\|_1\ge\|\bar \vw_{T^c}\|_1-\|\bar \vw_T+\bar \vu\|_1\ge c\|\vD\bar \vw\|_2= c\|\vD_1\vw\|_2.
\end{align*}

Case 3: $\{1, n+1\}\in T$

Let $S= T\backslash \{n+1\}$, by our assumption, there exists $\bar \vu\in \ker \vD$ such that
$\|\bar \vw_{S^c}\|_1-\|\bar \vw_S+\bar \vu\|_1\geq c\|\vD\bar \vw\|_2$. Let $\vu=[\bar \vu^T, 0]^T\in\ker \vD_1$, $\va:=[a,0,\cdots, 0, -a]^T$ also belongs to $\ker \vD_1$.
\begin{align*}
&\|\vw_{T^c}\|_1-\|\vw_T+\va+\vu\|_1=\|\bar \vw_{S^c}\|_1-\|[(\bar \vw_S)^T, 0]^T+ \vu\|_1\\
=&\|\bar \vw_{S^c}\|_1-\|\bar \vw_S+ \bar \vu\|_1\geq c\|\vD\bar \vw\|_2= c\|\vD_1\vw\|_2.
\end{align*}

Case 4: $1\not\in T, n+1\in T$

Let $S=(T\backslash\{n+1\})\cup{1}$, by our assumption, there exists $\bar \vu\in \ker \vD$ such that
$\|\bar \vw_{S^c}\|_1-\|\bar \vw_S+\bar \vu\|_1\geq c\|\vD\bar \vw\|_2$. Let $\vu=[\bar \vu^T, 0]^T\in\ker \vD_1$, $\va:=[a,0,\cdots, 0, -a]^T$ also belongs to $\ker \vD_1$. Denote $\vb=\vw_{\{1\}}$ (only keep the first component, set others to zero), then
\begin{align*}
&\|\vw_{T^c}\|_1-\|\vw_T+\va+\vu\|_1\ge\|\bar \vw_{S^c}\|_1+|\vw_1|-\|\vw_T+\va+\vb+\vu-\vb\|_1\\
\geq &\|\bar \vw_{S^c}\|_1+|\vw_1|-\|\vw_T+\va+\vb+\vu\|_1-\|\vb\|_1\\
=&\|\bar \vw_{S^c}\|_1-\|\bar \vw_S+\bar \vu\|_1\geq c\|\vD\bar \vw\|_2= c\|\vD_1\vw\|_2.
\end{align*}
\qed


One may wonder how strong this newly introduced \snspD\  is, or how much stronger it is than \nspD. The following theorem claims that these two conditions are completely the same if we are in a real vector space.


\begin{theorem}\label{thm_equiv}
A matrix $\vA$ satisfying \nspD\ is equivalent to $\vA$ satisfying \snspD\ with the same order when $\vA\in\R^{m\times d}$ and $ \vD\in\R^{d\times n}$.
\end{theorem}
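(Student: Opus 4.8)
The plan is to prove the two implications separately. The forward direction \snspD$\Rightarrow$\nspD\ is immediate (and already noted in the text): for any $\vv\in\vD^{-1}(\ker\vA\setminus\{0\})$ one has $\|\vD\vv\|_2>0$, so the defining inequality of \snspD\ gives $\|\vv_{T^c}\|_1-\|\vv_T+\vu\|_1\ge c\|\vD\vv\|_2>0$, which is exactly \nspD. All the real work is in \nspD$\Rightarrow$\snspD, i.e.\ in producing a uniform constant $c$. Since there are only finitely many index sets $T$ with $|T|\le s$, I would reduce to producing, for each such $T$, a constant $c_T>0$ with $\|\vv_{T^c}\|_1-\min_{\vu\in\ker\vD}\|\vv_T+\vu\|_1\ge c_T\|\vD\vv\|_2$ for every $\vv\in\ker(\vA\vD)$, and then take $c=\min_{|T|\le s}c_T$. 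Writing $h(\vw)=\min_{\vu\in\ker\vD}\|\vw+\vu\|_1$ and $F_T(\vv)=\|\vv_{T^c}\|_1-h(\vv_T)$, the quantity $F_T$ is positively homogeneous of degree $1$, satisfies $F_T\ge0$ everywhere, and, by \nspD, satisfies $F_T(\vv)>0$ whenever $\vD\vv\in\ker\vA\setminus\{0\}$. What must be shown is the quantitative lower bound $F_T(\vv)\ge c_T\|\vD\vv\|_2$.

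The crux is to control the freedom to add vectors of $\ker\vD$. I would pass to the finite-dimensional subspace $W:=\mathrm{range}(\vD)\cap\ker\vA$ of $\R^d$ and the fiber-infimum $G_T(\vz):=\inf\{F_T(\vv):\vD\vv=\vz\}$, so that $F_T(\vv)\ge G_T(\vD\vv)$ and $G_T$ is again positively homogeneous of degree $1$. The key step is to show $G_T(\vz)>0$ for every $\vz\in W\setminus\{0\}$. Here is where the real field is essential: over $\R$, $\|\cdot\|_1$ is polyhedral, so $h$ is a finite polyhedral convex function and $F_T=\|\vv_{T^c}\|_1-h(\vv_T)$ is a difference of polyhedral convex functions, i.e.\ a continuous piecewise-linear function on $\R^n$. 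Restricted to the affine fiber $\{\vD\vv=\vz\}$ it is piecewise-linear and bounded below by $0$, so its infimum is attained (on each affine cell an affine function bounded below attains its minimum, and there are finitely many cells). By \nspD\ the attained value is strictly positive, which gives $G_T(\vz)>0$.

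To upgrade this pointwise positivity to a uniform bound I would use compactness on $W$. The function $G_T$ is lower semicontinuous: the epigraph of $F_T$ is a finite union of polyhedra, hence closed, and the epigraph of $G_T$ is its image under the linear map $(\vv,t)\mapsto(\vD\vv,t)$, which is again a finite union of polyhedra and hence closed. Minimizing the lower semicontinuous, degree-$1$ homogeneous $G_T$ over the compact unit sphere $\{\vz\in W:\|\vz\|_2=1\}$ yields a minimizer at which $G_T>0$ by the previous step; call this minimum $c_T>0$. Homogeneity then gives $G_T(\vz)\ge c_T\|\vz\|_2$ for all $\vz\in W$, whence $F_T(\vv)\ge G_T(\vD\vv)\ge c_T\|\vD\vv\|_2$ for every $\vv\in\ker(\vA\vD)$; the degenerate case $\vv\in\ker\vD$ is automatically covered, since then $\|\vD\vv\|_2=0$ while $F_T(\vv)\ge0$. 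Setting $c=\min_{|T|\le s}c_T$ establishes \snspD.

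The main obstacle is exactly the non-compactness created by $\ker\vD$: each fiber $\{\vD\vv=\vz\}$ is an unbounded affine subspace and $F_T$ need not be coercive along $\ker\vD$, so the pointwise positivity $F_T>0$ does not by itself force the fiber-infimum $G_T(\vz)$ to be positive, nor does a naive normalization $\|\vD\vv\|_2=1$ give a compact domain. The resolution is the polyhedral, piecewise-linear structure of the real $\ell_1$ norm, which is what guarantees both that the fiber-infimum is attained and that $G_T$ is lower semicontinuous. Over $\C$ the norm $\|\cdot\|_1$ is a sum of Euclidean norms rather than a polyhedral function, the fiber-infimum need not be attained, and one expects the equivalence to break down — consistent with the theorem being stated only for the real case.
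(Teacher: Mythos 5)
Your argument is correct in substance and reaches the conclusion of Theorem \ref{thm_equiv} by a genuinely different route. The paper also reduces the problem to bounding $\sup_{\vu\in\ker\vD}\bigl(\|\vw_{T^c}\|_1-\|\vw_T+\vu\|_1\bigr)/\|\vD\vw\|_2$ away from zero and also identifies the non-compactness created by $\ker\vD$ as the obstacle, but it resolves it sequentially: writing $\vw=t\vv+\vu$ with $t\vv\perp\ker\vD$, it fixes $\vv$, supposes a sequence $(\vu_i,t_i)$ drives the quotient to $0$, passes to an orthant $O$ of $\R^n$, expands $\vu_i-\vu_0$ in the extremal rays of the polyhedral cone $\ker\vD\cap O$, and derives a contradiction from a convex-combination and sign-matching inequality (Lemmas \ref{lem_comp}, \ref{lem1}, \ref{lem_imp}), finishing with lower semicontinuity in $\vv$ over the compact set $(\ker\vD)^{\perp}\cap\Sb^{n-1}$. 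You instead push everything down to the compact unit sphere of $W=\mathrm{range}(\vD)\cap\ker\vA$ via the fiber infimum $G_T(\vz)=\inf\{F_T(\vv):\vD\vv=\vz\}$ and invoke two standard polyhedral facts: an affine function bounded below on a polyhedron attains its minimum (giving attainment of the fiber infimum, hence positivity by \nspD), and the linear image of a finite union of polyhedra is closed (giving lower semicontinuity of $G_T$). Your route is shorter, avoids the delicate sign bookkeeping, and makes completely transparent where the real field enters --- both proofs hinge on the polyhedral nature of the real $\ell_1$ ball and neither survives over $\C$.

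One correction is needed: the claim that $F_T\geq 0$ everywhere is false; for $\vv$ supported on $T$ with $\vv\notin\ker\vD$ one has $F_T(\vv)=-\min_{\vu\in\ker\vD}\|\vv_T+\vu\|_1<0$. Fortunately you only invoke it where weaker, true statements suffice. On a fiber $\{\vD\vv=\vz\}$ the bound $F_T(\vv)\geq -\min_{\vu\in\ker\vD}\|\vv+\vu\|_1$ holds, and the right-hand side is constant on the fiber, which is all that the attainment argument requires (boundedness below, not nonnegativity). For the degenerate case $\vv\in\ker\vD$, the choice $\vu=-\vv$ gives $\min_{\vu\in\ker\vD}\|\vv_T+\vu\|_1\leq\|\vv_{T^c}\|_1$ and hence $F_T(\vv)\geq 0$ there, which is all that case requires. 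With those two observations inserted, the proof is complete.
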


 We postpone the proof to the last section because it is rather involved.
Briefly speaking, we need the function $f(\vv):=(\|\vv_{T^c}\|_1-\|\vv_T+\vu\|_1)/\|\vD\vv\|_2$ to be bounded away from zero. In the case when $\vD$ is a basis, the justification is simple since we can restrict $f$ to a compact set $\ker \vA\backslash\{0\}\cap\Sb^{d-1} $ and apply the extreme value theorem. This is essentially the key step in the stability analysis of \cite{ACP_Samp}.
 In the general frame case $\vD^{-1}(\ker \vA\backslash\{0\})\cap \Sb^{n-1}$ is no longer a compact set, therefore other constructions to overcome this difficulty are necessary.
We conjecture that this theorem is also true for the complex signals, but our proof techniques cannot be generalized to that case.

This theorem indicates that in a real vector space, the \nspD\ is  also  sufficient to guarantee the stable recovery of almost frame-sparse signals under the perturbation of measurements.


\section{\nspD, \snspD, and NSP}\label{sec_gap}
In this section, we explore the differences between $\vA$ having \nspD, $\vA$ having \snspD, and $\vA\vD$ having NSP. In the real case, we have the first two conditions equate to each other, and the third condition being stronger, but we have not addressed  how much stronger.

In general, it is expected that $\vA\vD$ having NSP is stronger than $\vA$ having \snspD. In fact, the following proposition provides a stronger statement.
\begin{proposition}\label{fullcoro}
If $\vA\vD$ has NSP, then there exists a positive constant $c$ such that for every $\vv\in\ker (\vA\vD)$,
\begin{equation}\label{full2}
\|\vv_{T^c}\|_1-\|\vv_T\|_1\geq c\|\vv\|_2.
\end{equation}
Consequently, $\vA\vD$ having NSP is stronger than $\vA$ having \snspD.
\end{proposition}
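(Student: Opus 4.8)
The plan is to upgrade the strict inequality in the definition of NSP to a uniform quantitative bound by a standard compactness-plus-homogeneity argument, exploiting the crucial fact that here the natural normalization is $\|\vv\|_2$ (rather than $\|\vD\vv\|_2$), which keeps the relevant slice of the kernel compact. For $\vv\in\ker(\vA\vD)$ define
\[
f(\vv)=\min_{|T|\leq s}\bigl(\|\vv_{T^c}\|_1-\|\vv_T\|_1\bigr),
\]
the worst case over all admissible supports $T$, which is attained when $T$ indexes the $s$ largest entries of $\vv$ in magnitude (since $\|\vv_{T^c}\|_1-\|\vv_T\|_1=\|\vv\|_1-2\|\vv_T\|_1$). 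Both \eqref{full2} and $f$ are positively homogeneous of degree one, so it suffices to produce a constant $c>0$ with $f(\vv)\geq c$ for all $\vv$ in $K:=\ker(\vA\vD)\cap\Sb^{n-1}$; the general inequality then follows by rescaling an arbitrary nonzero kernel vector to unit length.

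To obtain such a $c$, I would first note that $f$ is continuous, being the minimum of the finitely many continuous maps $\vv\mapsto\|\vv_{T^c}\|_1-\|\vv_T\|_1$. The set $K$ is the intersection of the closed subspace $\ker(\vA\vD)$ with the unit sphere, hence compact. By the NSP hypothesis, $f(\vv)>0$ for every nonzero $\vv\in\ker(\vA\vD)$, in particular on all of $K$, so the extreme value theorem yields $c:=\min_{\vv\in K}f(\vv)>0$. For any $\vv\in\ker(\vA\vD)$ and any $|T|\leq s$, writing $\vv=\|\vv\|_2\,\hat\vv$ with $\hat\vv\in K$ and using homogeneity gives
\[
\|\vv_{T^c}\|_1-\|\vv_T\|_1=\|\vv\|_2\bigl(\|\hat\vv_{T^c}\|_1-\|\hat\vv_T\|_1\bigr)\geq\|\vv\|_2\,f(\hat\vv)\geq c\|\vv\|_2,
\]
which is exactly \eqref{full2}.

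For the consequence, I would read off \snspD\ by taking the trivial witness $\vu=0\in\ker\vD$ in \eqref{equ_nsp}. Indeed \eqref{full2} gives $\|\vv_{T^c}\|_1-\|\vv_T+0\|_1\geq c\|\vv\|_2$, and converting the right-hand side via the operator-norm bound $\|\vD\vv\|_2\leq\sigma_{\max}(\vD)\|\vv\|_2$, where $\sigma_{\max}(\vD)>0$ is the largest singular value of $\vD$, yields
\[
\|\vv_{T^c}\|_1-\|\vv_T+0\|_1\geq\frac{c}{\sigma_{\max}(\vD)}\|\vD\vv\|_2,
\]
so $\vA$ has \snspD\ with constant $c/\sigma_{\max}(\vD)$. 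That this implication is strict follows from Proposition \ref{pro_snsp}: appending a repeated column to $\vD$ preserves \snspD, yet forces a vector of the form $\ve_i-\ve_j$ into $\ker(\vA\vD)$, which violates NSP.

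I do not expect a serious obstacle; the only point demanding care is the choice of normalization. Normalizing by $\|\vv\|_2$ makes $K$ compact and lets the extreme value theorem do all the work, whereas the $\|\vD\vv\|_2$ normalization native to \snspD\ destroys compactness of the relevant set — precisely the difficulty that makes Theorem \ref{thm_equiv} delicate and that is sidestepped entirely in this one-directional statement.
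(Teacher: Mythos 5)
Your proof is correct and follows essentially the same route as the paper: normalize to the compact set $\ker(\vA\vD)\cap\Sb^{n-1}$, invoke continuity and the extreme value theorem, and rescale by homogeneity. Your handling of the uniformity over supports $T$ (via the minimum over finitely many $T$) and the explicit conversion $\|\vD\vv\|_2\leq\sigma_{\max}(\vD)\|\vv\|_2$ for the \snspD\ consequence merely make explicit steps the paper leaves implicit.
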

\proof
If $\vv=0$, then \eqref{full2} is true. So we can assume that $\vv\neq0$.

We define a function
\[
f(\vv)=\frac{\|\vv_{T^c}\|_1-\|\vv_T\|_1}{\|\vv\|_2}
\]
on the set $\ker \vA\vD\setminus\{0\}$. This function is strictly positive by definition of NSP.

 The continuous $f(\vv)$ attains its minimum on the compact set $B=\{\vv: \vv\in \ker \vA\vD, \ \|\vv\|_2=1\}$, i.e. $f(\vv)\geq c>0$ for $\vv\in B$. For any $\vv\in\ker \vA\vD\setminus\{0\}, \frac{\vv}{\|\vv\|_2}\in B$, so $$f(\vv)=f(\frac{\vv}{\|\vv\|_2})\geq c.$$
\qed


Thus, as one would have guessed, the order of these three conditions are
$$\vA\vD \text{ has NSP }\Longrightarrow \vA \text{ has \snspD\ }\Longrightarrow \vA \text{ has \nspD\ }.$$
We would like to view the last two conditions being almost equal, so the question is what exactly is the gap between the first and the third condition. If one looks at the definition of these two conditions, the superfluous difference is that the third condition has a flexibility in the null space of $\vD$. The following theorem describes the behavior of the null space of $\vD$ as a difference of these two conditions.

\begin{theorem}\label{thm_gap}
Fix a sparsity level $s$, if for any $\vu\in \ker \vD$ and any index set $|T|\leq s$, there exists a $\vtu\in \ker \vD$, such that
\begin{equation*}
\|\vu_T+\vtu\|_1< \|\vu_{T^c}\|_1,
\end{equation*}
then $\vA$ having \nspD\ is equivalent to $\vA\vD$ having NSP of the same order $s$.
\end{theorem}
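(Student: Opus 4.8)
The plan is to prove the two implications separately, noting that only one of them carries real content. The direction ``$\vA\vD$ has NSP $\Rightarrow$ $\vA$ has \nspD'' needs no hypothesis at all: every $\vv\in\vD^{-1}(\ker\vA\setminus\{0\})$ lies in $\ker(\vA\vD)\setminus\{0\}$, so NSP of $\vA\vD$ already gives $\|\vv_T\|_1<\|\vv_{T^c}\|_1$, and one simply takes $\vu=0$ in the definition of \nspD. All the work is in the converse, and the obstacle there is structural: $\vD$-NSP, by definition, quantifies only over $\vv\notin\ker\vD$, yet $\ker\vD\subseteq\ker(\vA\vD)$, so $\vA\vD$ having NSP \emph{forces} $\|\vv_T\|_1<\|\vv_{T^c}\|_1$ for every $\vv\in\ker\vD\setminus\{0\}$ as well. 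Thus the hypothesis must be strong enough to supply genuine NSP on $\ker\vD$ by itself, even though on its face it only asserts the weaker ``flexible'' inequality $\|\vu_T+\vtu\|_1<\|\vu_{T^c}\|_1$.

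My first and main step would therefore be a lemma: the hypothesis is equivalent to ``$\ker\vD$ has NSP of order $s$'', i.e.\ $\|\vu_T\|_1<\|\vu_{T^c}\|_1$ for all $\vu\in\ker\vD\setminus\{0\}$ and $|T|\le s$. One direction is trivial (take $\vtu=0$). For the other I would fix $T$, note that $\min_{\vtu\in\ker\vD}\|\vu_T+\vtu\|_1=\dist(\vu_T,\ker\vD)$, and rephrase NSP on $\ker\vD$ as $M:=\max\{\|\vu_T\|_1:\vu\in\ker\vD,\ \|\vu\|_1=1\}<\tfrac12$, the maximum being attained at some $\vu^\ast$ by compactness. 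The key observation is that, because $\vu^\ast\in\ker\vD$ makes $\vu^\ast_T\equiv-\vu^\ast_{T^c}\pmod{\ker\vD}$, one has $\dist(\vu^\ast_T,\ker\vD)=\dist(\vu^\ast_{T^c},\ker\vD)$; the hypothesis then yields $\vt\in\ker\vD$ with $\|\vu^\ast_{T^c}+\vt\|_1<\|\vu^\ast_{T^c}\|_1$. Setting $\vu'=\vu^\ast+\vt$ and splitting $\|\vu^\ast_{T^c}+\vt\|_1$ into its $T$- and $T^c$-parts gives $\|\vu'_{T^c}\|_1<(1-M)-\|\vt_T\|_1$ and $\|\vu'_T\|_1\ge M-\|\vt_T\|_1$; a short estimate using $M\ge\tfrac12$ (so that $(1-M)/M\le1$) shows $\|\vu'_T\|_1/\|\vu'\|_1>M$, contradicting the maximality of $\vu^\ast$. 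Hence $M<\tfrac12$. This extremal/compactness argument is the one place where the gap between ``$\ell_1$-distance to $\ker\vD$'' and ``$\ell_1$-norm'' is actually closed, and it is where I expect the real difficulty to sit.

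With the lemma in hand the converse is quick. Given $\vv\in\ker(\vA\vD)\setminus\{0\}$ and $|T|\le s$: if $\vv\in\ker\vD$ the lemma gives $\|\vv_T\|_1<\|\vv_{T^c}\|_1$ outright; otherwise $\vv\in\vD^{-1}(\ker\vA\setminus\{0\})$ and \nspD\ produces $\vu\in\ker\vD$ with $\|\vv_T+\vu\|_1<\|\vv_{T^c}\|_1$. Writing $\|\vv_T+\vu\|_1=\|\vv_T+\vu_T\|_1+\|\vu_{T^c}\|_1$ and combining the triangle inequality with the NSP bound $\|\vu_T\|_1\le\|\vu_{T^c}\|_1$ for $\vu\in\ker\vD$ supplied by the lemma, I obtain
\[\|\vv_T\|_1\le\|\vv_T+\vu_T\|_1+\|\vu_T\|_1\le\|\vv_T+\vu_T\|_1+\|\vu_{T^c}\|_1=\|\vv_T+\vu\|_1<\|\vv_{T^c}\|_1,\]
which is exactly the NSP inequality for $\vA\vD$. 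Thus $\vA$ having \nspD\ implies $\vA\vD$ has NSP, completing the equivalence.
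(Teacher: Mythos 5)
Your proof is correct, and it takes a genuinely different route from the paper's. The paper derives the hard implication indirectly: it first uses a compactness/extreme-value argument on $\ker(\vA\vD)\cap\Sb^{n-1}$ to upgrade \nspD\ plus the kernel hypothesis to the quantitative inequality $\|\vv_{T^c}\|_1-\|\vv_T+\vu\|_1\geq c\|\vv\|_2$, then shows this yields a stability estimate for the \emph{coefficients} recovered by \PDe, and finally invokes the known equivalence (from the cited reference) between that stability and $\vA\vD$ having NSP. You instead stay entirely at the level of strict null-space inequalities: your extremal lemma — that the ``flexible'' condition $\min_{\vtu\in\ker\vD}\|\vu_T+\vtu\|_1<\|\vu_{T^c}\|_1$ on $\ker\vD$ already forces genuine NSP on $\ker\vD$, proved by maximizing $\|\vu_T\|_1$ on the $\ell_1$-sphere of $\ker\vD$ and using $\dist(\vu^\ast_T,\ker\vD)=\dist(\vu^\ast_{T^c},\ker\vD)$ to perturb a maximizer into a better one — is the new ingredient, after which NSP of $\vA\vD$ follows by a two-line triangle inequality combining \nspD\ (for $\vv\notin\ker\vD$) with the lemma (for the $\vu\in\ker\vD$ it produces, and for $\vv\in\ker\vD$ itself). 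I checked the key estimate: with $M\geq\tfrac12$ one gets $(1-M)\|\vu'_T\|_1-M\|\vu'_{T^c}\|_1>(2M-1)\|\vt_T\|_1\geq0$, which indeed forces $\vu'\neq0$ and contradicts maximality. Your argument is more elementary and self-contained (no appeal to the external stability--NSP equivalence), works equally in $\R$ and $\C$, and isolates the clean intermediate fact that the theorem's hypothesis is equivalent to $\ker\vD$ itself satisfying NSP; what the paper's route buys is the stronger quantitative conclusion \eqref{full3} and the attendant stability theorem (Lemma \ref{lem_sta}), which it wants for its own sake elsewhere in the exposition. One cosmetic point: like the paper, you must read the kernel hypothesis as quantifying over $\vu\in\ker\vD\setminus\{0\}$, since for $\vu=0$ the stated strict inequality is impossible.
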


This theorem is a result of the following two lemmas.
\begin{lemma}
Assume $\vA$ satisfies $s$-\nspD. If in addition, for any $\vu\in \ker \vD$ and any index set $|T|\leq s$, there exists a $\vtu\in \ker \vD$, such that
\begin{equation}\label{equ_ker}
\|\vu_T+\vtu\|_1< \|\vu_{T^c}\|_1,
\end{equation}
then for any $\vv\in\ker(\vA\vD)$ and any index set $|T|\leq s$, there exists $\vu\in\ker \vD$ such that
\begin{equation}\label{full3}
\|\vv_{T^c}\|_1-\|\vv_T+\vu\|_1\geq c\|\vv\|_2.
\end{equation}
\end{lemma}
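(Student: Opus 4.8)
The plan is to reduce the uniform inequality \eqref{full3} to a compactness argument on the unit sphere of $\ker(\vA\vD)$, exactly as in Proposition \ref{fullcoro}, the only new ingredient being the freedom to optimize over $\vu\in\ker\vD$. For each index set $T$ with $|T|\le s$ I would introduce the value function
\[
g_T(\vv)=\|\vv_{T^c}\|_1-\min_{\vu\in\ker\vD}\|\vv_T+\vu\|_1 .
\]
The inner minimum is attained: on the subspace $\ker\vD$ the map $\vu\mapsto\|\vv_T+\vu\|_1$ is convex and coercive, since $\|\vv_T+\vu\|_1\ge\|\vu\|_1-\|\vv_T\|_1$. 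Hence $g_T$ is well defined, it equals the supremum over $\vu\in\ker\vD$ of the left-hand side of \eqref{full3}, and any (near-)optimal $\vu$ will be the one we exhibit in \eqref{full3}.

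Next I would record two structural properties of $g_T$. First, because $\ker\vD$ is a subspace the inner minimization rescales, so $g_T$ is positively homogeneous of degree one, $g_T(\lambda\vv)=\lambda\,g_T(\vv)$ for $\lambda>0$. Second, $g_T$ is continuous: writing $h(\vw)=\min_{\vu\in\ker\vD}\|\vw+\vu\|_1$, this is the quotient $\ell_1$-seminorm of $\vw$ modulo $\ker\vD$ and is therefore $1$-Lipschitz, $|h(\vw_1)-h(\vw_2)|\le\|\vw_1-\vw_2\|_1$, so $\vv\mapsto h(\vv_T)$ and hence $g_T$ depend continuously on $\vv$.

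The crux is to show $g_T(\vv)>0$ on the whole punctured kernel $\ker(\vA\vD)\setminus\{0\}$, and this is where both hypotheses enter, split according to whether $\vv\in\ker\vD$. If $\vD\vv\neq0$, then $\vv\in\vD^{-1}(\ker\vA\setminus\{0\})$, so $s$-\nspD\ supplies some $\vu\in\ker\vD$ with $\|\vv_T+\vu\|_1<\|\vv_{T^c}\|_1$, giving $g_T(\vv)>0$. If instead $\vv\in\ker\vD\setminus\{0\}$, then the supplementary assumption \eqref{equ_ker} supplies $\vtu\in\ker\vD$ with $\|\vv_T+\vtu\|_1<\|\vv_{T^c}\|_1$, again forcing $g_T(\vv)>0$. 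The point of hypothesis \eqref{equ_ker} is precisely to rescue positivity at those $\vv$ where bare \nspD\ says nothing, namely on $\ker\vD$.

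Finally I would restrict $g_T$ to the compact set $B=\{\vv\in\ker(\vA\vD):\|\vv\|_2=1\}$. Being continuous and strictly positive there, it attains a minimum $c_T>0$ by the extreme value theorem; setting $c=\min_{|T|\le s}c_T>0$ over the finitely many index sets and invoking homogeneity gives $g_T(\vv)\ge c\|\vv\|_2$ for every $\vv\in\ker(\vA\vD)$ and every $|T|\le s$ (the case $\vv=0$ being trivial with $\vu=0$), which is \eqref{full3}. I expect the main obstacle to be the handling of the value function $g_T$: verifying that the minimization over the noncompact subspace $\ker\vD$ is attained and varies continuously with $\vv$, and, more importantly, ensuring positivity holds uniformly across both regimes $\vv\in\ker\vD$ and $\vv\notin\ker\vD$—the latter being exactly the gap that \eqref{equ_ker} is introduced to close.
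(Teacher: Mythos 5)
Your proposal is correct and follows essentially the same route as the paper: define the value function obtained by optimizing over $\ker\vD$, establish its strict positivity by splitting $\ker(\vA\vD)\setminus\{0\}$ into the case $\vD\vv\neq 0$ (handled by $\vD$-NSP) and the case $\vv\in\ker\vD\setminus\{0\}$ (handled by the supplementary hypothesis), then combine continuity, compactness of the unit sphere of $\ker(\vA\vD)$, and homogeneity. Your verification that the inner minimum is attained and that the quotient $\ell_1$-seminorm is $1$-Lipschitz simply fills in details the paper asserts without proof.
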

\proof
If $\vv=0$, then \eqref{full3} is true by choosing $\vu=0$. So we can assume  that $\vv\neq0$.

Let
\[
f(\vv)=\sup\limits_{\vu\in \ker \vD}\frac{\|\vv_{T^c}\|_1-\|\vv_T+\vu\|_1}{\|\vv\|_2}.
\]
be a function defined on the set $\ker \vA\vD\setminus\{0\}$. The argument is rather similar to that of Proposition \ref{fullcoro}. The definition of $s$-$\vD$-NSP and \eqref{equ_ker} imply that $f$ is strictly positive on its domain. The continuous $f(\vv)$ attains its minimum on the compact set $B=\{\vv: \vv\in \ker \vA\vD, \ \|\vv\|_2=1\}$, i.e. $f(\vv)\geq 2c>0$ for $\vv\in B$. For any $\vv\in\ker \vA\vD\setminus\{0\}$, normalizing $\vv$ to $\frac{\vv}{\|\vv\|_2}\in B$ completes the proof.
\qed


Notice that \eqref{full3} is a very strong property. It is stronger than \snspD. Not only will \eqref{full3} lead to a stable reconstruction of the signal, it also guarantees the accurate reconstruction of the representation $\vx_0$.

\begin{lemma}\label{lem_sta}
If \eqref{full3} is satisfied, then
 any minimizer $\hat \vx$ of \PDe\ satisfies
$$\|\hat\vx-\vx_0\|_2\leq \frac{2}{c}\sigma_k(\vx_0)+2\nu_{\vA\vD}\epsilon.$$
\end{lemma}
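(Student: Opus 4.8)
The plan is to run an argument parallel to the proof of Theorem~\ref{thm_sta}, but carried out directly in coefficient space and with the SVD splitting taken with respect to $\vA\vD$ rather than the two successive decompositions (first by $\vA$, then by $\vD$) used there. The payoff of working with \eqref{full3} is that it controls $\|\vv\|_2$, the $\ell_2$ norm of the full coefficient error, rather than $\|\vD\vv\|_2$; this is precisely why we recover the representation $\vx_0$ and not merely the signal $\vz_0$. Concretely, let $\hat\vx$ be a minimizer of \PDe\ and set $\vv=\hat\vx-\vx_0$. Since $\vz_0=\vD\vx_0$ and $\|\vw\|_2\le\epsilon$, the point $\vx_0$ is feasible, so $\|\hat\vx\|_1\le\|\vx_0\|_1$, and comparing both feasible points to $\vy$ gives $\|\vA\vD\vv\|_2\le 2\epsilon$. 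Applying Lemma~\ref{SVD} with $\vM=\vA\vD$, I would decompose $\vv=\va+\vb$ with $\va\in\ker(\vA\vD)$, $\vb\perp\ker(\vA\vD)$, and $\|\vb\|_2\le\frac{1}{\nu_{\vA\vD}}\|\vA\vD\vv\|_2\le\frac{2\epsilon}{\nu_{\vA\vD}}$.

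Let $T$ index the $k$ largest-magnitude entries of $\vx_0$ (so $|T|\le s$). Since $\va\in\ker(\vA\vD)$, hypothesis \eqref{full3} furnishes a $\vu\in\ker\vD$ with $\|\va_{T^c}\|_1-\|\va_T+\vu\|_1\ge c\|\va\|_2$. The first step is a purely algebraic lower bound: splitting the $\ell_1$ norms over $T$ and $T^c$ and applying the reverse triangle inequality on the $T$-block exactly as in \eqref{equ1}, I would obtain
\begin{equation*}
\|\va+\vx_{0,T}\|_1-\|\vx_{0,T}-\vu\|_1\ \ge\ \|\va_{T^c}\|_1-\|\va_T+\vu\|_1\ \ge\ c\|\va\|_2 .
\end{equation*}
The second step extracts the matching upper bound from minimality. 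The crucial observation is that $\vx_0-\vu$ is itself feasible: since $\vu\in\ker\vD\subseteq\ker(\vA\vD)$ we have $\vA\vD(\vx_0-\vu)=\vA\vD\vx_0$, whence $\|\vA\vD(\vx_0-\vu)-\vy\|_2=\|\vw\|_2\le\epsilon$ and so $\|\hat\vx\|_1\le\|\vx_0-\vu\|_1$. Writing $\hat\vx=\vx_0+\va+\vb$, using $\|\hat\vx\|_1\ge\|\vx_0+\va\|_1-\|\vb\|_1$, and splitting over $T,T^c$ as in \eqref{equ2}, this yields
\begin{equation*}
\|\va+\vx_{0,T}\|_1-\|\vx_{0,T}-\vu\|_1\ \le\ 2\|\vx_{0,T^c}\|_1+\|\vb\|_1\ =\ 2\sigma_k(\vx_0)+\|\vb\|_1 .
\end{equation*}

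Combining the two displays gives $c\|\va\|_2\le 2\sigma_k(\vx_0)+\|\vb\|_1$, and then $\|\vv\|_2\le\|\va\|_2+\|\vb\|_2$ together with $\|\vb\|_1\le\sqrt{n}\,\|\vb\|_2$ and the bound $\|\vb\|_2\le\frac{2\epsilon}{\nu_{\vA\vD}}$ delivers the claimed estimate (with the perturbation contribution coming entirely from the term $\vb$). I expect the main obstacle to be the bookkeeping that couples the free vector $\vu\in\ker\vD$ produced by \eqref{full3} to the minimality comparison: the very same $\vu$ must appear on both sides, which is legitimate only because $\vx_0-\vu$ remains feasible, and the $T/T^c$ splittings must be arranged so that the $\vu_{T^c}$ and $\vx_{0,T}$ contributions cancel as in \eqref{equ1} and \eqref{equ2}. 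Once that alignment is in place, the remaining manipulations are routine.
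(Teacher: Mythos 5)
Your proof is correct and follows essentially the same route as the paper's: decompose $\hat\vx-\vx_0$ via Lemma~\ref{SVD} applied to $\vA\vD$ into a $\ker(\vA\vD)$ part and an orthogonal part, apply \eqref{full3} to the kernel part, and compare against the feasible competitor $\vx_0-\vu$ exactly as in \eqref{equ1}--\eqref{equ2}. The only difference is that you correctly retain the $\|\vb\|_1\le\sqrt n\,\|\vb\|_2$ contribution in the minimality comparison, which the paper's proof silently drops (it writes $\|\hat\vx\|_1=\|\vv+\vx_0\|_1$ as if the orthogonal component were zero), so your noise term carries the constant $\bigl(1+\tfrac{\sqrt n}{c}\bigr)\tfrac{2}{\nu_{\vA\vD}}$ rather than the one displayed in the statement (which is itself a typo, having $\nu_{\vA\vD}$ in the numerator); this affects only constants, not the form of the bound.
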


\proof
 The proof of this Lemma is rather similar to that of Theorem \ref{thm_sta}.
Define $\vh=\tilde\vx-\vx_0$, then $\|\vA\vD\vh\|_2\le 2\epsilon$. Decompose $\vh=\vv+\vg$ where $\vv\in\ker\vA\vD$ and $\vg\perp\ker\vA\vD$. Therefore by Lemma \ref{SVD}, $\|\vg\|_2\le\frac{2\epsilon}{2\nu_{\vA\vD}}$.

The assumption \eqref{full3} implies there exists $\vu\in\ker \vD$ such that
$$\|\vv_{T^c}\|_1-\|\vv_T+\vu\|_1\geq c\|\vv\|_2. $$

The fact that $\hat \vx$ is a minimizer indicates
$$\|-\vu+\vx_{0,T}\|_1+\|\vx_{0,T^c}\|_1\geq\|-\vu+\vx_0\|_1\geq\|\hat{\vx}\|_1= \|\vv+\vx_0\|_1
\geq\|\vv+\vx_{0,T}\|_1-\|\vx_{0,T^c}\|_1.$$

Combining above and the same argument as in \eqref{equ1}, we arrive at
$$c\|\vv\|_2\le2\|\vx_{0,T^c}\|_1.$$
hence
$$\|\vh\|_2\le\|\vv\|_2+\|\vg\|_2\le\frac{2\|\vx_{0,T^c}\|_1}{c}+\frac{2\epsilon}{2\nu_{\vA\vD}}.$$
\qed

The conclusion of Lemma \ref{lem_sta} is equivalent to $\vA\vD$ having NSP by~\cite{ACP_Samp}.

\section{Full spark case}\label{sec:fullspark}

This section focuses on an  important question for the frame-based compressed sensing: Can highly coherent dictionaries be admissible? We will see that the null space property plays a key role.

It may seem at first that the answer is positive because we have constructed highly coherent admissible frames in Section \ref{sec_exa}. However, these examples only represent a small class of frames. The key feature the frames have is that there exist small number of columns that are linearly dependent, in which case we say the frames have small spark (see end of Section \ref{sec_basic}). Let us look at the following example where the frame has very big spark (in fact, it achieves the maximum spark $d+1$). 

\begin{example}\label{exa}
Suppose $\vI$ is the identity matrix in $\R^d$, and $\vD$ is a frame formed by concatenating $\vI$ with another vector $\vw = [1+\epsilon, \epsilon, \cdots, \epsilon]^T$. Assume $\epsilon>0$ is small so that  $\vw$ is strongly correlated with the first column of $\vI$. It is easy to see that $\text{ker}(\vD)$ is one dimensional and generated by $\vu=(\vw^T,-1)^T$. We assume $\epsilon$ is small enough such that for some index set $|T|\geq 2$ containing the first and the last indices, and for any $\vu\in\ker\vD$, $\|\vu_T\|_1>\|\vu_{T^c}\|_1$.  We now show that such a frame is $|T|$-inadmissible.

Assume to the contrary that there exists a non-trivial sensing matrix $\vA$ satisfying $|T|$-\snspD, hence $\vD^{-1}(\ker\vA\backslash\{0\})$ is not empty.   Choose $\vv\in \vD^{-1}(\ker\vA\backslash\{0\})$, and an $\alpha$ such that $\alpha\|\vu\|_{\min} >\|\vv\|_\infty$. The \snspD\ of $\vA$ implies that for the vectors $\vv+\alpha \vu$, $-\vv+\alpha \vu$ $\in \vD^{-1}(\ker\vA\backslash\{0\})$, there exist $c_1$, $c_2\in \mathbb{R}$ such that
\[
\|\vv_T+\alpha \vu_T-c_1\vu\|_1<\|\vv_{T^c}+\alpha \vu_{T^c}\|_1,
\]
and
\[
\|-\vv_T+\alpha \vu_T-c_2 \vu\|_1<\|-\vv_{T^c}+\alpha \vu_{T^c}\|_1.
\]
 Adding up the above two inequalities,  we get
\begin{align}\label{eq:add}
\|\vv_T+\alpha \vu_T-c_1 \vu\|_1+\|-\vv_T+\alpha \vu_T-c_2 \vu\|_1
<\|\vv_{T^c}+\alpha \vu_{T^c}| _1+\|-\vv_{T^c}+\alpha \vu_{T^c}\|_1.
\end{align}
The right hand side of \eqref{eq:add} equals $2\alpha\|\vu_{T^c}\|_1$ by our choice of $\alpha$. The left hand side can be bounded from below:
\begin{align*}
&\|\vv_T+\alpha \vu_T-c_1 \vu\|_1+\|-\vv_T+\alpha \vu_T-c_2 \vu\|_1 \\
\geq& \|2\alpha \vu_T-c_1\vu-c_2\vu\|_1\\
=& |2\alpha-c_1-c_2|\|\vu_T\|_1+(|c_1+c_2|)\|\vu_{T^c}\|_1.
\end{align*}
Therefore
$$|2\alpha-c_1-c_2|\|\vu_T\|_1 <(2\alpha-|c_1+c_2|)\|\vu_{T^c}\|_1\leq |2\alpha-c_1-c_2|\|\vu_{T^c}\|_1,$$
or simply $\|\vu_T\|_1 <\|\vu_{T^c}\|_1$, which is a contradiction to our assumption on $\vu$.
\end{example}

Example \ref{exa} is not the most encouraging news for finding coherent and admissible frames. But one can still hope that this is due to the special construction of this frame and perhaps we are still able to find large class of highly coherent and admissible frames with a different structure. The following theorem says otherwise. In fact, its proof can be viewed as a generalization of Example \ref{exa}.

Before stating the theorem, we introduce a special kind of frame. A finite frame of $\C^d$ is called \emph{full spark} if its spark reaches the maximum value $d+1$. In other words, every $d$ columns of this matrix are linearly independent. In particular, the frame $\vD$ in Example \ref{exa} is full spark.

\begin{theorem}\label{thm_main}
The following conditions are equivalent if $\vD$ is full spark,
\begin{enumerate}
\item[(a)] $\vA\vD$ has $s$-\nsp;
\item[(b)] $\vA$ has $s$-\snspD;
\item[(c)] $\vA$ has $s$-\nspD.
\end{enumerate}
\end{theorem}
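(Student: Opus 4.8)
The plan is to prove the cycle by first disposing of the two easy implications for an arbitrary dictionary and then concentrating all of the full-spark content in $(c)\Rightarrow(a)$. The implications $(a)\Rightarrow(b)\Rightarrow(c)$ need no hypothesis on $\vD$. If $\vA\vD$ has $s$-NSP, then Proposition \ref{fullcoro} supplies a constant $c>0$ with $\|\vv_{T^c}\|_1-\|\vv_T\|_1\ge c\|\vv\|_2$ for all $\vv\in\ker(\vA\vD)$; since $\|\vv\|_2\ge\|\vD\vv\|_2/\|\vD\|$, choosing $\vu=0$ in \eqref{equ_nsp} gives $s$-\snspD. That \snspD\ implies \nspD\ is immediate from the definitions. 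So it remains to establish $(c)\Rightarrow(a)$ when $\vD$ is full spark.

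For this I would route everything through Theorem \ref{thm_gap}, whose hypothesis is the \emph{kernel flexibility} condition: for every $\vu\in\ker\vD$ and every $|T|\le s$ there is $\vtu\in\ker\vD$ with $\|\vu_T+\vtu\|_1<\|\vu_{T^c}\|_1$. This invites a dichotomy. If the condition holds, Theorem \ref{thm_gap} already yields $\vA$ has \nspD\ $\Leftrightarrow$ $\vA\vD$ has NSP, i.e. $(c)\Leftrightarrow(a)$. If it fails, there are $\vu_0\in\ker\vD$ and $|T|\le s$ with $\inf_{\vtu\in\ker\vD}\|\vu_{0,T}+\vtu\|_1\ge\|\vu_{0,T^c}\|_1$; taking $\vtu=0$ gives $\|\vu_{0,T}\|_1\ge\|\vu_{0,T^c}\|_1$, so the kernel vector $\vu_0\in\ker(\vA\vD)$ already witnesses the failure of NSP for $\vA\vD$, and $(a)$ is false. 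The remaining task in this degenerate branch is to show that $(c)$ is false too, so that both sides of the equivalence fail and $(c)\Leftrightarrow(a)$ still holds.

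The heart of the matter is therefore the statement: if $\vD$ is full spark and the kernel flexibility condition fails at some $(\vu_0,T)$, then no non-trivial $\vA$ satisfies \nspD\ (i.e. $\vD$ is $s$-inadmissible). This is the general version of Example \ref{exa}, and I would argue by contradiction. Assuming a non-trivial $\vA$ has \nspD, the range of $\vD$ is all of $\F^d$ and $\ker\vA\neq\{0\}$, so I can pick $\vv$ with $\vD\vv\in\ker\vA\setminus\{0\}$; then $\vv+\alpha\vu_0$ and $-\vv+\alpha\vu_0$ both lie in $\vD^{-1}(\ker\vA\setminus\{0\})$, since $\vD(\pm\vv+\alpha\vu_0)=\pm\vD\vv\ne0$. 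Applying \nspD\ to both with index set $T$, adding the two strict inequalities, and letting $\alpha\to\infty$ is the mechanism of Example \ref{exa}. Full spark enters precisely in controlling the kernel corrections: every nonzero element of $\ker\vD$ has at least $d+1$ nonzeros, so for $|T|\le s\le d$ the block $\vD_T$ is injective and any $\vtu\in\ker\vD$ obeys a reverse bound $\|\vtu_{T^c}\|_2\ge\kappa\|\vtu_T\|_2$ with $\kappa>0$, forcing a correction that nearly cancels $\alpha\vu_{0,T}$ on $T$ to pay a comparable price on $T^c$. In Example \ref{exa} this rigidity was free because $\ker\vD$ was one dimensional and the corrections were scalar multiples of the single generator, which is what let the averaged inequality collapse to $\|\vu_T\|_1<\|\vu_{T^c}\|_1$.

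The step I expect to be the main obstacle is exactly this last one. When kernel flexibility fails it fails as a boundary equality --- $0$ is the $\ell_1$-nearest point of $\ker\vD$ to $\vu_{0,T^c}$, so $\dist_{\ell_1}(\vu_{0,T^c},\ker\vD)=\|\vu_{0,T^c}\|_1$ --- and a naive limit in $\alpha$ merely reproduces this equality rather than a contradiction. Converting the strictness of the \nspD\ inequalities, together with the full-spark support bound, into a genuine strict contradiction, and doing so for arbitrarily high-dimensional $\ker\vD$ where the corrections are unconstrained kernel vectors rather than multiples of one generator, is the delicate part; it is analogous to the loss of compactness flagged in the discussion of Theorem \ref{thm_equiv}. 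Once inadmissibility is proved, the dichotomy above closes the equivalence $(a)\Leftrightarrow(b)\Leftrightarrow(c)$.
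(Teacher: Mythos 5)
Your overall architecture is sound and, at the level of logic, coincides with the paper's: the implications $(a)\Rightarrow(b)\Rightarrow(c)$ are handled exactly as you describe, and all of the full-spark content is concentrated in verifying the kernel-flexibility hypothesis of Theorem \ref{thm_gap}. Your ``degenerate branch'' claim --- that full spark together with the failure of kernel flexibility forces inadmissibility --- is precisely the contrapositive of the paper's Lemma \ref{fullspark} (full spark plus \nspD\ implies kernel flexibility), so the dichotomy is a harmless repackaging rather than a different route. The problem is that you have explicitly left this one nontrivial step unproved, and the mechanism you sketch for it (a reverse norm bound $\|\vtu_{T^c}\|_2\ge\kappa\|\vtu_T\|_2$ on kernel vectors, followed by a limit $\alpha\to\infty$) is not the one that closes the argument.

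The missing idea is a support-localization of $\vv$ using full spark. Fix $\vu\in\ker\vD\setminus\{0\}$ and $|T|\le s$. Full spark gives $|\supp\vu|\ge d+1$, hence $|\supp\vu_{T^c}|>d-|T|$; choose $G\subset\supp\vu_{T^c}$ with $|G|=d-|T|$, so $\vD_{G\cup T}$ is invertible. Starting from any $\vv_0\in\vD^{-1}(\ker\vA\setminus\{0\})$, replace it by the vector $\vv$ supported on $G\cup T$ with $\vv_{G\cup T}=\vD_{G\cup T}^{-1}\vD\vv_0$; then $\vD\vv=\vD\vv_0\in\ker\vA\setminus\{0\}$ and, crucially, $\supp\vv_{T^c}\subset\supp\vu_{T^c}$. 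Now pick a \emph{single} $\alpha$ with $\alpha\|\vu\|_{\min}>\|\vv\|_\infty$. On every coordinate of $T^c$ where $\vv$ is nonzero, $\vu$ is also nonzero and dominates, so the two right-hand sides of your added inequalities sum \emph{exactly} to $2\alpha\|\vu_{T^c}\|_1$; no limit is taken and the boundary-equality worry never arises. The triangle inequality on the left yields $\|2\alpha\vu_T+\vu_1+\vu_2\|_1<2\alpha\|\vu_{T^c}\|_1$, and dividing by $2\alpha$ exhibits $\vtu=(\vu_1+\vu_2)/(2\alpha)\in\ker\vD$ with $\|\vu_T+\vtu\|_1<\|\vu_{T^c}\|_1$, a strict inequality that directly contradicts the assumed failure $\inf_{\vtu\in\ker\vD}\|\vu_T+\vtu\|_1\ge\|\vu_{T^c}\|_1$. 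In particular the corrections $\vu_1,\vu_2$ need not be controlled at all --- they are simply absorbed into $\vtu$ --- so the high-dimensionality of $\ker\vD$ that you flag as the obstacle plays no role. Without this construction your proof is incomplete at its central step.
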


\begin{remark}
Full spark is not a strong assumption on frames. In fact, it is quite obvious that if we randomly choose the entries of $\vD$ according to any continuous distribution, then with probability 1 we will get a full spark dictionary. Also in ~\cite{BCM12}, it is proved that
full spark Parsevel frames are dense in the space of all Parsevel frames, and a large class of full spark Harmonic frames is also constructed in~\cite{BCM12}.
\end{remark}

Theorem \ref{thm_main} is a corollary of the following lemma and Theorem \ref{thm_gap}.

 \begin{lemma}\label{fullspark}
 If $\vA$ satisfies $s$-$\vD$-NSP and $\vD$ is full spark, then for any index set $T$ with $|T|\le s$ and any $\vu\in \ker \vD$, there exists a $\tilde{\vu}\in \ker \vD$, such that
\begin{equation}\label{full1}
\|\vu_T+\tilde{\vu}\|_1< \|\vu_{T^c}\|_1.
\end{equation}
\end{lemma}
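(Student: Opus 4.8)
The plan is to argue by contradiction, extending the computation of Example~\ref{exa} from a one-dimensional kernel to the general full spark setting. Suppose the conclusion fails, so there are a nonzero $\vu\in\ker\vD$ and an index set $|T|\le s$ with
\[
\|\vu_T+\vtu\|_1\ge\|\vu_{T^c}\|_1\qquad\text{for every }\vtu\in\ker\vD.
\]
Because $\vD$ is full spark, every nonzero element of $\ker\vD$ has at least $d+1$ nonzero entries; hence (in the regime $s\le d$) we have $\vu_{T^c}\neq 0$, and $\supp\vu$ contains a $d$-element subset $S$ for which $\vD_S$ is invertible. I will use this pair $(\vu,T)$ to build vectors on which $s$-\nspD\ must break down.

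First I would manufacture a convenient element of $\vD^{-1}(\ker\vA\backslash\{0\})$. A genuine sensing matrix is nontrivial, so $\ker\vA\neq\{0\}$, and since $\vD$ has full row rank its range is all of $\F^d$; thus there is a nonzero $\vz\in\ker\vA$. The crucial move is to choose its preimage carefully: using invertibility of $\vD_S$, I solve $\vD_S\vv_S=\vz$ and extend by zero, obtaining $\vv$ with $\vD\vv=\vz$ and $\supp\vv\subseteq\supp\vu$. Then for every scalar $\alpha$ the vectors $\pm\vv+\alpha\vu$ still lie in $\vD^{-1}(\ker\vA\backslash\{0\})$, since $\vD(\pm\vv+\alpha\vu)=\pm\vz\in\ker\vA\backslash\{0\}$.

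Next I would apply $s$-\nspD\ to $\vv+\alpha\vu$ and to $-\vv+\alpha\vu$ with the set $T$, producing $\vu_1,\vu_2\in\ker\vD$ with $\|\vv_T+\alpha\vu_T+\vu_1\|_1<\|\vv_{T^c}+\alpha\vu_{T^c}\|_1$ and $\|-\vv_T+\alpha\vu_T+\vu_2\|_1<\|-\vv_{T^c}+\alpha\vu_{T^c}\|_1$, and then add the two inequalities. For the right-hand side I invoke the coordinatewise identity $|a+b|+|a-b|=2\max(|a|,|b|)$ on $T^c$; choosing $\alpha>\|\vv\|_\infty/\|\vu\|_{\min}$ forces the $\alpha\vu$-term to dominate at every index of $T^c$ inside $\supp\vu$, while $\supp\vv\subseteq\supp\vu$ makes $\vv$ vanish at the remaining indices of $T^c$, so the right-hand side collapses to exactly $2\alpha\|\vu_{T^c}\|_1$. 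For the left-hand side the triangle inequality yields the lower bound $\|2\alpha\vu_T+\vu_1+\vu_2\|_1=2\alpha\|\vu_T+\vtu\|_1$ with $\vtu:=(\vu_1+\vu_2)/(2\alpha)\in\ker\vD$, and the contradiction hypothesis bounds this below by $2\alpha\|\vu_{T^c}\|_1$. Combining the two estimates gives $2\alpha\|\vu_{T^c}\|_1\le\text{LHS}<\text{RHS}=2\alpha\|\vu_{T^c}\|_1$, the desired contradiction.

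The step I expect to be the crux is the choice of the preimage $\vv$ with $\supp\vv\subseteq\supp\vu$. In Example~\ref{exa} the kernel was one-dimensional with a full-support generator, so this issue never surfaced; in general an arbitrary preimage leaves a residual constant $2\|\vv_{T^c\setminus\supp\vu}\|_1$ on the right-hand side which does not grow with $\alpha$ and hence ruins the contradiction. Full spark is precisely the hypothesis that removes this residue, as it makes $\vD_S$ invertible for some $d$-subset $S\subseteq\supp\vu$ and thereby lets me represent any $\vz\in\F^d$ using only columns indexed by $\supp\vu$. I would separately flag the degenerate range $s\ge d+1$, where $\vu_{T^c}$ could vanish, as lying outside the sparsity regime of interest.
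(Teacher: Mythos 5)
Your proof is correct and follows essentially the same route as the paper's: both use full spark to produce a preimage $\vv$ of a nonzero element of $\ker\vA$ supported so that $\vv_{T^c}$ lives inside $\supp\vu_{T^c}$ (the paper inverts $\vD_{G\cup T}$ with $G\subset\supp\vu_{T^c}$, you invert $\vD_S$ with $S\subset\supp\vu$ -- both legitimate), then apply $s$-$\vD$-NSP to $\pm\vv+\alpha\vu$ with $\alpha\|\vu\|_{\min}>\|\vv\|_\infty$ and add the two inequalities to obtain $\|2\alpha\vu_T+\vu_1+\vu_2\|_1<2\alpha\|\vu_{T^c}\|_1$. The only cosmetic difference is that you frame this as a contradiction, whereas the paper reads off $\tilde\vu=(\vu_1+\vu_2)/(2\alpha)$ directly from the same inequality.
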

\proof  To rule out the trivial case, suppose $\ker \vA\neq \emptyset$.

First, we will show that for an index set $T$ with $|T|<d$, and any $\vu\in \ker \vD$, there exists $\vv\in \vD^{-1}(\ker \vA \backslash\{0\})$, such that
$\supp \vu_{T^c} \subset \supp \vv_{T^c}$.

As a matter of fact, since $\spark(\vD)=d+1$ and $\vu\in \ker \vD$, we have $|\supp \vu|\geq d+1$, and thus $|\supp \vu_{T^c}| \geq d+1-|T|>d-|T|$. Take any index set $G\subset \supp \vu_{T^c}$ with $|G|=d-|T|$, then $|G\cup T|=d$.  Let $\vD_{G\cup T}$ be the submatrix of $D$ corresponding to the index set $G\cup T$. Then $\vD_{G\cup T}$ is full rank by the full spark assumption. On the other hand, $\ker \vA\neq \emptyset$ implies $\vD^{-1}(\ker \vA\backslash \{0\})\neq \emptyset$. Assume $\vv_0$ is an element in this nonempty set. Let $\vv$ be the vector defined by $\vv_{(G\cup T)^c}=0$ and $\vv_{G\cup T}=\vD^{-1}_{G\cup T}\vD\vv_0$. Then obviously we have $\vD\vv=\vD\vv_0$, $\supp \vv_{T^c}\subset \supp \vu_{T^c} $ and $\vv\in \vD^{-1}(\ker \vA\backslash \{0\})$.

Choose $\alpha $ big enough such that $\alpha\|\vu\|_{\min}>\|\vv\|_\infty$. Since $\vA$ satisfies $s$-$\vD$-NSP, there exist $\vu_1$, $\vu_2\in \ker \vD$, such that
\[
\|(\vv+\alpha \vu)_T+\vu_1\|_1<\|(\vv+\alpha \vu)_{T^c}\|_1,
\]
and
\[
\|(-\vv+\alpha \vu)_T+\vu_2\|_1<\|(-\vv+\alpha \vu)_{T^c}\|_1.
\]
Adding the above two inequalities, and by the choice of $\alpha$, we get
\[
\|2\alpha \vu_T+(\vu_1+\vu_2)\|_1 <2\alpha \|\vu_{T^c}\|_1,
\]
which implies \eqref{full1}.
\qed


As we mentioned, full spark frames are dense and represent a large collection of frames. Hence for ``most" frames $\vD$, if we want to find a sensing matrix $\vA$ such that $\vD$ is admissible to $\vA$, we have to require the composite $\vA\vD$ to have NSP. While $\vA\vD$ having NSP may not directly imply the low coherence of $\vD$,  as far as practice is concerned, this is basically imposing an incoherence condition on $\vD$ for two reasons: First of all, $\vA\vD$ having NSP means $\vD$ having NSP, which, after some analysis (see Section \ref{sec:cohproof}), implies that 
\begin{equation}\label{equ:mu}
\mu(\vD)<1-\frac{2A^2}{nB},
\end{equation}
 where $A, B$ are the frame bounds of $\vD$. Admittedly, the right hand side of \eqref{equ:mu} is not as small as we hoped. Second of all,  current techniques suggest that to verify whether $\vD$ has NSP,  the most efficient way is to require $\vD$ to have RIP, which implies a small incoherence of $\vD$.

It is not yet clear what happens when the frame is of low spark. Of course Section \ref{sec_exa} provides examples of highly coherent admissible frames with low spark, but we do not yet have a general criteria for admissible frames. Nevertheless, some simulations are performed in Section \ref{sec_sim1} for low spark frames. The results are very positive in the sense that low spark frames are very likely to be admissible even with high coherence.

As a consequence, to solve a frame-based compressed sensing problem, if one takes the $\ell_1$-synthesis approach, an incoherent frame $\vD$ is required since $\vD$ is very likely to be full spark. One can argue that in the case of a non-full-spark frame, $\vD$ may still be allowed to be highly coherent. However, due to the denseness of full spark frames, a perturbation on a frame can easily turn non-full-spark to full spark, which again falls into the case that $\vD$ needs to be incoherent. We refer the work by \cite{DMP1} on quantitative bounds for how incoherent a frame needs to be.


\section{Recovery performance with inadmissible frames}\label{sec_coh}

The denseness of full spark frames leads to a very interesting phenomenon: the admissibility of a frame is not stable with respect to perturbation. In Example \ref{exa}, If the last column of the above $\vD$ were identical to the first column, by Proposition \ref{prop_sub}, this frame would have been perfectly fine. However, the small perturbation $\epsilon$ leads to a completely opposite situation.

But the good news is that even with an inadmissible frame, as long as it is close to an admissible one, the reconstructed error is very small.  This is due to the fact that the solution of \PDe\ is stable to the perturbation of $\vD$.  Some related results can also be found in \cite{ACP11}, but with the $\ell_1$-analysis method. Please see Section \ref{sec_sim} for some numerical experiments.

In what follows, $\|\vM\|_{1\rightarrow 2}$ denotes the norm of $\vM$ as an operator from $\ell_1$ to $\ell_2$.

\begin{theorem}\label{thm:robust}
Let $\vz_0=\vD_0\vx_0$ be the true signal with an $s$-sparse representation $\vx_0$. Let $\vy=\vA\vz_0+\vw=\vA\vD_0\vx_0+\vw$ be the noisy measurement with known noise level $\|\vw\|_2 <\epsilon$. Suppose there exists a frame $\vD$ lying close to $\vD_0$: $\|\vD_0-\vD\|_{1\rightarrow 2}\leq \delta$, and $\vD$ is $s$-admissible to $\vA$.  then we have the following stability result for the reconstruction of $\vz_0$ from $\vy$.
\begin{enumerate}[I.]
\item Let $\hat{\vz}$ be the solution of \PDe\ with the frame $\vD$ and $\epsilon$ replaced respectively by $\vD_0$ and $\rho=2\delta\|\vA\|_2\|\vx_0\|_1+\epsilon$.
Then we have
\begin{equation*}\label{eq:result}
\|\hat{\vz}-\vz_0\|\leq 2\delta\|\vx_0\|_1+\frac{2\sqrt n}{c\nu_\vA\nu_\vD}\rho+\frac{2\rho}{\nu_\vA}.
\end{equation*}

\item If $\vx_0$ is also a minimizer of
$$\min\|\vx\|_1,  \quad\text{ s.t. } \vD\vx=\vD\vx_0,$$
(this is quite likely to hold since we assume $\vx_0$ to be $s$-sparse). Then if  $\hat{\vz}$ is the solution of \PDe\ with the frame $\vD_0$ and the original $\epsilon$, we have
\[
\|\hat{\vz}-\vz_0\|_2\leq 2\delta\|\vx_0\|_1+\frac{2\sqrt n}{c\nu_\vA\nu_\vD}\rho+\frac{2\rho}{\nu_\vA},
\]
with the same definition for $\rho$.

\end{enumerate}
\end{theorem}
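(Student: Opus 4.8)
The plan is to transfer the problem to the admissible frame $\vD$, where the machinery of Theorem~\ref{thm_sta} is available, and to treat $\vD_0$ as a small perturbation of $\vD$. Set $\vh=\vD(\hat\vx-\vx_0)$. Since $\hat\vz=\vD_0\hat\vx$ and $\vz_0=\vD_0\vx_0$, the triangle inequality gives
\[
\|\hat\vz-\vz_0\|_2=\|\vD_0(\hat\vx-\vx_0)\|_2\le\|\vh\|_2+\|(\vD_0-\vD)(\hat\vx-\vx_0)\|_2 .
\]
In both parts $\vx_0$ is feasible for the program defining $\hat\vx$ (because $\|\vw\|_2<\epsilon\le\rho$), so $\hat\vx$ being the $\ell_1$ minimizer gives $\|\hat\vx\|_1\le\|\vx_0\|_1$ and hence $\|\hat\vx-\vx_0\|_1\le 2\|\vx_0\|_1$. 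The perturbation term is therefore at most $\|\vD_0-\vD\|_{1\rightarrow2}\,\|\hat\vx-\vx_0\|_1\le 2\delta\|\vx_0\|_1$, which is exactly the first summand in each claimed bound; it remains to estimate $\|\vh\|_2$.

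For $\|\vh\|_2$ I would repeat the proof of Theorem~\ref{thm_sta} almost verbatim, but with $\vD$ in place of the generating frame and $2\rho$ in place of $2\epsilon$. Admissibility of $\vD$ to $\vA$ means $\vA$ has $\vD$-NSP, which in the real setting upgrades to $\vD$-SNSP with some constant $c$ by Theorem~\ref{thm_equiv}. The only genuinely new input is the measurement-perturbation estimate: writing $\vy=\vA\vD_0\vx_0+\vw=\vA\vD\vx_0+\vw'$ with $\|\vw'\|_2\le\delta\|\vA\|_2\|\vx_0\|_1+\epsilon$, and combining the near-feasibility of $\hat\vx$ and of $\vx_0$ for the $\vD$-constraint via the triangle inequality, one finds $\|\vA\vh\|_2\le 2\rho$ in both parts; this is precisely where the extra $2\delta\|\vA\|_2\|\vx_0\|_1$ packed into $\rho$ is spent. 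From here Lemma~\ref{SVD} decomposes $\vh=\vD\vv+\vg$ with $\vD\vv\in\ker\vA$, $\vv\in\ker(\vA\vD)$, and $\vg=\vD\vf$, after which $\vD$-SNSP furnishes $\vu\in\ker\vD$ with $\|\vv_{T^c}\|_1-\|\vv_T+\vu\|_1\ge c\|\vD\vv\|_2$. Since $\vx_0$ is exactly $s$-sparse, the $\|\vx_{0,T^c}\|_1$ term vanishes, and the same chain of inequalities as in Theorem~\ref{thm_sta} yields $\|\vh\|_2\le \frac{2\sqrt n}{c\nu_\vA\nu_\vD}\rho+\frac{2\rho}{\nu_\vA}$, the remaining two summands.

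The single step that is not a mechanical transcription, and the point at which Parts~I and~II part ways, is the minimizer inequality $\|\hat\vx\|_1\le\|-\vu+\vx_0\|_1$ used to obtain the analogue of \eqref{equ2}. Because $\hat\vx$ minimizes the $\vD_0$-program while $\vu\in\ker\vD$ (not $\ker\vD_0$), the comparison vector $-\vu+\vx_0$ satisfies $\vA\vD_0(-\vu+\vx_0)-\vy=-\vA(\vD_0-\vD)\vu-\vw$, which is not automatically admissible. In Part~II the hypothesis that $\vx_0$ minimizes $\|\vx\|_1$ subject to $\vD\vx=\vD\vx_0$ gives $\|\vx_0\|_1\le\|\vx_0-\vu\|_1$ for every $\vu\in\ker\vD$, so that $\|\hat\vx\|_1\le\|\vx_0\|_1\le\|-\vu+\vx_0\|_1$ directly, with no feasibility of $-\vu+\vx_0$ required; this is what lets Part~II keep the untouched tolerance $\epsilon$. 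In Part~I, lacking this hypothesis, I would instead lean on the inflated radius: the SNSP inequality forces $\|\vv_T+\vu\|_1\le\|\vv_{T^c}\|_1$ and hence $\|\vu\|_1\le\|\vv\|_1$, so $\|\vA\vD_0\vu\|_2=\|\vA(\vD_0-\vD)\vu\|_2\le\delta\|\vA\|_2\|\vu\|_1$ can be kept within the slack $\rho-\epsilon=2\delta\|\vA\|_2\|\vx_0\|_1$, restoring feasibility of $-\vu+\vx_0$ for the $\vD_0$-program and with it the minimizer step.

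The main obstacle is precisely this last estimate: controlling $\|\vu\|_1$ for the kernel vector $\vu$ delivered by SNSP and checking that the enlarged tolerance $\rho$ really does absorb $\|\vA\vD_0\vu\|_2$, so that the minimizer comparison survives replacing $\vD$ by $\vD_0$. Everything else is a faithful copy of the proof of Theorem~\ref{thm_sta}, with the bookkeeping of the two extra error sources, namely $\delta$ from the frame mismatch and $\epsilon$ from the noise, carried through $\rho$.
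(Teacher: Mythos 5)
Your overall strategy is the same as the paper's: the paper isolates the content of Theorem~\ref{thm_sta} as Lemma~\ref{lm:stability} (whose only inputs are (A1) $\|\hat\vx\|_1\le\|\vx_0+\vu\|_1$ for all $\vu\in\ker\vD$ and (A2) $\|\vA\vD\hat\vx-\vA\vD\vx_0\|_2\le 2\rho$), applies it with the admissible frame $\vD$ and inflated tolerance $\rho$, and pays $2\delta\|\vx_0\|_1$ for the frame mismatch exactly as you do. Your verification of (A2) and of the final assembly is correct, and your treatment of Part~II matches the paper's.

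The gap is in your Part~I handling of the minimizer comparison. You propose to restore feasibility of $-\vu+\vx_0$ for the $\vD_0$-program (radius $\rho$) via $\|\vu\|_1\le\|\vv\|_1$, so that $\|\vA(\vD_0-\vD)\vu\|_2\le\delta\|\vA\|_2\|\vu\|_1$ fits in the slack $\rho-\epsilon=2\delta\|\vA\|_2\|\vx_0\|_1$. But this requires $\|\vu\|_1\le 2\|\vx_0\|_1$, and the only available bound on $\|\vv\|_1$ is $\|\vv\|_1\le\|\hat\vx-\vx_0\|_1+\|\vf\|_1\le 2\|\vx_0\|_1+\sqrt{n}\,\tfrac{2\rho}{\nu_\vA\nu_\vD}$; the extra $\delta\|\vA\|_2\|\vf\|_1$ overshoots the slack, which is calibrated to exactly $2\delta\|\vA\|_2\|\vx_0\|_1$ (inflating $\rho$ to absorb it would make $\rho$ self-referential and change the stated constant). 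The paper avoids this entirely with a dichotomy that establishes (A1) for \emph{every} $\vu\in\ker\vD$: if $\|\vx_0+\vu\|_1\ge\|\vx_0\|_1$, then $\|\hat\vx\|_1\le\|\vx_0\|_1\le\|\vx_0+\vu\|_1$ using only the feasibility of $\vx_0$ itself (which holds since $\|\vw\|_2<\epsilon\le\rho$), with no feasibility of $\vx_0+\vu$ needed; if instead $\|\vx_0+\vu\|_1<\|\vx_0\|_1$, then the triangle inequality forces $\|\vu\|_1\le 2\|\vx_0\|_1$, and only in this case is feasibility of $\vx_0+\vu$ invoked, where the slack in $\rho$ suffices. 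You should replace your SNSP-based control of $\|\vu\|_1$ with this case split; everything else in your argument then goes through.
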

\begin{remark}
\text{}
\begin{itemize}
\item Theorem \ref{thm:robust} considers the effect of the perturbation in the frame $\vD$. The case when there is a perturbed measurement matrix $\vA$ has been studied before \cite{ACP11, Strohmer}.     

\item We note that the result of Theorem \ref{thm:robust} is dependent on individual signals rather than being universal for the set $\vD_0\Sigma_s$. In fact, as the numerical experiments have suggested, there seems to be no universality in this case.
\end{itemize}
\end{remark}
We need the following lemma to prove Theorem \ref{thm:robust}.
\begin{lemma}\label{lm:stability}
Given $\vx\in\Sigma_s$, $\Psi\in \F^{d,n}$, and $\Phi \in \F^{m,d}$ with $\Phi$ having \text{$s$-$\Psi$-NSP}. Let $\tilde{\vx}$ be such that
\begin{enumerate}
\item[(A1)]\label{it:1} $\|\tilde{\vx}\|_1\leq \|\vx+\vu\|_1$ for all $\vu\in \ker \Psi$;
\item[(A2)]\label{it:2} $\|\Phi\Psi \tilde{\vx}-\Phi\Psi\vx\|\leq 2\tilde{\epsilon}$.
\end{enumerate}
Then
\[
\|\Psi \tilde{\vx}-\Psi \vx\|_2\leq \frac{2\sqrt n}{c\nu_\Phi\nu_\Psi}\tilde{\epsilon}+\frac{2\tilde{\epsilon}}{\nu_\Phi}.
\]

\end{lemma}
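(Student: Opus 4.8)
The plan is to mirror the proof of Theorem \ref{thm_sta}, since this lemma is essentially its abstract core: I would identify $\Psi$ with $\vD$, $\Phi$ with $\vA$, $\vx$ with $\vx_0$, $\tilde\vx$ with $\hat\vx$, and $\tilde\epsilon$ with $\epsilon$. The one genuine simplification is that $\vx$ is now \emph{exactly} $s$-sparse, so choosing $T=\supp\vx$ makes the best-approximation residue $\|\vx_{T^c}\|_1$ vanish; this is why no $\sigma_s$ term survives in the conclusion. Since the stated hypothesis is $s$-$\Psi$-NSP while the conclusion involves the stability constant $c$, I would first invoke Theorem \ref{thm_equiv} (in the real setting) to upgrade $s$-$\Psi$-NSP to $s$-$\Psi$-SNSP with that constant $c$; all subsequent estimates then use the SNSP inequality \eqref{equ_nsp}.

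First I would set $\vh=\Psi\tilde\vx-\Psi\vx=\Psi(\tilde\vx-\vx)$ and apply Lemma \ref{SVD} with the matrix $\Phi$ to decompose $\vh=\Psi\vw+\vg$, where $\Psi\vw\in\ker\Phi$ and $\|\vg\|_2\le\frac{1}{\nu_\Phi}\|\Phi\vh\|_2\le\frac{2\tilde\epsilon}{\nu_\Phi}$, the last step using (A2). Applying the singular-value bound of Lemma \ref{SVD} a second time, now to lift $\vg$ through $\Psi$, produces $\vf$ with $\vg=\Psi\vf$ and $\|\vf\|_2\le\frac{1}{\nu_\Psi}\|\vg\|_2\le\frac{2\tilde\epsilon}{\nu_\Phi\nu_\Psi}$. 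Since $\Psi(\tilde\vx-\vx)=\Psi(\vw+\vf)$, there is $\vu_1\in\ker\Psi$ with $\tilde\vx-\vx=\vw+\vf+\vu_1$; setting $\vv=\vw+\vu_1$ gives the two facts I need, namely $\tilde\vx-\vx=\vv+\vf$ and $\Psi\vv=\Psi\vw\in\ker\Phi$, i.e. $\vv\in\ker(\Phi\Psi)$.

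Next I would feed $\vv$ and $T$ into the SNSP inequality to obtain $\vu\in\ker\Psi$ with $\|\vv_{T^c}\|_1-\|\vv_T+\vu\|_1\ge c\|\Psi\vv\|_2$, and then run the two chains of triangle inequalities exactly as in \eqref{equ1} and \eqref{equ2}. The first chain (a reverse triangle inequality, using that $\vx$ is supported on $T$) yields $\|\vv+\vx_T\|_1-\|-\vu+\vx_T\|_1\ge c\|\Psi\vv\|_2$. The second invokes hypothesis (A1) with the admissible choice $-\vu\in\ker\Psi$: since $\tilde\vx=\vx+\vv+\vf$, minimality gives $\|\vv+\vx_T\|_1-\|-\vu+\vx_T\|_1\le\|\vf\|_1$, where the residue term $2\|\vx_{T^c}\|_1$ appearing in \eqref{equ2} drops because $\vx_{T^c}=0$. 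Combining the two and using $\|\vf\|_1\le\sqrt n\,\|\vf\|_2$ gives $\|\Psi\vv\|_2\le\frac{\sqrt n}{c}\|\vf\|_2\le\frac{2\sqrt n}{c\nu_\Phi\nu_\Psi}\tilde\epsilon$. The conclusion then follows from $\|\Psi\tilde\vx-\Psi\vx\|_2=\|\Psi\vv+\vg\|_2\le\|\Psi\vv\|_2+\|\vg\|_2$ together with $\|\vg\|_2\le\frac{2\tilde\epsilon}{\nu_\Phi}$.

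I expect no serious obstacle, precisely because the exact sparsity of $\vx$ removes the most delicate (residue) term of Theorem \ref{thm_sta}. The only points requiring care are (i) correctly invoking Theorem \ref{thm_equiv} to extract the SNSP constant $c$ from the bare NSP hypothesis, which is what restricts the effective scope of the lemma to the real case; and (ii) verifying that the abstract condition (A1)—minimality of $\|\tilde\vx\|_1$ against every shift in $\ker\Psi$—is exactly the substitute for the feasibility-plus-minimizer step of Theorem \ref{thm_sta}. Everything else is routine triangle-inequality bookkeeping.
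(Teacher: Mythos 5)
Your proposal is correct and follows essentially the same route as the paper, which itself proves this lemma only by remarking that the argument of Theorem \ref{thm_sta} goes through verbatim once one observes that only properties (A1) and (A2) of the minimizer are ever used, and that exact sparsity kills the $\sigma_s$ residue. Your explicit invocation of Theorem \ref{thm_equiv} to pass from the stated $s$-$\Psi$-NSP hypothesis to the SNSP constant $c$ is in fact a point the paper glosses over, and your handling of it (including the resulting restriction to the real case) is a welcome clarification rather than a deviation.
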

The conclusion of this Lemma is the same as that of Theorem \ref{thm_sta}, except that $\vx$ here is exactly sparse. Notice in Theorem \ref{thm_sta}, the assumption that $\hat \vx$ is a minimizer is actually too strong, because in the proof, we only rely on the fact that $\hat{\vx}$ satisfies (A1) and (A2).

\begin{proof} [Proof of Theorem \ref{thm:robust}]

\textbf{Part I.} We prove this part by showing that the two assumptions of Lemma \ref{lm:stability} are fulfilled if the parameters $(\Phi, \Psi, \vy, \vx,\tilde{\vx},\tilde{\epsilon})$ in that lemma are set to $(\vA,\vD,\vA\vD\vx_0,\vx_0,\hat{\vx},\rho)$.

For assumption (A1), we need to show that $\|\vx_0+\vu\|_1\geq \|\hat{\vx}\|_1$ for all $\vu\in \ker \vD$. Indeed, for any $\vu\in \ker \vD$, if $\|\vx_0+\vu\|_1\geq \|\vx_0\|_1$, then $ \|\vx_0+\vu\|_1\geq \|\vx_0\|_1\geq \|\hat{\vx}\|_1$ by the definition of $\hat{\vx}$; if $\|\vx_0+\vu\|_1<\|\vx_0\|_1$, then it must be $\|\vu\|_1 \leq 2\|\vx_0\|_1$, so
\[
\|\vA(\vD-\vD_0)\vu\|_2\leq 2 \|\vA\|_2\|\vD-\vD_0\|_{1\rightarrow 2}\|\vx_0\|_1\leq 2\delta \|\vA\|_2\|\vx_0\|_1,
\]
and
\begin{align*}
&\|\vA\vD_0(\vu+\vx_0)-\vy\|_2\leq\|\vA\vD_0\vx_0-\vy\|_2+\|\vA\vD_0\vu\|_2=\|\vA\vD_0\vx_0-\vy\|_2+\|\vA(\vD-\vD_0)\vu\|_2\\
\leq & \epsilon+2\delta\|\vA\|_2\|\vx_0\|_2= \rho.
\end{align*}
Hence $\vu+\vx_0$ is feasible in the minimization problem, which implies $\|\hat{\vx}\|_1\leq \|\vx_0+\vu\|_1$.

For assumption (A2), we need to show $\|\vA\vD\hat{\vx}-\vA\vD\vx_0\|_2\leq 2\rho$. First observe that by definition, we have
$$\|\vA\vD_0\hat{\vx}-\vA\vD_0\vx_0\|_2\leq \|\vA\vD_0\hat{\vx}-\vy\|_2+\|\vA\vD_0\vx_0-\vy\|_2\leq \rho+ \epsilon.$$
Then we can calculate that
\begin{align}\label{eq:error}
\|\vA\vD \hat{\vx}-\vA\vD \vx_0\|_2&=\|\vA(\vD-\vD_0)\hat{\vx}-\vA(\vD-\vD_0)\vx_0+\vA\vD_0(\hat{\vx}-\vx_0)\|_2 \notag\\
&\leq \|\vA(\vD-\vD_0)\hat{\vx}\|_2+\|\vA(\vD-\vD_0)\vx_0\|_2+\|\vA\vD_0(\hat{\vx}-\vx_0)\|_2\notag \\
&\leq 2\|\vx_0\|_1\|\vA\|_2\|\vD-\vD_0\|_{1\rightarrow 2}+\epsilon+\rho \notag \\
&=2\delta\|\vA\|_2\|\vx_0\|_1+\epsilon+\rho=2\rho
\end{align}
Now that both assumptions of Lemma \ref{lm:stability} are satisfied, we apply the lemma to obtain
\[
\|\vD\hat{\vx}-\vD\vx_0\|_2\leq \frac{2\sqrt{n}}{c\nu_\vA\nu_\vD}\rho+\frac{2\rho}{\nu_\vA}.
\]
Hence,
\begin{align*}
\|\hat{\vz}-\vz_0\|_1=\|\vD_0\hat{\vx}-\vD_0\vx_0\|_2&\leq \|\vD\hat{\vx}-\vD\vx_0\|_2+\|(\vD-\vD_0)\hat{\vx}\|_2+\|(\vD-\vD_0)\vx_0\|_2  \\
&\leq 2\delta \|\vx_0\|_1+\frac{2\sqrt{n}}{c\nu_\vA\nu_\vD}\rho+\frac{2\rho}{\nu_\vA}.
\end{align*}
This completes the proof of Part I.\\ \\
\textbf{Part II.} Set the parameters  $(\Phi, \Psi, \vy, \vx, \tilde{\vx}, \tilde{\epsilon})$ of Lemma \ref{lm:stability} to $(\vA, \vD, \vA\vD\vx_0, \vx_0, \hat{\vx}, \epsilon)$. For assumption (A1), the additional assumption on $\vx_0$ implies that
 $\|\vx_0\|_1\leq \|\vx_0+\vu\|_1$ for all $\vu\in \ker \vD$. This together with the fact that  $\vx_0$ is feasible to the minimization problem indicates that $\|\hat{\vx}\|_1\leq \|\vx_0\|_1\leq \|\vx_0+\vu\|_1$ for all $\vu\in \ker \vD$.
 For assumption (A2), we can follow exactly the same argument as in Part I.

Applying the lemma, we get
\[
\|\vD\hat\vx-\vD\vx_0\|_1\leq \frac{2\sqrt{n}}{c\nu_\vA\nu_\vD}\rho+\frac{2\rho}{\nu_\vA}.\]
Therefore
\begin{align*}
\|\hat{\vz}-\vz_0\|_1=\|\vD_0\hat{\vx}-\vD_0\vx_0\|_2&\leq \|\vD\hat{\vx}-\vD\vx_0\|_2+\|(\vD-\vD_0)\hat{\vx}\|_2+\|(\vD-\vD_0)\vx_0\|_2  \\
&\leq 2\delta \|\vx_0\|_1+\frac{2\sqrt{n}}{c\nu_\vA\nu_\vD}\rho+\frac{2\rho}{\nu_\vA}.
\end{align*}

\end{proof}


%

\section{Simulations}\label{sec_sim}
\subsection{Reconstruction with low spark frames}\label{sec_sim1}

We construct a $200\times 400$ frame as follows:
$\vD=[\vF,\vG]$, where $\vF$ is the discrete cosine transform matrix with dimension $200\times200$, and $\vG$ is another $200\times 200$ matrix whose columns are linear combinations of 3 columns of $\vF$. In particular, each column of $\vG$ is in the form of
\[
\frac{a_1\vf_{k_1}+a_2\vf_{k_2}+a_3\vf_{k_3}}{\|a_1\vf_{k_1}+a_2\vf_{k_2}+a_3\vf_{k_3}\|_2},
\]
where $a_1, a_2, a_3$ are $N(0,1)$ random variables, and $\vf_{k_1}, \vf_{k_2}, \vf_{k_3}$ are 3 columns of $\vF$ chosen uniformly at random.

With such reconstruction, $\vD$ is a highly coherent unit norm frame with low spark ($\leq 4$). We want to test whether such a $\vD$ is admissible to the Gaussian sensing matrix.

Let  $\vA$ be the $80\times 200$ random Gaussian matrix. The $\ell_1$-magic is used for reconstruction with a tolerance level of $10^{-6}$.
For each sparsity level $s$, we generate random signals $\vz_0=\vD\vx_0$ with $\vx$ being $s$-sparse having Gaussian entries. We run it 500 times, and take the largest relative reconstruction error of signals among  500 trials, which is denoted as $E_{\vz,s}=\max_{500\text{ trials }} \frac{\|\hat \vz-\vz_0\|_2}{\|\vz_0\|_2}$.

The first row of Table 1 shows the values of $E_{\vz,s}$ for various $s$.
As we see, the first three errors of the first row are at the tolerance level and thus can be considered as 0, indicating that $\vD$ is admissible to $\vA$ with $s\leq 8$. 
This empirical result suggests that when a frame is low spark, a highly coherent frame can still be admissible, unlike the full spark case.
But this experiment cannot be explained by Proposition \ref{prop_sub} since we have added 200 columns. A future direction of research could be developing more theoretical results on low spark admissible frames.

\subsection{Reconstruction with inadmissible frames}

With a small perturbation to the above $\vD$, we can get highly coherent frames that is of full spark with probability 1. In particular, we still let $\vD=[\vF,\vN]$, where $\vF$ is the same but the columns of $\vN$ draw from the following distribution:
 \begin{equation}\label{eq:distr}
\frac{a_1\vf_{k_1}+a_2\vf_{k_2}+a_3\vf_{k_3}+\epsilon \vg}{\|a_1\vf_{k_1}+a_2\vf_{k_2}+a_3\vf_{k_3}+\epsilon \vg\|_2},
\end{equation}
where $a_1, a_2, a_3$ are $N(0,1)$ random variables, $\vf_{k_1}, \vf_{k_2}, \vf_{k_3}$ are 3 columns of $\vF$ chosen uniformly at random, $\epsilon$ is some small positive number, and $\vg$ is a $200\times 1$ random Gaussian vector so that the term $\epsilon\vg$ works as a perturbation.

We again run $\ell_1$-magic 500 times with randomly generated $\vD$-sparse signals. Row 2-5 of Table 1 shows how the worst reconstruction error among 500 trials $E_{\vz,s}$ changes with $\epsilon$. All the errors are bigger than the tolerance level. Therefore for each fixed sparsity level and $\epsilon$, some signals are not considered to be reconstructed, which implies $\vD$ is not admissible to $\vA$.  This has been predicted by Theorem \ref{thm_main}: coherent and full spark dictionaries are not admissible.

However, we can see that as $\epsilon$ approaches  0, $E_{\vz,s}$ is getting smaller and smaller, indicating a smaller error when the frame is approaching  an admissible one.  This is supported by Theorem \ref{thm:robust}.

We have also attached Table 2, which lists the biggest error of the coefficients $E_{\vz,s}=\max_{500\text{ trials }} \frac{\|\hat \vx-\vx_0\|_2}{\|\vx_0\|_2}$ for the same frame and sparsity in comparison with Table 1. The errors are all very big due to the high coherence of $\vA\vD$. But when $\epsilon=0$ and sparsity level is low, this does not prevent the accurate reconstruction of the signal.

\begin{table}[ht]
\caption{Maximum reconstruction error of signals over 500 trials}
\centering

\begin{tabular}{c c c c c c c c}
\hline
\hline
$\epsilon$ & $\mu(\vD)$ & $E_{\vz,2}$ & $E_{\vz,5}$ & $E_{\vz,8}$ & $E_{\vz,11}$ & $E_{\vz,14}$ & $E_{\vz,17}$\\
\hline
0 & 0.9999 & 2.64$\times 10^{-6}$ & 5.08$\times10^{-6}$ & 7.58$\times10^{-6}$ & $28.76\times10^{-6}$ & 0.15 & 0.32\\
$0.0001$ & 0.9993& $0.001$ & $0.009$ & $0.003$ & $0.002$ & 0.15 & 0.47\\
$0.001$ & 0.9965 & 0.007 & 0.020 & 0.020 & $0.026$ & 0.11 & 0.38\\
0.003 & 0.9983 & $0.018$ & 0.041 & 0.036 & 0.060 & 0.19 & 0.32 \\
0.009 & 0.9961 & $0.068$ & 0.100 & 0.170 & 0.145 & 0.22 & 0.35 \\
[1ex]
\hline
\end{tabular}
\label{table:nonlin}
\end{table}

\begin{table}[ht]
\caption{Maximum reconstruction error of coefficients over 500 trials}
\centering
\begin{tabular}{c c c c c c c c}
\hline
\hline
$\epsilon$ & $\mu(\vD)$ & $E_{\vx,2}$ & $E_{\vx,5}$ & $E_{\vx,8}$ & $E_{\vx,11}$ & $E_{\vx,14}$ & $E_{\vx,17}$\\
[0.5ex]
\hline
0 & 0.9999             & 1.01& 0.98& 1.05 & 0.99 & 0.89 & 0.92\\
$0.0001$ & 0.9993& 1.04 & 1.24& 1.03 & 0.86& 0.90 & 0.92\\
$0.001$ & 0.9965  & 1.18 & 0.98 & 0.91 & 1.09 & 0.95 & 0.96\\
0.003 & 0.9983      & 0.65 & 1.03 & 0.88 & 1.00 & 0.85 & 0.82 \\
0.009 & 0.9961      & 1.18 & 0.80 & 0.99 & 0.66 & 0.72 & 0.81 \\
[1ex]
\hline
\end{tabular}
\label{table:nonlin}
\end{table}

\section{Some proofs}
\subsection{NSP and incoherence}\label{sec:cohproof}
\begin{theorem}
If a unit norm frame $\vD=\{\vd_i\}_{i=1}^n\in\F^{d\times n}$ has $2$-NSP, and has frame bounds $A,B>0$, that is, $A\|\vx\|_2^2\leq \sum_{i=1}^n|\langle \vx,\vd_i\rangle|^2\leq B\|\vx\|^2$, then its coherence satisfies
$$\mu(\vD)<1-\frac{2A^2}{nB}.$$
\end{theorem}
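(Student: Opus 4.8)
The plan is to argue against $2$-NSP: a pair of nearly parallel columns should create a kernel vector of $\vD$ whose $\ell_1$ mass concentrates on two coordinates, which $2$-NSP forbids unless $\mu(\vD)$ is controlled. First I would select indices $i\neq j$ realizing the coherence and write $\ip{\vd_i}{\vd_j}=\mu(\vD)\,\omega$ with $|\omega|=1$ (so $\omega=\pm1$ in the real case). Forming the short vector $\vz=\vd_i-\omega\vd_j\in\F^d$ and expanding, the cross terms collapse to give $\|\vz\|_2^2=2-2\mu(\vD)$, so that high coherence forces $\vz$ to be small.

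Next I would write $\vz$ as $\vD$ applied to two different coefficient vectors. On the concentrated side, $\vz=\vD\vp$ with $\vp=\ve_i-\omega\ve_j$, which is supported on $T=\{i,j\}$ and satisfies $\|\vp\|_1=2$. On the controlled side, I set $\vw=\vD^*(\vD\vD^*)^{-1}\vz$, the canonical dual coefficients, which also satisfy $\vD\vw=\vz$. Since the frame bounds put the spectrum of $\vD\vD^*$ in $[A,B]$, I have $\|(\vD\vD^*)^{-1}\|\le 1/A$ and $\|\vD^*\|\le\sqrt B$, whence the key estimate
$$\|\vw\|_2\le \frac{\sqrt B}{A}\,\|\vz\|_2=\frac{\sqrt B}{A}\sqrt{2-2\mu(\vD)}.$$
Because both representations agree after applying $\vD$, the vector $\vv=\vp-\vw$ lies in $\ker\vD$.

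The crux is to test $2$-NSP on $\vv$ with the index set $T=\{i,j\}$. As $\vp$ is supported on $T$, one has $\vv_{T^c}=-\vw_{T^c}$, while on $T$ the triangle inequality gives $\|\vv_T\|_1\ge\|\vp_T\|_1-\|\vw_T\|_1=2-\|\vw_T\|_1$. The strict NSP inequality $\|\vv_T\|_1<\|\vv_{T^c}\|_1$ then forces $2-\|\vw_T\|_1<\|\vw_{T^c}\|_1$, i.e.\ $\|\vw\|_1>2$. Bounding $\|\vw\|_1\le\sqrt n\,\|\vw\|_2$ and inserting the estimate above yields $2<\sqrt n\,\frac{\sqrt B}{A}\sqrt{2-2\mu(\vD)}$; squaring and rearranging gives exactly $\mu(\vD)<1-\frac{2A^2}{nB}$.

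I expect the main friction to be notational bookkeeping rather than a deep obstacle: one must carry the unimodular phase $\omega$ so that both $\|\vz\|_2^2=2-2\mu(\vD)$ and $\|\vp\|_1=2$ hold uniformly over $\R$ and $\C$, and one must check that $\vv\neq 0$ so that $2$-NSP applies with its strict inequality. The only genuine design choice is how to bound $\|\vw\|_2$: splitting it as $\|\vD^*\|\cdot\|(\vD\vD^*)^{-1}\|\le \sqrt B/A$ is precisely what introduces the upper frame bound $B$ and produces the stated constant $2A^2/(nB)$ (the sharper but unnecessary bound $\|\vw\|_2\le\|\vz\|_2/\sqrt A$ would instead yield $\mu(\vD)<1-2A/n$).
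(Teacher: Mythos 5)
Your proof is correct and follows essentially the same route as the paper: both build the kernel vector by subtracting the canonical-dual coefficient vector $\vD^*(\vD\vD^*)^{-1}(\vd_i-\omega\vd_j)$ from the two-sparse representation $\ve_i-\omega\ve_j$, test $2$-NSP on $T=\{i,j\}$ to force $\|\vw\|_1>2$, and finish with $\|\vw\|_1\le\sqrt{n}\,\|\vw\|_2\le\sqrt{n}\,\frac{\sqrt{B}}{A}\sqrt{2-2\mu}$. The differences are cosmetic (the paper writes the unimodular phase as $\sgn\langle\vd_1,\vd_2\rangle$), and your side remarks --- checking $\vv\neq 0$ for the strict inequality, and noting that the sharper estimate $\|\vw\|_2\le\|\vz\|_2/\sqrt{A}$ would improve the constant --- are points the paper glosses over.
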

\begin{proof}
WLOG, we assume $|\langle \vd_1,\vd_2\rangle|=\mu$, where $\vd_1$ and $\vd_2$ are the first two columns of $\vD$.  Let $\vx=\vD^T(\vD\vD^T)^{-1}(\vd_1-\sgn(\langle \vd_1,\vd_2\rangle)\vd_2)$. It is easy to verify that
\[
\vD(\vx+\ve)=0,  \ve=[-1, \sgn(\langle \vd_1,\vd_2\rangle,0,...,0]^T,
\]
and
\[
\|\vd_1- \sgn(\langle \vd_1,\vd_2\rangle)\vd_2\|^2=2-2\mu.
\]
Since $\vx+\ve\in \ker(\vD)$ and $\vD$ has $2$-NSP, we know that
\[\|(\vx+\ve)_{\{1,2\}}\|_1\leq \|(\vx+\ve)_{\{3,...,n\}}\|_1,\]
which implies 
\[ 
2\leq \|\vx\|_1.
\]
On the other hand, by the definition of $\vx$, we have
\[
\|\vx\|_1\leq \sqrt n\|\vx\|_2\leq \sqrt{n}\frac{\sqrt B}{A}\sqrt{2-2\mu}.
\]
The last two equations together imply the conclusion of the theorem.
\end{proof}
\subsection{Proof of Theorem \ref{thm_equiv}}\label{sec:proof}
We divide this proof into three Lemmas, of which the last two are the main content of the proof. Moreover, our technique is not directly applicable to complex vector spaces mainly because that the additivity of $\ell_1$ norm in complex spaces is different from that of the real case.

\begin{lemma}\label{lem_comp}
Fix a dictionary $\vD\in \R^{d\times n}$, suppose the measurement matrix $\vA\in \R^{m\times d}$ satisfies $s$-$\vD$-NSP and $T$ is an index set with cardinality at most $s$. Define
$$h(\vw)=\sup_{\vtu\in\ker \vD}\frac{\|\vw_{T^c}\|_1-\|\vw_T+\vtu\|_1}{\|\vD\vw\|_2},$$
then $h(\vw)$ has positive lower bound on the set
$$W=\{\vw: \vw\in \vD^{-1}(\ker \vA\backslash\{0\}), \|\vw\|_2\leq C_1\|\vD\vw\|_2\},$$
where $C_1$ is a positive constant such that $W$ is not empty.
\end{lemma}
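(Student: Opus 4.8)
The plan is to exploit the fact that $h$ is invariant under positive scaling and then apply the extreme value theorem on a compact slice of the cone $W$. First I would verify that $h$ is homogeneous of degree $0$: replacing $\vw$ by $t\vw$ with $t>0$ and substituting $\vtu\mapsto t\vtu$ in the supremum shows $h(t\vw)=h(\vw)$. Since $W$ is a cone — both of its defining conditions, $\vA\vD\vw=0$ and $\|\vw\|_2\le C_1\|\vD\vw\|_2$, are scale-invariant — it therefore suffices to bound $h$ from below on the slice $W_1:=W\cap \Sb^{n-1}$, and any positive lower bound there transfers immediately to all of $W$ by normalization.

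Second I would show that $W_1$ is compact and nonempty. On the unit sphere the constraint $\|\vw\|_2\le C_1\|\vD\vw\|_2$ reads $\|\vD\vw\|_2\ge 1/C_1$, so
$$W_1=\{\vw\in\Sb^{n-1}:\ \vA\vD\vw=0,\ \|\vD\vw\|_2\ge 1/C_1\}.$$
This is a closed subset of the compact sphere, being the intersection of $\Sb^{n-1}$ with the closed set $\ker(\vA\vD)$ and the closed set $\{\vw:\|\vD\vw\|_2\ge 1/C_1\}$, hence compact; it is nonempty because $W\neq\emptyset$ by hypothesis and every nonzero element of $W$ normalizes into $W_1$. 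This is exactly the point of the auxiliary constraint defining $W$: without the bound $\|\vw\|_2\le C_1\|\vD\vw\|_2$, the set $\vD^{-1}(\ker\vA\backslash\{0\})\cap\Sb^{n-1}$ fails to be closed, since a sequence on the sphere may approach $\ker\vD$ with $\|\vD\vw\|_2\to 0$; the constraint bounds $\|\vD\vw\|_2$ away from $0$ and restores the compactness lost in the naive approach.

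Third I would establish that $h$ is continuous on $W_1$. The key observation is that the supremum collapses to a distance function: writing $p(\vw):=\inf_{\vtu\in\ker\vD}\|\vw_T+\vtu\|_1=\dist_{\ell_1}(\vw_T,\ker\vD)$ (using that $\ker\vD$ is a symmetric subspace), we have $h(\vw)=(\|\vw_{T^c}\|_1-p(\vw))/\|\vD\vw\|_2$. The map $\vw\mapsto\vw_T$ is linear, and distance to the fixed closed subspace $\ker\vD$ is $1$-Lipschitz, so $p$ is continuous; together with continuity of $\vw\mapsto\|\vw_{T^c}\|_1$ and the fact that $\|\vD\vw\|_2\ge 1/C_1>0$ throughout $W_1$, the quotient $h$ is continuous there. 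In particular $h$ is finite, since $p\ge 0$ forces $h(\vw)\le\|\vw_{T^c}\|_1/\|\vD\vw\|_2<\infty$.

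Finally, $\vD$-NSP guarantees that for each $\vw\in W_1\subset\vD^{-1}(\ker\vA\backslash\{0\})$ there is $\vtu\in\ker\vD$ with $\|\vw_T+\vtu\|_1<\|\vw_{T^c}\|_1$, so $h(\vw)>0$ pointwise. By the extreme value theorem $h$ attains a minimum on the compact nonempty set $W_1$, and this minimum is positive since it equals $h$ evaluated at some point of $W_1$; calling it $c_0>0$ and invoking scale invariance yields $h\ge c_0$ on all of $W$. The only genuinely delicate step is the compactness–continuity interplay in the middle two paragraphs: recognizing that the constraint $\|\vw\|_2\le C_1\|\vD\vw\|_2$ is precisely what keeps $\|\vD\vw\|_2$ bounded below — making $W_1$ closed and the supremum-defining distance quotient continuous. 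Once this is in place, the extreme value theorem finishes the argument routinely.
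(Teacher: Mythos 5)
Your proposal is correct and follows essentially the same route as the paper's own proof: scale-invariance of $h$, compactness of the slice $W\cap\Sb^{n-1}$ (where the constraint $\|\vw\|_2\le C_1\|\vD\vw\|_2$ bounds $\|\vD\vw\|_2$ away from zero and restores closedness), continuity of $h$ via the distance-to-$\ker\vD$ reformulation of the supremum, pointwise positivity from $\vD$-NSP, and the extreme value theorem. Your write-up is in fact slightly cleaner, since exact degree-$0$ homogeneity makes the paper's approximate-supremum step with the factor $C_2/2$ unnecessary.
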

\proof First, it is easy to see that $h(\vw)>0$ on $W$ because $\vA$ has $\vD$-NSP. $h$ is also continuous on $W$ since  $\sup_{\vtu\in\ker \vD}-\|\vw_T+\vtu\|_1=-\inf_{\vtu\in\ker \vD}\|\vw_T+\vtu\|_1$, which is continuous.

Note that $W\cap  {\Sb^{n-1}}=\ker (\vA\vD)\cap \Sb^{n-1}\cap \{\|\vw\|_2\leq C_1\|\vD\vw\|_2\}$ is a non-empty compact set, therefore $h(\vw)\geq C_2$ on  $W\cap \Sb^{n-1}$ for some positive constant $C_2$.

For any $\vw\in W$, since $h(\vw/\|\vw\|)\geq C_3$, there exists $\vtu\in\ker \vD$ such that
$$\frac{\|\vw_{T^c}/\|\vw\|_2\|_1-\|\vw_T/\|\vw\|_2+\vtu\|_1}{\|\vD\vw\|_2/\|\vw\|_2}>C_2/2,$$
which implies
$$\frac{\|\vw_{T^c}\|_1-\|\vw_T+\vtu\cdot\|\vw\|_2/\|_1}{\|\vD\vw\|_2}>C_2/2,$$
hence $h(\vw)> C_2/2.$ \qed

\vspace{0.1in}

Fix a support $T$, a vector $\vv\in \vD^{-1}(\ker \vA\backslash\{0\})$,  define
$$g_\vv(\vu,t)=\sup_{\vtu\in\ker \vD}\left(\|(t\vv+\vu)_{T^c}\|_1-\|(t\vv+\vu)_T+\vtu\|_1\right)$$
and
 $$f_\vv(\vu,t)=g_\vv(\vu,t)/t$$ for $\vu\in \ker \vD$ and $t\ge0$.
Note that the fact $\vA$ satisfies $\vD$-NSP implies that $g_\vv(\vu,t)>0$ and $f_\vv(\vu,t)>0$ for any $(\vu,t)$ in the domain.

\begin{lemma}\label{lem1}
For any fixed $\vv\in \vD^{-1}(\ker \vA\backslash\{0\})$ and index set $T$, suppose  $\{\vu_i\}_{i=1}^{\infty}$, $\{t_i\}_{i=1}^{\infty}$, $\{\vb_i\}_{i=1}^{\infty}$, and $ \{\vw_j\}_{i=1}^{L}$ satisfy 

(1) $\vu_i\rightarrow \vu_0, t_i>0, t_i\rightarrow0$,
and $\lim_{i\rightarrow\infty}f_\vv(\vu_i, t_i)=0$,

(2)  $\vb_i=\vu_i-\vu_0=\sum_{j=1}^L\beta_i(j)\vw_j$ with $\beta_i(j)\ge0$ , and $\vb_i, \vw_i$ are in the same orthant of $\R^n$.

(3)  $\sgn(\vw_j(k)+\vu_0(k))=\sgn(\vu_0(k))$ for all $k\in \supp(\vu_o)$ and $1\le j\le L$,

then there must be a coordinate $j_0\in \{1, 2, \dots, L\}$ such that $\frac{\beta_i(j_0)}{t_i}\not\rightarrow\infty$.
\end{lemma}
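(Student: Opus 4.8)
The plan is to argue by contradiction: assume that $\beta_i(j)/t_i\to\infty$ for every $j\in\{1,\dots,L\}$, and show this is incompatible with $f_\vv(\vu_i,t_i)\to0$. First I would record two reductions. Since $t_i\vv+\vu_i\to\vu_0$ and the map $(\vu,t)\mapsto g_\vv(\vu,t)=\|(t\vv+\vu)_{T^c}\|_1-\inf_{\vtu\in\ker\vD}\|(t\vv+\vu)_T+\vtu\|_1$ is continuous (the infimum of an $\ell_1$ norm over the subspace $\ker\vD$ is continuous, as in Lemma \ref{lem_comp}), we have $g_\vv(\vu_i,t_i)\to g_\vv(\vu_0,0)$. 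Because $f_\vv>0$ on its domain while $f_\vv(\vu_i,t_i)=g_\vv(\vu_i,t_i)/t_i\to0$ with $t_i\to0^+$, necessarily $g_\vv(\vu_0,0)=0$, i.e. $\|(\vu_0)_{T^c}\|_1=\inf_{\vtu\in\ker\vD}\|(\vu_0)_T+\vtu\|_1$. Thus $\vu_0$ is a boundary point at which the $\ker\vD$ null-space gap vanishes, and the whole question becomes how fast this gap grows as we move from $\vu_0$ into the cone generated by the $\vw_j$.

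The core estimate I would aim for is a bound of the form $g_\vv(\vu_i,t_i)\ge c\sum_{j=1}^L\beta_i(j)-Ct_i$ for constants $c>0$ and $C$ independent of $i$; dividing by $t_i$ then gives $f_\vv(\vu_i,t_i)\ge c\sum_j\beta_i(j)/t_i-C\to+\infty$, the desired contradiction. To get it I would first use that $t\mapsto g_\vv(\vu,t)$ is $\|\vv\|_1$-Lipschitz, reducing matters to the $t=0$ gap function $\phi(\vu):=\|\vu_{T^c}\|_1-\inf_{\vtu}\|\vu_T+\vtu\|_1$ evaluated at $\vu_i=\vu_0+\vb_i$. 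Next, writing $T^c=S_1\sqcup S_2$ with $S_1=\supp(\vu_0)\cap T^c$ and $S_2=T^c\setminus\supp(\vu_0)$, I would exploit the same-orthant hypothesis (2): since $\vb_i=\sum_j\beta_i(j)\vw_j$ and all $\vw_j$ lie in one orthant, the $\ell_1$ norm is additive across the generators on $S_2$ and on $T$, so it suffices to establish the per-generator growth and then sum. For each $\vw_j$ I would choose the correcting vector $\vtu\in\ker\vD$ adapted to $\vb_i$ so as to handle the $T$-component of $\vb_i$, and then use the sign hypothesis (3) — that adding $\vw_j$ does not flip the sign of $\vu_0$ on $\supp(\vu_0)$ — to show that the increase of $\|\cdot_{T^c}\|_1$ on $S_1\cup S_2$ outweighs the increase of the $T$-infimum, yielding a strictly positive coefficient for $\beta_i(j)$.

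The hard part is precisely this last per-generator inequality, because the infimum over $\vtu\in\ker\vD$ couples the $T$ and $T^c$ coordinates: a correction that cancels $(\vb_i)_T$ necessarily perturbs the $T^c$ coordinates, so the naive choices $\vtu=\vtu^*_0$ (optimal for $\vu_0$) or $\vtu=-\vb_i$ each produce a coefficient that can be negative. The delicate point is to track the \emph{optimal} $\vtu$ as a function of $\beta_i(j)$ and to verify, using (3) together with $\phi(\vu_0)=0$, that the net first-order coefficient of each $\beta_i(j)$ is nonnegative and that the aggregate $\sum_j\beta_i(j)$ enters with a positive weight. Here the sign bookkeeping across the three index blocks $T$, $S_1$, $S_2$ is the crux; once the coefficient is shown to be bounded below by a positive constant (uniformly in $i$, which is legitimate since $\vb_i\to0$ keeps all signs on $\supp(\vu_0)$ stable for large $i$), the contradiction $f_\vv(\vu_i,t_i)\to+\infty$ follows immediately.
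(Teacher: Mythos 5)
Your overall setup (argue by contradiction, reduce to the $t=0$ gap function $\phi(\vu)=\|\vu_{T^c}\|_1-\inf_{\vtu\in\ker\vD}\|\vu_T+\vtu\|_1$, and note that $\phi(\vu_0)=0$) is reasonable, but the core estimate you aim for, $g_\vv(\vu_i,t_i)\ge c\sum_j\beta_i(j)-Ct_i$ with $c>0$, is not merely ``delicate'' --- it is false under the hypotheses of the lemma. A positive constant $c$ would have to come from strictly positive per-generator growth of $\phi$ along each $\vw_j$, which in particular forces $\phi(\vu_0+\vw_j)>0$. But $\vu_0+\vw_j\in\ker\vD$, and \nspD\ gives no strict inequality for elements of $\ker\vD$: it only concerns $\vD^{-1}(\ker\vA\backslash\{0\})$, which is disjoint from $\ker\vD$, and taking $\vtu=-(\vu_0+\vw_j)$ only yields $\phi\ge0$. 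Strict positivity of $\phi$ on $\ker\vD$ is exactly the \emph{extra} hypothesis of Theorem \ref{thm_gap} that upgrades \nspD\ to $\vA\vD$ having NSP, and the paper exhibits admissible frames where it fails: for $\vD=[\vB,\vv]$ with $\vv$ a repeated column, the kernel vector $\va=[1,0,\dots,0,-1]^T$ has $\phi(\va)=0$ for any $T\supseteq\{1,n+1\}$. So your coefficient can be $0$, your bound degenerates to $g_\vv(\vu_i,t_i)\ge -Ct_i$, and no contradiction with $f_\vv(\vu_i,t_i)\to0$ follows.

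The paper's proof avoids asking the $\ker\vD$ directions to contribute positively at all. It fixes a large index $K$, writes $t_i\vv+\vu_i$ as a sign-consistent positive combination of the pieces $\vw_j+\vu_0$, $t_K\vv+\vb_K+\vu_0$, and $\vu_0$ (with coefficients $c_i(j)=\beta_i(j)-\frac{t_i}{t_K}\beta_K(j)>0$, $\frac{t_i}{t_K}$, and the remainder), and uses additivity of the $\ell_1$ norm on $T^c$ --- this is where hypotheses (2) and (3) enter --- together with the triangle inequality on $T$ to obtain $g_\vv(\vu_i,t_i)\ge\sum_jc_i(j)\,g_\vv(\vw_j+\vu_0,0)+\frac{t_i}{t_K}g_\vv(\vu_K,t_K)+\bigl(1-\sum_jc_i(j)-\frac{t_i}{t_K}\bigr)g_\vv(\vu_0,0)$. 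The kernel terms only need to be nonnegative; the strictly positive contribution comes from the single term $\frac{t_i}{t_K}g_\vv(\vu_K,t_K)$, positive because $t_K>0$ places $t_K\vv+\vu_K$ in $\vD^{-1}(\ker\vA\backslash\{0\})$ where \nspD\ applies. Dividing by $t_i$ gives $f_\vv(\vu_i,t_i)\ge f_\vv(\vu_K,t_K)>0$, a fixed positive lower bound, which already contradicts $f_\vv(\vu_i,t_i)\to0$; one does not need the limit to be $+\infty$. To rescue your approach you would need to replace the sought linear growth in $\sum_j\beta_i(j)$ by this transfer of positivity from a fixed point with $t_K>0$.
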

\proof Suppose by contradiction that $\frac{\beta_i(j)}{t_i}\rightarrow\infty$ for every coordinate $j$.

 Choose $K$ big enough whose value will be specified later. Set $c_i(j)=\beta_i(j)-\frac{t_i}{t_K}\beta_K(j)$,  so $\vb_i-\frac{t_i}{t_K}\vb_K= \sum c_i(j)\vw_j$. By our assumption,  $c_i(j)>0$ when $i$ is big enough.

Given any $\epsilon>0$, by definition of supremum, there exist $\vtu_1, \vtu_2$, and $\vtu_3\in\ker D$ such that
\begin{align}\nonumber
&\sum_j c_i(j)g_\vv(\vw_j+\vu_0,0)+\frac{t_i}{t_K}g_\vv(\vu_K,t_K)+(1-\sum_j c_i(j)-\frac{t_i}{t_K})g_{\vv}(\vu_0,0)\\\nonumber
\leq&\sum_j c_i(j)\left[\|(\vw_j+\vu_0)_{T^c}\|_1-\|(\vw_j+\vu_0)_{T}+\vtu_1\|_1\right]+\epsilon\\\nonumber
&+\frac{t_i}{t_K}\left[\|(t_K\vv+\vb_K+\vu_0)_{T^c}\|_1-\|(t_K\vv+\vb_K+\vu_0)_{T}+\vtu_2\|_1\right]+\epsilon\\\nonumber
&+ (1-\sum_j c_i(j)-\frac{t_i}{t_K})[\|(\vu_0)_{T^c}\|_1-\|(\vu_0)_{T}+\vtu_3\|_1]+\epsilon\\\label{equ_sgn}
=& \|\left[\sum_j c_i(j)(\vw_j+\vu_0)+\frac{t_i}{t_K}(t_K\vv+\vb_K+\vu_0)+ (1-\sum_j c_i(j)-\frac{t_i}{t_K})\vu_0\right]_{T^c}\|_1+3\epsilon\\\nonumber
&-\sum_j c_i(j)\|(\vw_j+\vu_0)_{T}+\vtu_1\|_1-\frac{t_i}{t_K}\|(t_K\vv+\vb_K+\vu_0)_{T}+\vtu_2\|_1-\|(\vu_0)_{T}+ (1-\sum_j c_i(j)\\\nonumber
&-\frac{t_i}{t_K})\vtu_3\|_1\\\label{equ_epsilon}
\leq &g_\vv(\vu_i,t_i)+3\epsilon,
\end{align}

\eqref{equ_epsilon} is due to the triangle inequality, and \eqref{equ_sgn} will be justified later.
Now let $\epsilon\rightarrow0$ in \eqref{equ_epsilon}, we get
$$g_\vv(\vu_i, t_i)\geq \frac{t_i}{t_K}g_\vv(\vu_K,t_K)\Rightarrow f_\vv(\vu_i,t_i)\geq f_\vv(\vu_K, t_K),$$
which contradicts to the first assumption.

The rest of this proof is to justify \eqref{equ_sgn}. Due to the fact that  $c_i(j)>0, \frac{t_i}{t_K}>0$, and $1-\sum c_i(j)-\frac{t_i}{t_K}>0$ (if $i$ is big enough), a sufficient condition for \eqref{equ_sgn} to hold is that for each $k\in T^c$, the signs of $\{\vw_j(k)+\vu_0(k)\}_{j=1}^L, t_K\vv(k)+\vb_K(k)+\vu_0(k), \text{ and } \vu_0(k)$ are all the same. This indeed holds because we can choose $K$ such that
$$\frac{\beta_K(j)}{t_K}>\frac{|\vv(k)|}{\max_j|\vw_j(k)|}, \text{ for all index } k\in T^c.$$

With such choice of $K$, we get $|\vv(k)|<\sum_{j=1}^m|\vw_j(k)|\frac{\beta_K(j)}{t_K}=|\sum_{j=1}^m\vw_j(k)\frac{\beta_K(j)}{t_K}|$ since all $\vw_j$'s are in the same orthant.  Hence
$$\sgn(t_K\vv(k)+\sum_{j=1}^m\beta_K(j)\vw_j(k))=\sgn(\sum_{j=1}^m\beta_K(j)\vw_j(k))=\sgn(\vw_j(k)).$$

If $\vu_0(k)=0$, then $\sgn(\vw_j(k)+\vu_0(k))=\sgn(\vw_j(k))=\sgn(t_K\vv(k)+\vb_K(k)+\vu_0(k))$.

If $\vu_0(k)\neq0$, then $\sgn(\vw_j(k)+\vu_0(k))=\sgn(\vu_0(k))$ by the third assumption. Moreover, $\sgn(t_K\vv(k)+\vb_K(k)+\vu_0(k))=\sgn(\vu_0(k))$  when $K$ is big enough since $t_i\rightarrow0, \vb_i\rightarrow0$.
\qed
\begin{lemma}\label{lem_imp}
For any fixed $\vv\in \vD^{-1}(\ker \vA\backslash\{0\})$ and index set $T$, we have $$\inf_{\vu\in\ker \vD, t>0}f_\vv(\vu,t)>0.$$
\end{lemma}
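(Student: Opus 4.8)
The plan is to argue by contradiction: assume $\inf_{\vu\in\ker\vD,\,t>0}f_\vv(\vu,t)=0$ and fix a minimizing sequence $(\vu_i,t_i)$ with $f_\vv(\vu_i,t_i)\to 0$. The first structural fact I would exploit is that $f_\vv$ is homogeneous of degree $0$: since substituting $\vtu\mapsto\lambda\vtu$ in the supremum gives $g_\vv(\lambda\vu,\lambda t)=\lambda g_\vv(\vu,t)$ for $\lambda>0$, we get $f_\vv(\lambda\vu,\lambda t)=f_\vv(\vu,t)$. This lets me normalize $\|\vu_i\|_2^2+t_i^2=1$ and pass to a convergent subsequence $(\vu_i,t_i)\to(\vu_0,t_0)$ on the unit sphere, with $\vu_0\in\ker\vD$ because $\ker\vD$ is closed. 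The map $(\vu,t)\mapsto\inf_{\vtu\in\ker\vD}\|(t\vv+\vu)_T+\vtu\|_1$ is $1$-Lipschitz in its argument, so $g_\vv$ is continuous and $f_\vv$ is continuous wherever $t>0$. Hence if $t_0>0$ we would get $f_\vv(\vu_0,t_0)=0$, contradicting the strict positivity of $f_\vv$ guaranteed by \nspD\ (note $\vD(t_0\vv+\vu_0)=t_0\vD\vv\neq0$, so $t_0\vv+\vu_0\in\vD^{-1}(\ker\vA\backslash\{0\})$). The only surviving scenario is therefore $t_i\to0$ with $\vu_0\neq0$, which is exactly the regime Lemma~\ref{lem1} is built to handle.

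In this hard regime I would verify the hypotheses of Lemma~\ref{lem1}. After a further subsequence, all differences $\vb_i=\vu_i-\vu_0\in\ker\vD$ lie in a single fixed closed orthant of $\R^n$ (there are finitely many). The intersection of $\ker\vD$ with that orthant is a pointed polyhedral cone, hence finitely generated; I take its extreme rays $\vw_1,\dots,\vw_L$, rescaled so that each has norm below $\min_{k\in\supp\vu_0}|\vu_0(k)|$, and write $\vb_i=\sum_j\beta_i(j)\vw_j$ with $\beta_i(j)\ge0$. The common orthant secures hypothesis (2), while the smallness of the generators secures the sign condition (3), since adding a short $\vw_j$ to $\vu_0$ cannot flip a sign on $\supp\vu_0$. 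Lemma~\ref{lem1} then produces a generator $\vw_{j_0}$ with $\beta_i(j_0)/t_i\not\to\infty$, so along a subsequence $\beta_i(j_0)/t_i\to M<\infty$.

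To convert this into a contradiction I would run a descent on the number $L$ of active generators. The key algebraic move is that a bounded-ratio direction can be absorbed into the homogeneous variable $\vv$: writing $t_i\vv+\vu_i=t_i\vv_i'+\vu_0+\sum_{j\neq j_0}\beta_i(j)\vw_j$ with $\vv_i'=\vv+(\beta_i(j_0)/t_i)\vw_{j_0}\to\vv+M\vw_{j_0}=:\vv'$, and observing that $\vD\vv'=\vD\vv\neq0$ since $\vw_{j_0}\in\ker\vD$, I reduce to a minimizing sequence for a genuine new instance with base vector $\vv'$ and one fewer generator. Iterating, the process must terminate in one of two ways: either all remaining ratios tend to $\infty$, which Lemma~\ref{lem1} has just excluded for the reduced instance, or every generator has been absorbed, leaving $\vu_i\equiv\vu_0$ with $t_i\to0$; in this base case the orthant sign structure makes $g_\vv(\vu_0,t_i)$ affine in $t_i$ to leading order, so $f_\vv$ converges to a directional quantity that is strictly positive by \nspD, again a contradiction.

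I expect the main obstacle to be precisely this descent. One must track that the absorbed base vector $\vv_i'$ drifts with $i$ and only converges to $\vv'$, so the reduced sequence is a perturbed rather than an exact instance of the lemma; one must also pin down the base case cleanly and check that the extreme-ray decomposition interacts correctly with the inner supremum over $\ker\vD$ inside $g_\vv$. Underlying the whole argument is the additivity of the $\ell_1$ norm across coordinates of a common sign, which is exactly what the passage to a single orthant supplies. This is the step with no complex analogue, and it is why both this lemma and Theorem~\ref{thm_equiv} are restricted to the real case.
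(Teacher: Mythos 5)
Your proposal is correct and follows essentially the same route as the paper's proof: contradiction via a normalized minimizing sequence, reduction to the regime $t_i\to 0$, decomposition of $\vu_i-\vu_0$ along suitably rescaled extreme rays of $\ker\vD$ intersected with a fixed orthant, and absorption of the bounded-ratio generators into the base vector $\vv$ (using $\vD\vw_{j}=0$ and the $o(t_i)$ triangle-inequality estimate) so that Lemma \ref{lem1} delivers the contradiction. The only deviations are minor: you dispose of the case $t_i\not\to 0$ by continuity of $f_\vv$ where the paper invokes Lemma \ref{lem_comp}, and you absorb generators one at a time with an explicit base case (which does work, since $t\mapsto g_{\vv'}(\vu_0,t)$ is concave near $0$ with $g_{\vv'}(\vu_0,0)\ge 0$), whereas the paper absorbs the whole index set $J$ of finite-ratio generators in a single step before applying Lemma \ref{lem1} once.
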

\proof We first argue that it suffices to prove $\inf_{\|\vu\|=1, \vu\in\ker \vD, t>0}f_\vv(\vu,t)>0$. 

When $\vu\neq0$, $f_\vv(\vu,t)=f_\vv\left(\frac{\vu}{\|\vu\|}, \frac{t}{\|\vu\|}\right)$ and when $\vu=0$, $f_\vv(0,t)=f_\vv(0,1)>0$, so
$$\inf_{\vu\in\ker \vD, t>0}f_\vv(\vu,t)=\min\{\inf_{\|\vu\|=1, \vu\in\ker \vD, t>0}f_\vv(\vu,t), f_\vv(0,1)\}.$$

Suppose by contradiction that 
\begin{equation}\label{equ:con}
\inf_{\|\vu\|=1, \vu\in\ker \vD, t>0}f_\vv(\vu,t)=0,
\end{equation} then there exists a sequence $(\vu_i, t_i)$ with $\|\vu_i\|=1$, such that $\lim_{i\rightarrow\infty}f_\vv(\vu_i, t_i)=0$. 

We will eventually construct specific sequences that satisfy the three assumptions of Lemma \ref{lem1}, then arrive at a contradiction.

We first show that $\{t_i\}$ has a subsequence converging to 0.
Otherwise, we have $t_i\geq t_0>0$ for some $t_0$, which results that
 $$t_i\vv+\vu_i\in W=\{\vw: \vw\in \vD^{-1}(\ker \vA\backslash\{0\}), \|\vw\|_2\leq C_1\|\vD\vw\|_2\},$$  for some constant $C_1$ (depending on $\vv$ which is fixed in this lemma). Indeed,
$$\|t_i\vv+\vu_i\|\leq\|t_i\vv\|+1\le\left\{\begin{array}{lc}\|\vv\|+1\leq\frac{\|\vv\|+1}{t_0\|\vD\vv\|}\|\vD(t_i\vv+\vu_i)\|,&t_i\leq 1\\ t_i(\|\vv\|+1)=\frac{\|\vv\|+1}{\|\vD\vv\|}\|\vD(t_i\vv+\vu_i)\|, &t_i>1\end{array}\right..$$
Applying Lemma \ref{lem_comp}, we get $f_\vv(\vu_i,t_i)=h(t_i\vv+\vu_i)\|\vD\vv\|\geq C\|\vD\vv\|$ which  contradicts to \eqref{equ:con}. Without loss of generality, we assume the original sequence $t_i\rightarrow0$.

We can also assume without loss of generality that $(\vu_i,t_i)\rightarrow (\vu_0,0)$. There must be infinitely many of $\{\vu_i-\vu_0\}$ falling into some (closed)  orthant of $\R^n$, say $O$. Again for convenience of notation,  we assume all terms of $\vb_i:=\vu_i-\vu_0$ belong to $O$. The benefit of staying in the same orthant is that for a fixed coordinate $k$, $\sgn(b_i(k))$  is the same for any  $i$. This is to satisfy the second assumption of Lemma \ref{lem1}.

Let $\{\vw_j\}_{j=1}^{L}$ be the extremal rays of the polyhedral cone $\ker \vD\cap O$, i.e., any vector in $\ker \vD\cap O$ can be expressed as a nonnegative linear combination of  $\{\vw_j\}_{j=1}^m$. We divide each $\vw_j$ by a big enough constant to make the components of $\vw_j(k)$ small enough such that $\sgn(\vw_j(k)+\vu_0(k))=\sgn(\vu_0(k))$ for all $k\in \supp(\vu_0)$ and $1\le j\le m$. This is to satisfy the third assumptions of Lemma \ref{lem1}.

We write $\vb_i=\vu_i-\vu_0=\sum_{j=1}^m\beta_i(j)\vw_j$, where $\beta_i(j)\geq0$ and convergent to 0 as $i\rightarrow\infty$. For a fixed $1\le j\le m$, $\left\{\frac{\beta_i(j)}{t_i}\right\}_{i=1}^{\infty}$ must have a subsequence converging to  a finite constant, or $\infty$. Again we assume without loss of generality that the original sequence $\frac{\beta_i(j)}{t_i}$ converges (to a constant or infinity) for every $j$.

If $\frac{\beta_i(j_0)}{t_i}\rightarrow a_{j_0}(\neq\infty)$ for some $j_0$, then with the triangle inequality,

\begin{align*}
g_{\vv}(\vu_i, t_i)&=\sup_{\vtu\in\ker D}\|(t_i\vv+a_{j_0}t_i\vw_{j_0}+\sum_{j\neq j_0}\beta_i(j)\vw_j+\vu_0+(\beta_i(j_0)-a_{j_0}t_i)\vw_{j_0})_{T^c}\|_1\\
&-\|(t_i\vv+a_{j_0}t_i\vw_{j_0}+\sum_{j\neq j_0}\beta_i(j)\vw_j+\vu_0+(\beta_i(j_0)-a_{j_0}t_i)\vw_{j_0})_{T}+\vtu\|_1\\
&\leq o(t_i)+g_{\vv+a_{j_0}\vw_{j_0}}(\vu_i-\beta_i(j_0)\vw_{j_0},t_i)\leq o(t_i)+g_{\vv}(\vu_i, t_i),
\end{align*} which leads to
\begin{equation*}
\lim_{i\rightarrow\infty}f_{\vv+a_{j_0}\vw_{j_0}}(\vu_i-\beta_i(j_0)\vw_{j_0},t_i)=\lim_{i\rightarrow\infty}f_{\vv}(\vu_i, t_i)=0.
\end{equation*}

In general, take $J=\{j: \frac{\beta_i(j)}{t_i}\rightarrow a_j(\neq\infty)\}$, and we get
$$\lim_{i\rightarrow0}f_{\vv'}(\vu_i',t_i)=0,$$ 
where $\vv'=\vv+\sum_{j\in J}a_j\vw_j, \vu_i'=\vu_i-\sum_{j\in J}\beta_i(j)\vw_j$ by the same argument as above.

Notice that the set of sequences $\{\vu'_i\}_{i=1}^{\infty}$, $\{t_i\}_{i=1}^{\infty}$, $\{\vb'_i=\vu'_i-\vu_0\}_{i=1}^{\infty}$, and $ \{\vw_j\}_{i=1}^{L}$ satisfy the three assumptions of Lemma \ref{lem1}. However, $\vb'_i=\sum_{j\not\in J}\beta_i(j)\vw_j$ with $\frac{\beta_i(j)}{t_i}\rightarrow\infty$ for all $j\not\in J$. This contradicts Lemma \ref{lem1}.
\qed

\vspace{0.2in}

\vspace{0.2in}

\noindent\emph{ Proof of Theorem \ref{thm_equiv}.} Suppose $\vA$ satisfies $s$-$\vD$-NSP, we need to show the function
$$F(\vw)= \sup_{\vtu\in\ker \vD}\frac{\|\vw_{T^c}\|_1-\|\vw_T+\vtu\|_1}{\|\vD\vw\|_2}$$
has a positive lower bound on $\vD^{-1}(\ker\vA\backslash\{0\})$ for every $|T|\leq s$.

Decompose $\vw$ as $\vw=t\vv+\vu$ where $\vu=\rm{P}_{\ker \vD}\vw, t\vv=P_{(\ker \vD)^{\perp}}\vw$, with $\|\vv\|=1$, and $t>0$. Therefore
\begin{equation}
\inf_{\vw\in \vD^{-1}(\ker\vA\backslash\{0\})}F(\vw)=\inf_{\vv\in \ker \vD^{\perp}, \|\vv\|=1}\inf_{\vu\in\ker \vD,t>0}f_\vv(\vu,t)/\|\vD\vv\|.
\end{equation}

By Lemma \ref{lem_imp}, the function $\inf_{\vu\in\ker \vD,t>0}f_\vv(\vu,t)$ is always positive. Since the set $(\ker \vD)^{\perp}\cap \Sb^{n-1}$ is compact, it is sufficient to prove that the function $\inf_{\vu\in\ker \vD,t>0}f_\vv(\vu,t)$ is lower-semi continuous with respect to $\vv$.
\begin{align*}
f_{\vv+\ve}(\vu,t)&=\sup_{\vtu\in\ker \vD}\frac{\|(t\vv+t\ve+\vu)_{T^c}\|_1-\|(t\vv+t\ve+\vu)_T+\vtu\|_1}{t}\\
&\geq \sup_{\vtu\in\ker \vD}\frac{\|(t\vv+\vu)_{T^c}\|_1-\|(t\vv+\vu)_T+\vtu\|_1-\|t\ve\|_1}{t}
\end{align*}

Take the infimum over $\vu,t$ of both sides, we get
$$\inf_{\vu\in\ker \vD,t>0}f_{\vv+\ve}(\vu,t)\geq \inf_{\vu\in\ker \vD,t>0}f_\vv(\vu,t)-\|\ve\|_1,$$
which shows this function is lower-semi continuous.\qed

\section*{Acknowledgments}

This research has been supported by Defense Threat Reduction Agency grant HDTRA1-13-1-0015. The authors also thank Jameson Cahill for helpful conversations related to the material. 

\bibliographystyle{abbrv}
\bibliography{citations_13_08_21.bib}

\begin{thebibliography}{10}

\bibitem{ACP_Samp}
A.~Aldroubi, X.~Chen, and A.~Powell.
\newblock Stability and robustness of $\ell^q$ minimization using null space
  property.
\newblock {\em Proceedings of SampTA 2011}, May 2011.

\bibitem{ACP11}
A.~Aldroubi, X.~Chen, and A.~M. Powell.
\newblock Perturbations of measurement matrices and dictionaries in compressed
  sensing.
\newblock {\em Appl. Comput. Harmon. Anal.}, 33(2):282--291, 2012.

\bibitem{BCM12}
B.~Alexeev, J.~Cahill, and D.~G. Mixon.
\newblock Full {S}park {F}rames.
\newblock {\em J. Fourier Anal. Appl.}, 18(6):1167--1194, 2012.

\bibitem{Dic_Candes}
E.~Candes, Y.~C. Eldar, D.~Needell, and P.~Randall.
\newblock Compressed sensing with coherent and redundant dictionaries.
\newblock {\em Applied and Computational Harmonic Analysis}, 31(1):59--73,
  2010.

\bibitem{Stab}
E.~Candes, J.~Romberg, and T.~Tao.
\newblock Stable signal recovery from incomplete and inaccurate measurements.
\newblock {\em Comm. Pure Appl. Math.}, 59:1207--1223, 2006.

\bibitem{Decode}
E.~Candes and T.~Tao.
\newblock Decoding by linear programming.
\newblock {\em IEEE Transactions on Information Theory}, 51(12):4203--4215,
  December 2005.

\bibitem{Near}
E.~Candes and T.~Tao.
\newblock Near optimal signal recovery from random projections and universal
  encoding strategies.
\newblock {\em IEEE Transactions on Information Theory}, 52:5406--5425, 2006.

\bibitem{frame}
E.~P.~G. Casazza and G.~Kutyniok.
\newblock {\em Finite Frames: Theory and Applications}.
\newblock Birkh\"auser, Boston, 2012.

\bibitem{chen}
X.~Chen.
\newblock Stability of compressed sensing for dictionaries and almost sure
  convergence rate for the kaczmarz algorithm.
\newblock {\em Dissertation}, 2012.

\bibitem{kterm}
A.~Cohen, W.~Dahmen, and R.~DeVore.
\newblock Compressed sensing and best k-term approximation.
\newblock {\em Journal of the American mathematical society}, 22:211--231,
  2009.

\bibitem{DNW}
M.~A. Davenport, D.~Needell, and M.~B. Walkin.
\newblock Signal space cosamp for sparse recovery with redundant dictionaries.
\newblock {\em arXiv:1208.0353 [cs.IT]}, 2012.

\bibitem{ds10}
B.~Dong and Z.~Shen.
\newblock {\em MRA Based wavelet frames and applications, IAS Lecture Notes
  Series, Summer Program on ``The Mathematics of Image Processing''}.
\newblock Park City Mathematics Institute, 2010.

\bibitem{Dono}
D.~Donoho and M.~Elad.
\newblock Optimally sparse representation in general (nonorthogonal)
  dictionaries via $l_1$ minimization.
\newblock {\em Proceedings of the National Academy of Science},
  100(5):2197--202, March 2003.

\bibitem{Simon}
S.~Foucart.
\newblock Notes on compressed sensing, 2009.

\bibitem{LqSim}
S.~Foucart and M.~Lai.
\newblock Sparsest solutions of underdetermined linear systems via
  $l_q$-minimization for $0<q\leq1$.
\newblock {\em Applied and Computational Harmonic Analysis}, 26(3):395--407,
  2009.

\bibitem{SparseDecom}
R.~Gribonval and M.~Nielsen.
\newblock Sparse decompositions in unions of bases.
\newblock {\em IEEE Transactions on Information Theory}, 49(12):3320--3325,
  December 2003.

\bibitem{HS09}
M.~Herman and T.~Strohmer.
\newblock High resolution radar via compressed sensing.
\newblock {\em IEEE Trans. Signal Processing}, 57(6):2275--2284, 2009.

\bibitem{Strohmer}
M.~Herman and T.~Strohmer.
\newblock General deviants: An analysis of perturbations in compressed sensing.
\newblock {\em IEEE Journal of Selected Topics in Signal Processing: Special
  Issue on Compressive Sensing}, 4(2):342--349, 2010.

\bibitem{cd12}
W.~C. J.~Dobrosotskaya.
\newblock Shearlet ginzburg-landau energy, its anisotropic analogues,
  associated operators and applications.
\newblock {\em submitted}, 2012.

\bibitem{k13}
G.~Kutyniok.
\newblock Clustered sparsity and separation of cartoon and texture.
\newblock {\em SIAM Journal on Imaging Sciences}, 6(2):848--874, 2013.

\bibitem{LML12}
S.~Li, T.~Mi, and Y.~Liu.
\newblock Performance analysis of $\ell_1$-synthesis with coherent frames.
\newblock {\em http://arxiv.org/abs/1202.2223}, 2012.

\bibitem{NDEG}
S.~Nam, M.~Davies, M.~Elad, and R.~Gribonval.
\newblock The cosparse analysis model and algorithms.
\newblock {\em Applied and Computational Harmonic Analysis}, 34(1):30--56,
  2013.

\bibitem{PRT12}
G.~E. Pfander, H.~Rauhut, and J.~A. Tropp.
\newblock The restricted isometry property for time--frequency structured
  random matrices.
\newblock {\em Probability Theory and Related Fields}, pages 1--31, 2012.

\bibitem{DMP1}
H.~Rauhut, K.~Schnass, and P.~Vandergheynst.
\newblock Compressed sensing and redundant dictionaries.
\newblock {\em IEEE Transactions on Information Theory}, 54(5):2210--2219,
  2008.

\bibitem{RV08}
M.~Rudelson and R.~Vershynin.
\newblock On sparse reconstruction from fourier and gaussian measurements.
\newblock {\em Communications on Pure and Applied Mathematics}, 61:1025--1045,
  2008.

\bibitem{SW13}
T.~Strohmer and H.~Wang.
\newblock Accurate imaging of moving targets via random sensor arrays and
  kerdock codes.
\newblock {\em To appear in: Inverse Problems}.

\bibitem{Sun11}
Q.~Sun.
\newblock Sparse approximation property and stable recovery of sparse signals
  from noisy measurements.
\newblock {\em IEEE Trans. Signal Process.}, 59(10):5086--5090, 2011.

\end{thebibliography}

\end{document}